\setlist[itemize]{leftmargin=*}
\newcommand{\argmax}{\operatornamewithlimits{argmax}}
\newtheorem{theorem}{Theorem}
\newtheorem{lemma}{Lemma}
\newcommand{\lnorm}[1]
    {\ensuremath{\left\Vert#1\right\Vert}}
\newcommand{\myparatight}[1]{\smallskip\noindent{\bf {#1}:}~}
\newcommand\CR[1]{{#1}}
\title{MultiGuard: Provably Robust Multi-label Classification against Adversarial Examples}
\author{%
  Jinyuan Jia\thanks{Equal contribution. Wenjie Qu performed this research when he was a remote intern in Gong’s group. } \\
  University of Illinois Urbana-Champaign\\
  \texttt{jinyuan@illinois.edu} \\
   \And 
  Wenjie Qu\footnotemark[1]  \\
   Huazhong University of Science and Technology \\
   \texttt{wen\_jie\_qu@outlook.com} \\
   \AND
   Neil Zhenqiang Gong \\ 
   Duke University \\
   \texttt{neil.gong@duke.edu} \\
}
\begin{document}

\maketitle

\begin{abstract}
Multi-label classification, which predicts a set of labels for an input, has many applications.  However, multiple recent studies showed that multi-label classification is vulnerable to adversarial examples. In particular, an attacker can manipulate the labels predicted by a multi-label classifier for an input via adding carefully crafted, human-imperceptible perturbation to it. Existing provable defenses for multi-class classification achieve sub-optimal provable robustness guarantees when generalized to multi-label classification. In this work, we propose {MultiGuard}, the first provably robust defense against adversarial examples to multi-label classification. Our MultiGuard leverages randomized smoothing, which is the state-of-the-art technique to build provably robust classifiers. Specifically, given an arbitrary multi-label classifier, our MultiGuard builds a smoothed multi-label classifier via adding random noise to the input. We consider isotropic Gaussian noise in this work. Our major theoretical contribution is that we show a certain number of ground truth labels of an input are provably in the set of labels predicted by our MultiGuard when the $\ell_2$-norm of the adversarial perturbation added to the input is bounded. Moreover, we design an algorithm to compute our provable robustness guarantees. Empirically, we evaluate our MultiGuard on VOC 2007, MS-COCO, and NUS-WIDE benchmark datasets. \CR{Our code is available at: \url{https://github.com/quwenjie/MultiGuard}}

\end{abstract}

\section{Introduction}

\emph{Multi-class classification} assumes each input only has one ground truth label and thus often predicts a single label for an input. In contrast, in \emph{multi-label classification}~\cite{tsoumakas2007multi,trohidis2008multi,read2009classifier,wang2016cnn}, each input has multiple ground truth labels and thus a multi-label classifier predicts a set of labels for an input. For instance, an image could have multiple objects, attributes, or scenes.   Multi-label classification has many applications such as diseases detection~\cite{ge2018chest}, object recognition~\cite{wang2016cnn}, retail checkout recognition~\cite{george2014recognizing}, document classification~\cite{partalas2015lshtc}, etc..

However, similar to multi-class classification, multiple recent studies~\cite{zhou2020generating,yang2020characterizing,melacci2020can} showed that multi-label classification is also vulnerable to adversarial examples.  In particular, an attacker can manipulate the set of labels predicted by a multi-label classifier for an input via adding carefully crafted perturbation to it. Adversarial examples pose severe security threats to the applications of multi-label classification in security-critical domains. To mitigate adversarial examples to multi-label classification, several \emph{empirical defenses}~\cite{wu2017adversarial,babbar2018adversarial,melacci2020can} have been proposed. For instance, Melacci et al.~\cite{melacci2020can} proposed to use the domain knowledge on the relationships among the classes to improve the robustness of multi-label classification. However, these defenses have no provable robustness guarantees, and thus they are often broken by more advanced attacks. For instance, Melacci et al.~\cite{melacci2020can} showed that their proposed defense can be broken by an adaptive attack that exploits the domain knowledge used in the defense. Moreover, existing provably robust defenses~\cite{cisse2017parseval,cheng2017maximum,gehr2018ai2,wong2018provable,cohen2019certified,jia2019certified} are all for multi-class classification, which achieve sub-optimal provable robustness guarantee when extended to multi-label classification as shown by our experimental results.

\myparatight{Our work} We propose \emph{MultiGuard}, the first provably robust defense against adversarial examples for multi-label classification. MultiGuard leverages randomized smoothing~\cite{cao2017mitigating,liu2018towards,lecuyer2019certified,li2019certified,cohen2019certified}, which is the state-of-the-art technique to build provably robust classifiers. In particular, compared to other provably robust techniques, randomized smoothing has two advantages: 1) scalable to large-scale neural networks, and 2) applicable to any classifiers. 
Suppose we have an arbitrary multi-label classifier (we call it \emph{base multi-label classifier}), which predicts $k'$ labels for an input. We build a \emph{smoothed multi-label classifier} via randomizing an input. Specifically, 
given an input, we first create a \emph{randomized input} via adding random noise to it. We consider the random noise to be isotropic Gaussian in this work. Then, we use the base multi-label classifier to predict labels for the randomized input. Due to the randomness in the randomized input, the $k'$ labels predicted by the base multi-label classifier are also random. We use $p_i$ to denote the probability that the label $i$ is among the set of $k'$ labels predicted by the base multi-label classifier for the randomized input, where $i \in \{1,2,\cdots,c\}$. We call $p_i$ \emph{label probability}.  Our smoothed multi-label classifier predicts the $k$ labels with the largest label probabilities for the input. We note that $k'$ and $k$ are two different parameters.

Our main theoretical contribution is to show that, given a set of labels (e.g., the ground truth labels) for an input, at least $e$  of them  are provably in the set of $k$ labels predicted by MultiGuard for the input, when the $\ell_2$-norm of the adversarial perturbation added to the input is no larger than a threshold.  We call $e$ \emph{certified intersection size}. We aim to derive the certified intersection size for MultiGuard. However, existing randomized smoothing studies~\cite{lecuyer2019certified,cohen2019certified,jia2019certified} achieves sub-optimal provable robustness guarantees when generalized to derive our certified intersection size. The key reason is they were designed for multi-class classification instead of multi-label classification. Specifically, they can guarantee that a smoothed multi-class classifier provably predicts the same single label for an input~\cite{lecuyer2019certified,cohen2019certified} or a certain label is provably among the top-$k$ labels predicted by the smoothed multi-class classifier~\cite{jia2019certified}. In contrast, our certified intersection size characterizes the intersection between the set of ground truth labels of an input and the set of labels predicted by a smoothed multi-label classifier. In fact, previous provable robustness results~\cite{cohen2019certified,jia2019certified} are special cases of ours, e.g., our results reduce to Cohen et al.~\cite{cohen2019certified} when $k'=k=1$ and Jia et al.~\cite{jia2019certified} when $k'=1$.  

 In particular, there are two challenges in deriving the certified intersection size. The first challenge is that the base multi-label classifier predicts multiple labels for an input. The second challenge is that an input has multiple ground truth labels. To solve the first challenge, we propose a variant of Neyman-Pearson Lemma~\cite{neyman1933ix} that is applicable to multiple functions, which correspond to multiple labels predicted by the base multi-label classifier. In contrast, existing randomized smoothing studies~\cite{lecuyer2019certified,li2019certified,cohen2019certified,jia2019certified} for multi-class classification use the standard Neyman-Pearson Lemma~\cite{neyman1933ix} that is only applicable for a single function, since their base multi-class classifier predicts a single label for an input. To address the second challenge, we propose to use the \emph{law of contraposition} to simultaneously consider multiple ground truth labels of an input when deriving the certified intersection size.

Our derived certified intersection size is the optimal solution to an optimization problem, which involves the label probabilities. However, it is very challenging to compute the exact label probabilities due to the continuity of the isotropic Gaussian noise and the complexity of the base multi-label classifiers (e.g., complex deep neural networks). In response, we design a Monte Carlo algorithm to estimate the lower or upper bounds of label probabilities with probabilistic guarantees. More specifically, we can view the estimation of lower or upper bounds of label probabilities as a binomial proportion confidence interval estimation problem in statistics. Therefore, we use the Clopper-Pearson~\cite{clopper1934use} method from the statistics community to obtain the label probability bounds. Given the estimated lower or upper bounds of label probabilities, we design an efficient algorithm to solve the optimization problem to obtain the certified intersection size. 

Empirically, we evaluate our MultiGuard on VOC 2007, MS-COCO, and NUS-WIDE benchmark datasets. We use the \emph{certified top-$k$ precision@$R$}, \emph{certified top-$k$ recall@$R$}, and \emph{certified top-$k$ f1-score@$R$} to evaluate our MultiGuard. Roughly speaking, certified top-$k$ precision@$R$ is the least fraction of the $k$ predicted labels that are ground truth labels of an input when the $\ell_2$-norm of the adversarial perturbation is at most $R$; certified top-$k$ recall@$R$ is the least  fraction of ground truth labels of an input that are in the set of $k$ labels predicted by our MultiGuard; and certified top-$k$ f1-score@$R$ is the harmonic mean of certified top-$k$ precision@$R$ and certified top-$k$ recall@$R$. Our experimental results show that our MultiGuard outperforms the state-of-the-art certified defense~\cite{jia2019certified} when extending it to multi-label classification. For instance, on VOC 2007 dataset, Jia et al.~\cite{jia2019certified} and our MultiGuard respectively achieve  24.3\% and 31.3\% certified top-$k$ precision@$R$,  51.6\% and 66.4\% certified top-$k$ recall@$R$, as well as 33.0\% and 42.6\% certified top-$k$ f1-score@$R$ when $k'=1$, $k=3$, and $R=0.5$.

Our major contributions can be summarized as follows: 
\begin{itemize}
    \item We propose MultiGuard, the first provably robust defense against adversarial examples for multi-label classification. 
    
    \item We design a Monte Carlo algorithm to compute the certified intersection size. 
    
    \item We evaluate our MultiGuard on VOC 2007, MS-COCO, and NUS-WIDE benchmark datasets.
\end{itemize}
\section{Background and Related Work}
\label{background_and_related_work}
\vspace{-4mm}
\myparatight{Multi-label classification} In multi-label classification, a multi-label classifier predicts multiple labels for an input. 
Many deep learning classifiers~\cite{li2014multi,yang2016exploit,wang2016cnn,wang2017multi,zhu2017learning,nam2017maximizing,huynh2020interactive,chen2019multi,you2020cross,wu2020distribution,benbaruch2020asymmetric,dao2021multi} have been proposed for multi-label classification. 
For instance, a naive method for multi-label classification is to train independent binary classifiers for each label and use ranking or thresholding to derive the final predicted labels. This method, however, ignores the topology structure among labels and thus cannot capture the label co-occurrence dependency (e.g., \emph{mouse} and \emph{keyboard} usually appear together). In response, several methods~\cite{wang2016cnn,chen2019multi} have been proposed to improve the performance of multi-label classification via exploiting the label dependencies in an input. 
Despite their effectiveness, these methods rely on complicated architecture modifications. To mitigate the issue, some recent studies~\cite{wu2020distribution,benbaruch2020asymmetric} proposed to design new loss functions. For instance, Baruch et al.~\cite{benbaruch2020asymmetric} introduced an asymmetric loss (ASL). Roughly speaking, their method is based on the observation that, in multi-label classification, most inputs contain only a small fraction of the possible candidate labels, which leads to under-emphasizing gradients from positive labels during training. Their experimental results indicate that their method achieves state-of-the-art performance on multiple benchmark datasets.

\vspace{-2mm}
\myparatight{Adversarial examples to multi-label classification} Several recent studies~\cite{song2018multi,zhou2020generating,yang2020characterizing,melacci2020can,hu2021tkml} showed that multi-label classification is vulnerable to adversarial examples. An attacker can manipulate the set of labels predicted by a multi-label classifier for an input via adding carefully crafted perturbation to it. For instance, Song et al.~\cite{song2018multi} proposed white-box, targeted attacks to multi-label classification. In particular, they first formulate their attacks as optimization problems and then use gradient descent to solve them. Their experimental results indicate that they can make a multi-label classifier produce an arbitrary set of labels for an input via adding adversarial perturbation to it. Yang et al.~\cite{yang2020characterizing} explored the worst-case mis-classification risk of a multi-label classifier. In particular, they formulate the problem as a bi-level set function optimization problem and leverage random greedy search to find an approximate solution. Zhou et al.~\cite{zhou2020generating} proposed to generate $\ell_{\infty}$-norm adversarial perturbations to fool a multi-label classifier. In particular, they transform the optimization problem of finding adversarial perturbations into a linear programming problem which can be solved efficiently.

\vspace{-2mm}
\myparatight{Existing empirically robust defenses} Some studies~\cite{wu2017adversarial,babbar2018adversarial,melacci2020can} developed empirical defenses to mitigate adversarial examples in multi-label classification. For instance, Wu et al.~\cite{wu2017adversarial} applied adversarial training, a method developed to train robust multi-class classifiers, to improve the robustness of multi-label classifiers. 
Melacci et al.~\cite{melacci2020can} showed that domain knowledge, which measures the relationships among classes, can be used to detect adversarial examples and improve the robustness of multi-label classifiers. 
However, all these defenses lack provable robustness guarantees and thus, they are often broken by advanced adaptive attacks. For instance, Melacci et al.~\cite{melacci2020can} showed that their defenses can be broken by adaptive attacks that also consider the domain knowledge.

\vspace{-2mm}
\myparatight{Existing provably robust defenses} All existing provably robust defenses~\cite{scheibler2015towards,cisse2017parseval,carlini2017provably,gowal2018effectiveness,cheng2017maximum,gehr2018ai2,wong2018provable,bunel2018unified,lecuyer2019certified,cohen2019certified,li2019certified,salman2019provably,jia2022almost,wong2018scaling,singh2019abstract,mirman2018differentiable,wang2018efficient,singh2018fast,zhang2019towards,xiao2018training} were designed for multi-class classification instead of multi-label classification. In particular, they can guarantee that a robust multi-class classifier predicts the same single label for an input or a label (e.g., the single ground truth label of the input) is among the top-$k$ labels predicted by a robust multi-class classifier. These defenses are sub-optimal for multi-label classification. Specifically, in multi-label classification, we aim to guarantee that at least some ground truth labels of an input are in the set of labels predicted by a robust multi-label classifier. 

 MultiGuard leverages randomized smoothing~\cite{lecuyer2019certified,li2019certified,cohen2019certified,jia2019certified,yang2020randomized}.  Existing randomized smoothing studies (e.g., Jia et al.~\cite{jia2019certified}) achieve sub-optimal provable robustness guarantees (i.e., certified intersection size) for multi-label classification, because they are designed for multi-class classification. For example, as our empirical evaluation results will show,  MultiGuard significantly outperforms Jia et al.~\cite{jia2019certified} when extending it to multi-label classification. 
Technically speaking, our work has two key differences with Jia et al.. First, the base multi-class classifier in Jia et al. only predicts a single label for an input while our base multi-label classifier predicts multiple labels for an input. Second, Jia et al. can only guarantee that a single label is provably among the $k$ labels predicted by a smoothed multi-class classifier, while we aim to show that multiple labels (e.g., ground truth labels of an input) are provably among the $k$ labels predicted by a smoothed multi-label classifier. Due to such key differences, we require new techniques to derive the certified intersection size of MultiGuard. For instance, we develop a variant of Neyman-Pearson Lemma~\cite{neyman1933ix} which is applicable to multiple functions while Jia et al. uses the standard Neyman-Pearson Lemma~\cite{neyman1933ix} which is only applicable to a single function. Moreover, we use the {law of contraposition} to derive our certified intersection size, which is not required by Jia et al..

\section{Our MultiGuard}

\vspace{-3mm}

\subsection{Building our MultiGuard}

\vspace{-3mm}
\myparatight{Label probability} Suppose we have a multi-label classifier $f$ which we call \emph{base multi-label classifier}. 
Given an input $\mathbf{x}$, the base multi-label classifier $f$ predicts $k'$ labels for it. For simplicity, we use $f_{k'}(\mathbf{x})$ to denote the set of $k'$ labels predicted by $f$ for $\mathbf{x}$. We use $\epsilon$ to denote an isotropic Gaussian noise, i.e., $\epsilon \sim \mathcal{N}(0, \sigma^2 \cdot I)$, where $\sigma$ is the \emph{standard deviation} and $I$ is an \emph{identity matrix}. Given $\mathbf{x} + \epsilon$ as input, the output of $f$ would be random due to the randomness of $\epsilon$, i.e., $f_{k'}(\mathbf{x}+\epsilon)$ is a random set of $k'$ labels. We define \emph{label  probability} $p_i$ as the probability that the label $i$ is among the set of top-$k'$ labels predicted by $f$ when adding isotropic Gaussian noise to an input $\mathbf{x}$, where $i \in \{1,2,\cdots,c\}$. Formally, we have $p_i = \text{Pr}(i \in f_{k'}(\mathbf{x}+\epsilon))$.

\vspace{-1mm}
\myparatight{Our smoothed multi-label classifier} Given the label probability $p_i$'s for an input $\mathbf{x}$, our \emph{smoothed multi-label classifier} $g$ predicts the $k$ labels with the largest label probabilities for $\mathbf{x}$. For simplicity, we use $g_{k}(\mathbf{x})$ to denote the set of  $k$ labels predicted by our smoothed multi-label classifier for an input $\mathbf{x}$. 

\vspace{-1mm}
\myparatight{Certified intersection size} An attacker adds a perturbation $\delta$ to an input $\mathbf{x}$. $g_k(\mathbf{x}+\delta)$ is the set of $k$ labels predicted by our smoothed multi-label classifier for the perturbed input $\mathbf{x}+\delta$. Given a set of labels $L(\mathbf{x})$ (e.g., the ground truth labels of $\mathbf{x}$), our goal is to show that at least $e$ of them are in the set of $k$ labels predicted by our smoothed multi-label classifier for the perturbed input, when the $\ell_2$-norm of the adversarial perturbation is at most $R$. Formally, we aim to show the following: 
\begin{align}
\label{lower_bound_condition_main}
    \min_{\delta,\lnorm{\delta}_2\leq R} |L(\mathbf{x}) \cap g_k(\mathbf{x}+\delta)| \geq e,
\end{align}
where we call $e$ \emph{certified intersection size}. Note that different inputs may have different certified intersection sizes.

\subsection{Deriving the Certified Intersection Size}
\label{derive_certified_intersection_size}

\vspace{-3mm}
\myparatight{Defining two random variables} Given an input $\mathbf{x}$, we define  two random variables $ \mathbf{X} = \mathbf{x} + \epsilon, \mathbf{Y} = \mathbf{x}+ \delta + \epsilon$,
where $\delta$ is an adversarial perturbation and $\epsilon$ is isotropic Gaussian noise. Roughly speaking, the random variables $\mathbf{X}$ and $\mathbf{Y}$ respectively denote the inputs derived by adding isotropic Gaussian noise to the input $\mathbf{x}$ and its adversarially perturbed version $\mathbf{x} + \delta$. Based on the definition of the label probability, we have $p_i=\text{Pr}(i\in f_{k'}(\mathbf{X}))$. We define \emph{adversarial label probability} $p^{\ast}_i$
 as $p^{\ast}_i = \text{Pr}(i \in f_{k'}(\mathbf{Y})), i \in \{1,2,\cdots,c\}$.
Intuitively, adversarial label probability $p^{\ast}_i$ is the probability that the label $i$ is in the set of $k'$ labels predicted by the base multi-label classifier $f$ for $\mathbf{Y}$. Given an adversarially perturbed input $\mathbf{x} + \delta$, our smoothed multi-label classifier predicts the $k$ labels with the largest adversarial label probabilities $p^*_i$'s for it.

\vspace{-2mm}
\myparatight{Derivation sketch} We leverage the \emph{law of contraposition} in our derivation. Roughly speaking, if we have a statement: $P \longrightarrow Q$, then its contrapositive is: $\neg Q \longrightarrow \neg P$, where $\neg$ is the logical negation symbol. The law of contraposition claims that a statement is true if and only if its contrapositive is true.
In particular, we define the following predicate:
\begin{align}
    Q: \min_{\delta,\lnorm{\delta}_2\leq R} |L(\mathbf{x}) \cap g_k(\mathbf{x}+\delta)| \geq e.
\end{align}
Intuitively, $Q$ is true if at least $e$ labels in  $L(\mathbf{x})$ can be found in $g_k(\mathbf{x}+\delta)$ for an arbitrary adversarial perturbation $\delta$ whose $\ell_2$-norm is no larger than $R$. Then, we have $\neg Q: \min_{\delta, \lnorm{\delta}_2 \leq R}  | L(\mathbf{x})\cap g_k(\mathbf{x}+\delta)  | < e$. Moreover, we  derive a necessary condition (denoted as $\neg P$) for $\neg Q$ to be true, i.e., $\neg Q \longrightarrow \neg P$.  Roughly speaking, $\neg P$  compares  upper bounds of the adversarial label probabilities of the labels in $\{1,2,\cdots,c\}\setminus L(\mathbf{x})$ with  lower bounds of those in $L(\mathbf{x})$. More specifically, $\neg P$ represents that the lower bound of the $e$th largest adversarial label probability of labels in $L(\mathbf{x})$ is no larger than the upper bound of the $(k-e+1)$th largest adversarial label probability of the labels in $\{1,2,\cdots,c\} \setminus L(\mathbf{x})$. 
Finally, based on the law of contraposition, we have $P \longrightarrow Q$, i.e., $Q$ is true if $P$ is true (i.e., $\neg P$ is false).

The major challenges we face when deriving the necessary condition $\neg P$ are as follows: (1) the adversarial perturbation $\delta$ can be arbitrary as long as its $\ell_2$-norm is no larger than $R$, which has infinitely many values, and (2) the complexity of the classifier (e.g., a complex deep neural network) and the continuity of the random variable $\mathbf{Y}$ make it hard to compute the adversarial label probabilities.  We propose an innovative method to solve the challenges based on two key observations: (1) the random variable $\mathbf{Y}$ reduces to $\mathbf{X}$ under no attacks (i.e., $\delta = \mathbf{0}$) and (2) the adversarial perturbation $\delta$ is bounded, i.e., $\lnorm{\delta}_2 \leq R$. Our core idea is to bound the adversarial label probabilities using the label probabilities.
Suppose we have the following  bounds for the label probabilities (we propose an algorithm to estimate such bounds in Section~\ref{compute_certified_intersection_size}):
\begin{align}
\label{main_label_probability_lower_bound}
 & p_i \geq \underline{p_{i}}, \forall i \in  L(\mathbf{x}), \\
 \label{main_label_probability_upper_bound}
 & p_j \leq \overline{p}_{j},\forall j \in \{1,2,\cdots,c\}\setminus L(\mathbf{x}).    
\end{align}
Given the bounds for label probabilities, we derive a lower bound of the adversarial label probability for each label $i \in L(\mathbf{x})$ and an upper bound of the adversarial label probability for each label $j \in \{1,2,\cdots,c\}\setminus L(\mathbf{x})$. To derive these bounds, we propose a variant of the Neyman-Pearson Lemma~\cite{neyman1933ix} which enables us to consider multiple functions. In contrast, the standard Neyman-Pearson Lemma \cite{neyman1933ix} is insufficient as it is only applicable to a single function while the base multi-label classifier outputs multiple labels.

We give an overview of our derivation of the  bounds of the adversarial label probabilities and show the details in the proof of the Theorem~\ref{theorem_of_certified_radius} in supplementary material. Our idea is to construct some regions in the domain space of $\mathbf{X}$ and $\mathbf{Y}$ via our variant of the Neyman-Pearson Lemma. Specifically, given the constructed regions, we can obtain the  lower/upper bounds of the adversarial label probabilities using the probabilities that the random variable $\mathbf{Y}$ is in these regions. Note that the probabilities that the random variables $\mathbf{X}$ and $\mathbf{Y}$ are in these regions can be easily computed as we know their probability density functions. 

Next, we  derive a lower bound of the adversarial label probability $p_i^\ast$ ($i \in L(\mathbf{x})$) as an example to illustrate our main idea. Our derivation of the   upper bound of the adversarial label probability  for a label in $\{1, 2, \cdots, c\}\setminus L(\mathbf{x})$ follows a similar procedure.
Given a label $i \in L(\mathbf{x})$, we can find a region $\mathcal{A}_i$ via our variant of Neyman-Pearson Lemma~\cite{neyman1933ix} such that $\text{Pr}(\mathbf{X} \in \mathcal{A}_i) = \underline{p_i}$. Then, we can derive a lower bound of $p^*_i$ via computing the probability of the random variable $\mathbf{Y}$ in the region $\mathcal{A}_i$, i.e., we have:
\begin{align}
\label{individual_bound_of_lower_bound_main}
    p^*_i \geq \text{Pr}(\mathbf{Y} \in \mathcal{A}_i).
\end{align}
The above lower bound can be further improved via jointly considering multiple labels in $L(\mathbf{x})$. Suppose we use $\Gamma_u \subseteq L(\mathbf{x})$ to denote an arbitrary set of $u$ labels. We can craft a region $\mathcal{A}_{\Gamma_u}$ via our variant of Neyman-Pearson Lemma such that we have $\text{Pr}(\mathbf{X} \in \mathcal{A}_{\Gamma_u}) = \frac{ \sum_{i \in \Gamma_u} \underline{p_i}}{k'}$. Then, we can derive the following lower bound: 
\begin{align}
\label{joint_bound_of_lower_bound_main}
    \max_{i \in \Gamma_u} p^*_i \geq \frac{k'}{u} \cdot \text{Pr}(\mathbf{Y} \in \mathcal{A}_{\Gamma_u}).
\end{align}
The $e$th largest lower bounds of adversarial label probabilities of labels in $L(\mathbf{x})$ can be derived by combing the lower bounds in Equation~\ref{individual_bound_of_lower_bound_main} and~\ref{joint_bound_of_lower_bound_main}. Formally, we have the following theorem: 

\begin{theorem}[Certified Intersection Size]
\label{theorem_of_certified_radius}
Suppose we are given an input $\mathbf{x}$, a base multi-label classifier $f$,  our smoothed classifier $g$, and a set of $d$ ground truth labels $L(\mathbf{x})=\{a_{1},a_{2},\cdots,a_{d}\}$ for $\mathbf{x}$. Moreover, we have a lower bound $\underline{p_{i}}$ of $p_{i}$ for  each $i \in L(\mathbf{x})$ satisfying Equation~\ref{main_label_probability_lower_bound} and an upper bound $\overline{p}_{j}$ of $p_j$ for each  $j \in \{1,2,\cdots,c\}\setminus L(\mathbf{x})$ satisfying Equation~\ref{main_label_probability_upper_bound}. 
We assume $\underline{p_{a_1}}\geq   \cdots\geq \underline{p_{a_d}}$ for convenience.  Let $\overline{p}_{b_1}\geq\overline{p}_{b_{2}}\geq\cdots\geq\overline{p}_{b_{c-d}}$  be the $c-d$ label probability upper bounds for the labels in $\{1,2,\cdots,c\}\setminus L(\mathbf{x})$, where ties are broken uniformly at random. 
Given a perturbation size $R$, we have the following guarantee:
\begin{align}
   \min_{\delta,\lnorm{\delta}_2\leq R}  {|L(\mathbf{x})\cap g_{k}(\mathbf{x}+\delta)|}\geq {e},
\end{align}
where $e$ is the optimal solution to the following optimization problem or 0 if it does not have a solution: 
{\small
\begin{align}
&e = \argmax_{e'=1, 2,\cdots, \min\{d,k\}} e' \nonumber \\
s.t. 
&\max\{\Phi(\Phi^{-1}(\underline{p_{a_{e'}}})-\frac{R}{\sigma}),\max_{u=1}^{\eta}\frac{k^{\prime}}{u}\cdot\Phi(\Phi^{-1}(\frac{\underline{p_{A_{u}}}}{k^{\prime}})-\frac{R}{\sigma})\}\nonumber \\
\label{equation_to_solve_for_topk}
> &\min\{\Phi(\Phi^{-1}(\overline{p}_{b_{s}})+\frac{R}{\sigma}),\min_{v=1}^{s}\frac{k^{\prime}}{v}\cdot\Phi(\Phi^{-1}(\frac{\overline{p}_{B_{v}}}{k^{\prime}})+\frac{R}{\sigma})\}, 
\end{align}
}
where $\Phi$ and $\Phi^{-1}$ respectively are the cumulative distribution function and its inverse of the standard Gaussian distribution, $\eta = d-e'+1$, $\underline{p_{A_{u}}} =\sum_{l=e'}^{e'+u-1}\underline{p_{a_l}}$,  $s=k-e'+1$,  and  $\overline{p}_{B_{v}} =\sum_{l=s-v+1}^{s}\overline{p}_{b_l}$.  
\end{theorem}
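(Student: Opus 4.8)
The plan is to argue about the \emph{adversarial} label probabilities $p^\ast_i=\Pr(i\in f_{k'}(\mathbf{Y}))$, observing that $g_{k}(\mathbf{x}+\delta)$ is exactly the set of $k$ labels with the largest $p^\ast_i$, and to proceed by the law of contraposition as outlined above. Fix a candidate $e$ for which the right-hand side of~\eqref{equation_to_solve_for_topk} is strictly smaller than the left-hand side; I want to rule out any $\delta$ with $\lnorm{\delta}_2\le R$ and $|L(\mathbf{x})\cap g_{k}(\mathbf{x}+\delta)|\le e-1$. Suppose such a $\delta$ existed, and let $\theta$ be the $k$-th largest value among $p^\ast_1,\dots,p^\ast_c$. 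Because $g_k(\mathbf{x}+\delta)$ retains precisely the $k$ labels with largest $p^\ast$, this forces (i)~at least $k-e+1$ labels of $\{1,\dots,c\}\setminus L(\mathbf{x})$ to have $p^\ast\ge\theta$, and (ii)~at most $e-1$ labels of $L(\mathbf{x})$ to have $p^\ast$ strictly above $\theta$, i.e.\ at least $d-e+1$ of them to have $p^\ast\le\theta$. The remainder of the proof establishes (i)~$\Rightarrow~\theta\le\mathrm{RHS}$ and (ii)~$\Rightarrow~\theta\ge\mathrm{LHS}$, whose conjunction contradicts $\mathrm{LHS}>\mathrm{RHS}$ and proves the theorem (with $e=0$ admissible by convention).

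The engine for both implications is a many-function variant of the Neyman--Pearson lemma, the tool behind the region constructions underlying~\eqref{individual_bound_of_lower_bound_main}--\eqref{joint_bound_of_lower_bound_main}. Writing $h_i(\mathbf{z})=\mathbb{1}\!\left[i\in f_{k'}(\mathbf{z})\right]$, the defining property $\sum_{i=1}^{c}h_i\equiv k'$ of the base classifier means that for any label set $\Gamma$ the aggregate $H_\Gamma=\sum_{i\in\Gamma}h_i$ satisfies $0\le H_\Gamma\le k'$, with $\mathbb{E}[H_\Gamma(\mathbf{X})]=\sum_{i\in\Gamma}p_i$ and $\mathbb{E}[H_\Gamma(\mathbf{Y})]=\sum_{i\in\Gamma}p^\ast_i$. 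Minimizing or maximizing $\mathbb{E}[\tilde H(\mathbf{Y})]$ over all measurable $\tilde H$ with $0\le\tilde H\le k'$ and $\mathbb{E}[\tilde H(\mathbf{X})]$ held fixed is a linear program whose optimizer is $k'$ times the indicator of a likelihood-ratio region $\{\,\mathrm{d}\mathbf{Y}/\mathrm{d}\mathbf{X}\le\tau\,\}$ (respectively $\ge\tau$); since $\mathbf{X}$ and $\mathbf{Y}$ are isotropic Gaussians separated by $\delta$, that region is a half-space, and a half-space of $\mathbf{X}$-mass $\beta$ has $\mathbf{Y}$-mass $\Phi\!\big(\Phi^{-1}(\beta)\mp\lnorm{\delta}_2/\sigma\big)$. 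Combining this with monotonicity of $\beta\mapsto\Phi(\Phi^{-1}(\beta)\mp R/\sigma)$ and the estimates~\eqref{main_label_probability_lower_bound}--\eqref{main_label_probability_upper_bound} gives, for all $\Gamma\subseteq L(\mathbf{x})$ and $\Delta\subseteq\{1,\dots,c\}\setminus L(\mathbf{x})$, the additive bounds
\begin{align*}
\sum_{i\in\Gamma}p^\ast_i\ \ge\ k'\,\Phi\!\Big(\Phi^{-1}\!\big(\tfrac{1}{k'}\textstyle\sum_{i\in\Gamma}\underline{p_i}\big)-\tfrac{R}{\sigma}\Big),
\qquad
\sum_{j\in\Delta}p^\ast_j\ \le\ k'\,\Phi\!\Big(\Phi^{-1}\!\big(\tfrac{1}{k'}\textstyle\sum_{j\in\Delta}\overline{p}_j\big)+\tfrac{R}{\sigma}\Big),
\end{align*}
and, by running the same argument on the single $\{0,1\}$-valued functions $h_i$ (so without the factor $1/k'$), the sharper pointwise bounds $p^\ast_i\ge\Phi(\Phi^{-1}(\underline{p_i})-R/\sigma)$ for $i\in L(\mathbf{x})$ and $p^\ast_j\le\Phi(\Phi^{-1}(\overline{p}_j)+R/\sigma)$ for $j\notin L(\mathbf{x})$.

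It remains to combine these with (i) and (ii) through a counting/order-statistics argument. For (ii): let $S\subseteq L(\mathbf{x})$ be the labels with $p^\ast\le\theta$, so $|S|\ge d-e+1$. For every $u\in\{1,\dots,d-e+1\}$ we have $|S\cap\{a_1,\dots,a_{e+u-1}\}|\ge u$; any $u$ of these indices form a set $\Gamma$ with $\sum_{i\in\Gamma}\underline{p_i}\ge\sum_{l=e}^{e+u-1}\underline{p_{a_l}}=\underline{p_{A_u}}$ — the smallest such subset-sum is attained at the $u$ highest-indexed members of $\{a_1,\dots,a_{e+u-1}\}$ — while $\sum_{i\in\Gamma}p^\ast_i\le u\theta$, so the additive lower bound forces $\theta\ge\frac{k'}{u}\Phi(\Phi^{-1}(\underline{p_{A_u}}/k')-R/\sigma)$. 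Also, if $\theta<\Phi(\Phi^{-1}(\underline{p_{a_e}})-R/\sigma)$ then $a_1,\dots,a_e$ all have $p^\ast>\theta$ by the pointwise lower bound and monotonicity, contradicting $|S|\ge d-e+1$; hence $\theta\ge\Phi(\Phi^{-1}(\underline{p_{a_e}})-R/\sigma)$. Together these give $\theta\ge\mathrm{LHS}$. Part (i) is the mirror image: the $\ge k-e+1$ complement labels with $p^\ast\ge\theta$ meet $\{b_{s-v+1},\dots,b_{c-d}\}$ (with $s=k-e+1$) in at least $v$ indices, and the $v$ lowest-indexed of them form a $\Delta$ with $\sum_{j\in\Delta}\overline{p}_j\le\sum_{l=s-v+1}^{s}\overline{p}_{b_l}=\overline{p}_{B_v}$, so the additive upper bound gives $\theta\le\frac{k'}{v}\Phi(\Phi^{-1}(\overline{p}_{B_v}/k')+R/\sigma)$; and since at most $s-1$ complement labels can have pointwise upper bound exceeding $\Phi(\Phi^{-1}(\overline{p}_{b_s})+R/\sigma)$, we get $\theta\le\Phi(\Phi^{-1}(\overline{p}_{b_s})+R/\sigma)$, hence $\theta\le\mathrm{RHS}$. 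Taking the largest admissible $e$ yields the stated certified intersection size.

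The step I expect to be the main obstacle is the many-function Neyman--Pearson lemma itself: showing that the extremizers of the associated $[0,k']$-constrained linear program are likelihood-ratio (half-space) indicators rather than merely $\{0,1\}$ indicators, and then verifying that relaxing the $\mathbf{X}$-side equalities to the one-sided estimates~\eqref{main_label_probability_lower_bound}--\eqref{main_label_probability_upper_bound} preserves validity (via monotonicity of $\Phi(\Phi^{-1}(\cdot)\mp R/\sigma)$, whose convexity is also what makes the additive and pointwise terms coherent). A secondary point of care is the counting bookkeeping — checking that $\underline{p_{A_u}}$ and $\overline{p}_{B_v}$ are precisely the extreme subset-sums produced by the pigeonhole step, handling the random tie-breaking in the ordering $\overline{p}_{b_1}\ge\cdots\ge\overline{p}_{b_{c-d}}$ (and in the definition of $\theta$), and dispatching degenerate cases such as probabilities equal to $0$ or $1$.
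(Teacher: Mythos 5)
Your proposal is correct and follows essentially the same route as the paper's proof: the law of contraposition, a multi-function Neyman--Pearson variant whose extremizers are half-space indicators (giving both the pointwise bounds $\Phi(\Phi^{-1}(\cdot)\mp R/\sigma)$ and the averaged $\frac{k'}{u}$- and $\frac{k'}{v}$-type bounds), and a pigeonhole/order-statistics step identifying $\underline{p_{A_u}}$ and $\overline{p}_{B_v}$ as the extremal subset sums. The only difference is organizational: you run the contradiction through the $k$-th largest adversarial label probability $\theta$, whereas the paper phrases the same necessary condition as $\min_{\mathcal{U}_r}\max_{i\in\mathcal{U}_r}p_i^\ast \le \max_{\mathcal{V}_s}\min_{j\in\mathcal{V}_s}p_j^\ast$ over the excluded and included label sets; the two formulations are equivalent.
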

\begin{proof}
Please refer to Appendix~\ref{proof_of_theorem_of_certified_radius} in supplementary material. 
\end{proof}

We have the following remarks for our theorem: 
\begin{itemize}
    \item When $k'=k=1$ and $L(\mathbf{x})$ only contains a single label, our certified intersection size reduces to the robustness result derived by Cohen et al.~\cite{cohen2019certified}, i.e., the smoothed classifier provably predicts the same label for an input when the adversarial perturbation is bounded. When $k'=1$, $k\geq 1$, and $L(\mathbf{x})$ only contains a single label, our certified intersection size reduces to the robustness result derived by Jia et al.~\cite{jia2019certified}, i.e., a label is provably among the $k$ labels predicted by a smoothed classifier when the adversarial perturbation is bounded. In other words, the certified robustness guarantees derived by Cohen et al.~\cite{cohen2019certified} and Jia et al.~\cite{jia2019certified} are special cases of our results. Note that Cohen et al. is a special case of Jia et al. Moreover, both Cohen et al. and Jia et al. focused on certifying robustness for multi-class classification instead of multi-label classification. 
    
    \item Our certified intersection size holds for arbitrary attacks as long as the $\ell_2$-norm of the adversarial perturbation is no larger than $R$. Moreover, our results are applicable for any base multi-label classifier. 
    
    \item Our certified intersection size relies on a lower bound of the label probability for each label $i \in L(\mathbf{x})$ and an upper bound of the label probability for each label $j \in \{1,2,\cdots,c\}\setminus L(\mathbf{x})$. Moreover, when the label probability bounds are estimated more accurately, our certified intersection size may be larger. 
    
    \item \CR{Our theorem requires $\underline{p_{A_u}} \leq k'$ and $\overline{p}_{B_{v}} \leq k'$.  We have $\underline{p_{A_u}} \leq p_{A_u} \leq \sum_{i \in L(\mathbf{x})} p_{i} \leq \sum_{j=1}^{c} p_j = k’$. In Section~\ref{compute_certified_intersection_size}, our estimated $\overline{p}_{B_{v}}$ will always be no larger than $k’$. Thus, we can apply our theorem in practice. }
    
    \item \CR{We note that there are respectively two terms in the left- and right-hand sides of Equation~\ref{equation_to_solve_for_topk}. The major technical challenge in our derivation stems from the second term in each side. As we will show in our experiments, those two terms significantly improve certified intersection size.}
\end{itemize}

\subsection{Computing the Certified Intersection Size}
\label{compute_certified_intersection_size}
In order to compute the certified intersection size for an input $\mathbf{x}$, we need to solve the optimization problem in Equation~\ref{equation_to_solve_for_topk}. The key challenge of solving the optimization problem is to estimate lower bounds $\underline{p_i}$ of label probabilities for $i \in L(\mathbf{x})$ and upper bounds $\overline{p}_j$ of label probabilities for $j \in \{1,2,\cdots,c\} \setminus L(\mathbf{x})$. To address the challenge, we design a Monte Carlo  algorithm to estimate these label probability bounds with probabilistic guarantees. Then, given the estimated label probability bounds, we solve the optimization problem to obtain the certified intersection size.

\vspace{-1mm}
\myparatight{Estimating label probability bounds} We randomly sample $n$ Gaussian noise from $\epsilon$ and add them to the input $\mathbf{x}$. 
We use $\mathbf{x}^1, \mathbf{x}^2, \cdots, \mathbf{x}^n$ to denote the $n$ noisy inputs for convenience. Given these noisy inputs, we use the base multi-label classifier $f$ to predict $k'$ labels for each of them. 
Moreover, we define the \emph{label frequency} $n_i$ of label  $i$ as the number of noisy inputs whose predicted $k'$ labels include $i$. Formally, we have $n_i = \sum_{t=1}^n \mathbb{I}(i \in f_{k'}(\mathbf{x}^t)), i \in \{1,2,\cdots,c\}$, where $\mathbb{I}$ is an indicator function. 
Based on the definition of label probability $p_i$, we know that $n_i$ follows a \emph{binomial distribution} with parameters $n$ and $p_i$, where $n$ is the number of noisy inputs and $p_i$ is the label probability of label $i$. Our goal is to estimate a lower or upper bound of $p_i$ based on $n_i$ and $n$, which is a binomial proportion confidence interval estimation problem. Therefore, we can use the Clopper-Pearson~\cite{clopper1934use} method from the statistics community to estimate these label probability bounds. Formally, we have  the following label probability bounds: 
{\small 
\begin{align}
\label{prob_bound_est_1}
 &   \underline{p_i} = \text{Beta}(\frac{\alpha}{c}; n_i, n - n_i + 1), i \in L(\mathbf{x}),\\
 \label{prob_bound_est_2}
      &  \overline{p}_j = \text{Beta}(1-\frac{\alpha}{c}; n_j, n - n_j + 1), \forall j \in \{1,2,\cdots,c\}\setminus L(\mathbf{x}),
\end{align}
}
where $1 - \frac{\alpha}{c}$ is the confidence level and $\text{Beta}(\rho;\varsigma, \vartheta)$ is the $\rho$th quantile of the Beta distribution with shape parameters $\varsigma$ and $\vartheta$. Based on \emph{Bonferroni correction}~\cite{bonferroni1936teoria,dunn1961multiple}, the overall confidence level for the $c$ label probability upper or lower bounds is $1 - \alpha$. To solve the optimization problem in Equation~\ref{equation_to_solve_for_topk}, we also need to estimate $\underline{p_{A_u}}$ and $\overline{p}_{B_v}$. In particular, we can estimate $\underline{p_{A_u}} =\sum_{l=e'}^{e'+u-1}\underline{p_{a_l}}$ and $\overline{p}_{B_{v}} =\sum_{l=s-v+1}^{s}\overline{p}_{b_l}$. However, this bound may be loose for $\overline{p}_{B_{v}}$. We can further improve the bound via considering the constraint that $ \overline{p}_{B_{v}} + \sum_{i \in  L(\mathbf{x}) } \underline{p_i} \leq k' $. In other words, we have $ \overline{p}_{B_{v}} \leq k'- \sum_{i \in  L(\mathbf{x}) } \underline{p_i} $. Given this constraint, we can estimate $\overline{p}_{B_{v}} =\min(\sum_{l=s-v+1}^{s}\overline{p}_{b_l}, k'- \sum_{i \in  L(\mathbf{x}) } \underline{p_i})$. Note that the above constraint is not applicable for $\underline{p_{A_u}}$ since it is a lower bound.

\vspace{-1mm}
\myparatight{Solving the optimization problem in Equation~\ref{equation_to_solve_for_topk}} Given the estimated label probability lower or upper bounds, we solve the optimization problem in Equation~\ref{equation_to_solve_for_topk} via binary search.

\vspace{-1mm}
\myparatight{Complete algorithm} Algorithm~\ref{alg:certify} in supplementary materials shows our complete  algorithm to compute the certified intersection size for an input $\mathbf{x}$. The function \textsc{RandomSample} returns $n$ noisy inputs via first sampling $n$ noise from the isotropic Gaussian distribution and then adding them to the input $\mathbf{x}$. Given the label frequency for each label and the overall confidence $\alpha$ as input, the function \textsc{ProbBoundEstimation} aims to estimate the label probability bounds based on Equation~\ref{prob_bound_est_1} and~\ref{prob_bound_est_2}. The function \textsc{BinarySearch} returns the certified intersection size via solving the optimization problem in Equation~\ref{equation_to_solve_for_topk} using binary search.

\section{Evaluation}
\vspace{-3mm}
\subsection{Experimental Setup}

\vspace{-2mm}
\myparatight{Datasets} We adopt the following multi-label classification benchmark datasets: 
\begin{itemize}
    
\vspace{-2mm}
\item  {\bf VOC 2007~\cite{pascal-voc-2007}:}
Pascal Visual Object Classes Challenge (VOC 2007) dataset~\cite{pascal-voc-2007} contains 9,963 images from 20 objects (i.e., classes).  On average, each image has 2.5 objects. Following previous work~\cite{wang2016cnn}, we split the dataset into 5,011 training images and 4,952 testing images.

\vspace{-1mm}
\item {\bf MS-COCO~\cite{lin2014microsoft}:} Microsoft-COCO (MS-COCO)~\cite{lin2014microsoft}  dataset contains 82,081 training images, 40,504 validation images, and 40,775 testing images from 80 objects. Each image has 2.9 objects on average. The images in the testing dataset do not have ground truth labels. Therefore, following previous work~\cite{chen2019multi}, we evaluate our method on the validation dataset.

\item  {\bf NUS-WIDE~\cite{nus-wide}:} NUS-WIDE dataset~\cite{nus-wide} originally contains 269,648 images from Flickr. The images are manually annotated into 81 visual concepts, with 2.4 visual concepts per image on average. Since the URLs of certain images are not accessible,
we adopt the version released by \cite{benbaruch2020asymmetric}, which contains 154,000 training images and 66,000 testing images.

\end{itemize}
Similar to previous work~\cite{cohen2019certified,lee2019tight} on certified defenses for multi-class classification, we randomly sample 500 images from the testing (or validation) dataset of each dataset to evaluate our MultiGuard. 

\myparatight{Base multi-label classifiers} We adopt ASL~\cite{benbaruch2020asymmetric} to train the base multi-label classifiers on the three benchmark datasets. In particular, ASL leverages an asymmetric loss to solve the positive-negative imbalance issue (an image has a few positive labels while has many negative labels on average) in multi-label classification, and achieves state-of-the-art performance on the three benchmark datasets. \CR{Suppose $q_j$ is the probability that a base multi-label classifier predicts label $j$ $(j=1,2,\cdots, c)$ for a training input. Moreover, we let $y_j$ be 1 (or 0) if the label $j$ is (or is not) a ground truth label of the training input. The loss of ASL [2] is as follows:  $L_{ASL}=\sum_{j=1}^{c} -y_{j} L_{j+}-(1-y_{j}) L_{j-}$, where $L_{j+}=(1-q_{j})^{\gamma_{+}} \log (q_{j})$ and $L_{j-}=\left(\max(q_{j}-m,0)\right)^{\gamma_{-}} \log \left(1-\max(q_{j}-m,0)\right)$. Note that $\gamma_{+}$, $\gamma_{-}$, and $m$ are hyperparameters. Following \cite{benbaruch2020asymmetric}, we set training hyperameters $\gamma_{+}=0, \gamma_{-}=4$, and $m=0.05$. We train the classifier using Adam optimizer, using  learning rate   $10^{-3}$  and batch size 32. We adopt the public implementation of ASL\footnote{https://github.com/Alibaba-MIIL/ASL} in our experiments. } Similar to previous work~\cite{cohen2019certified} on randomized smoothing based multi-class classification, we add isotropic Gaussian noise to the training data when we train our base multi-label classifiers. In particular, given a batch of training images, we add isotropic Gaussian noise to each of them, and then we use the noisy training images to update the base multi-label classifier. Our experimental results indicate that such training method can substantially improve the robustness of our MultiGuard (please refer to Figure~\ref{impact_of_train_coco_nuswide} in supplementary material).

\begin{figure*}
\centering
{\includegraphics[width=0.32\textwidth]{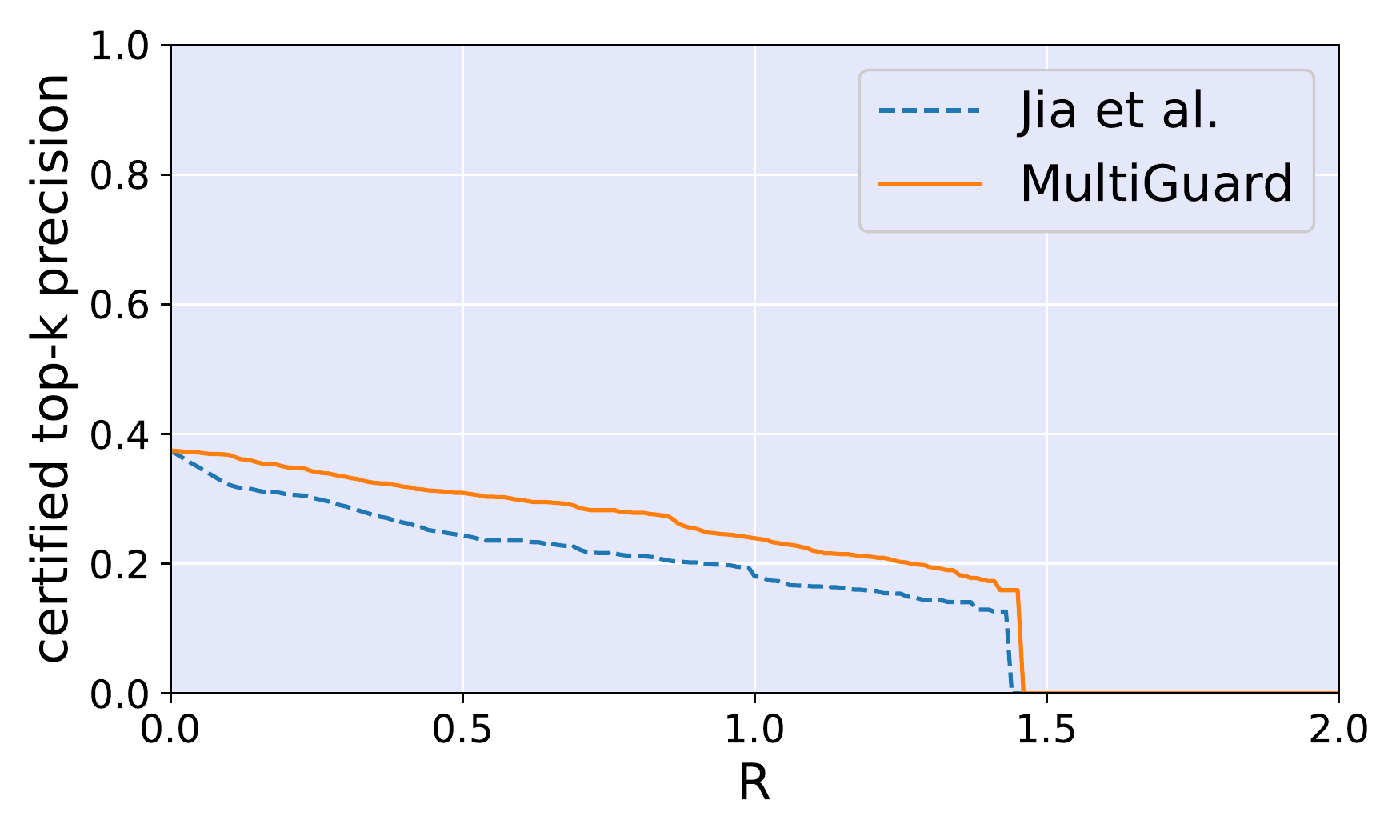}}
{\includegraphics[width=0.32\textwidth]{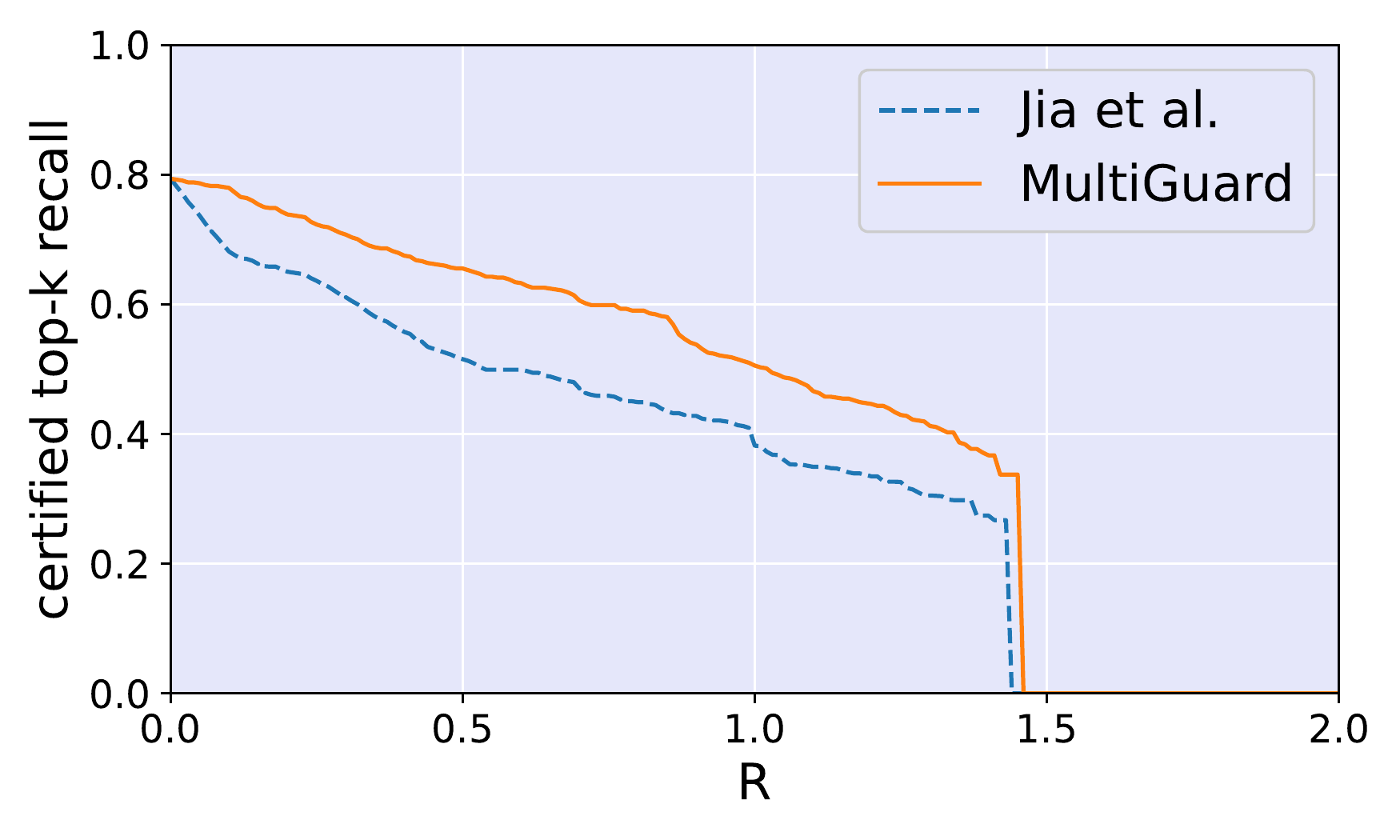}}
{\includegraphics[width=0.32\textwidth]{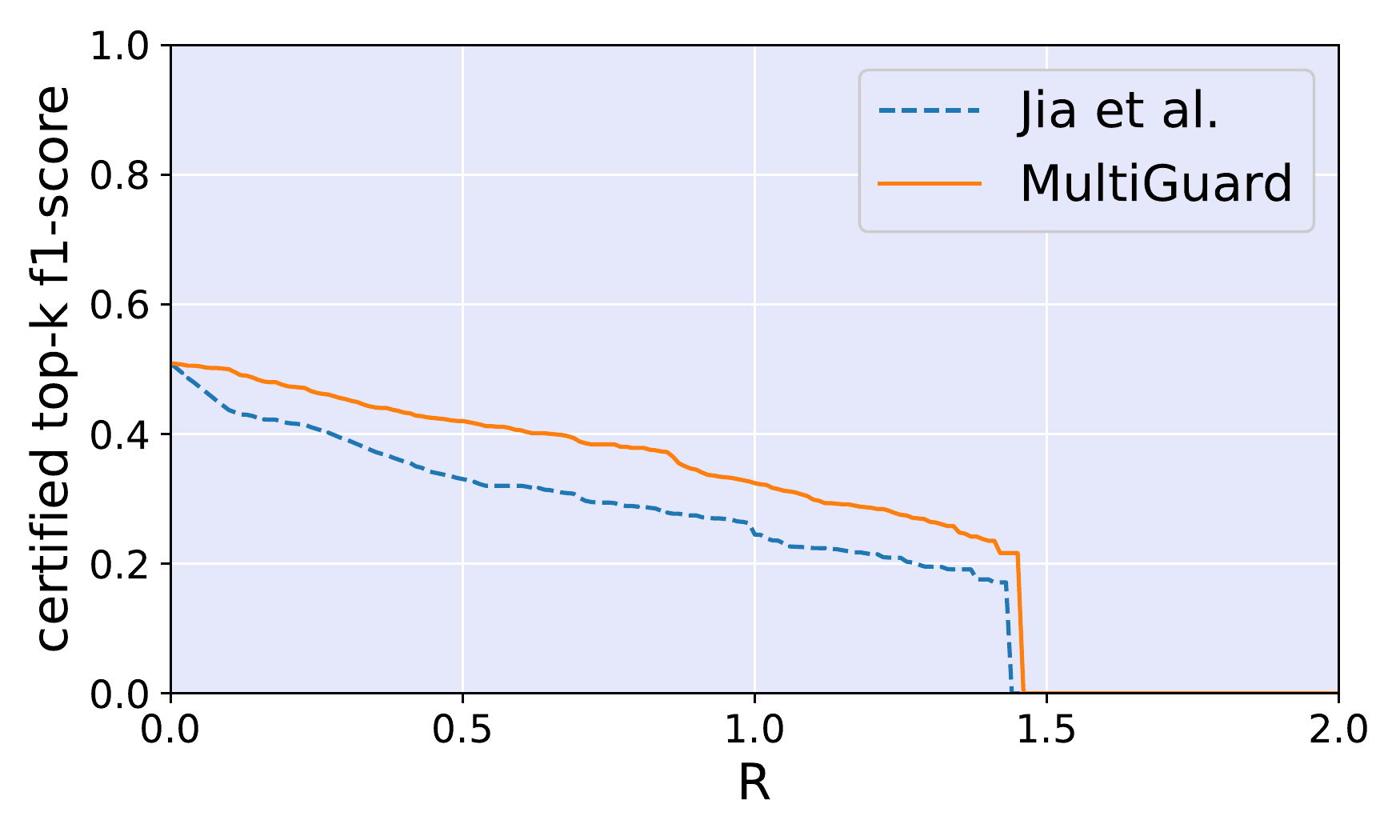}}
{\includegraphics[width=0.32\textwidth]{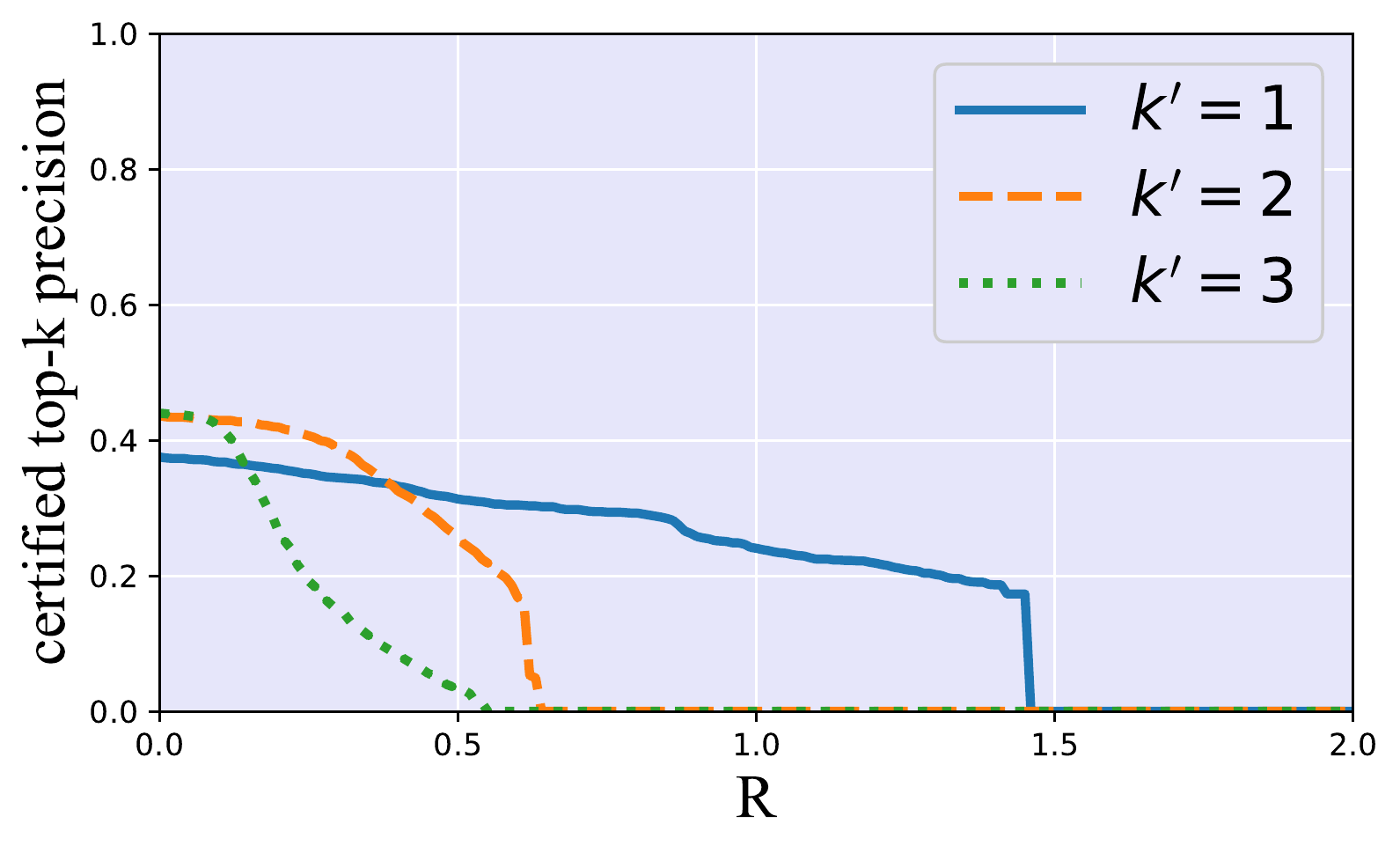}}
{\includegraphics[width=0.32\textwidth]{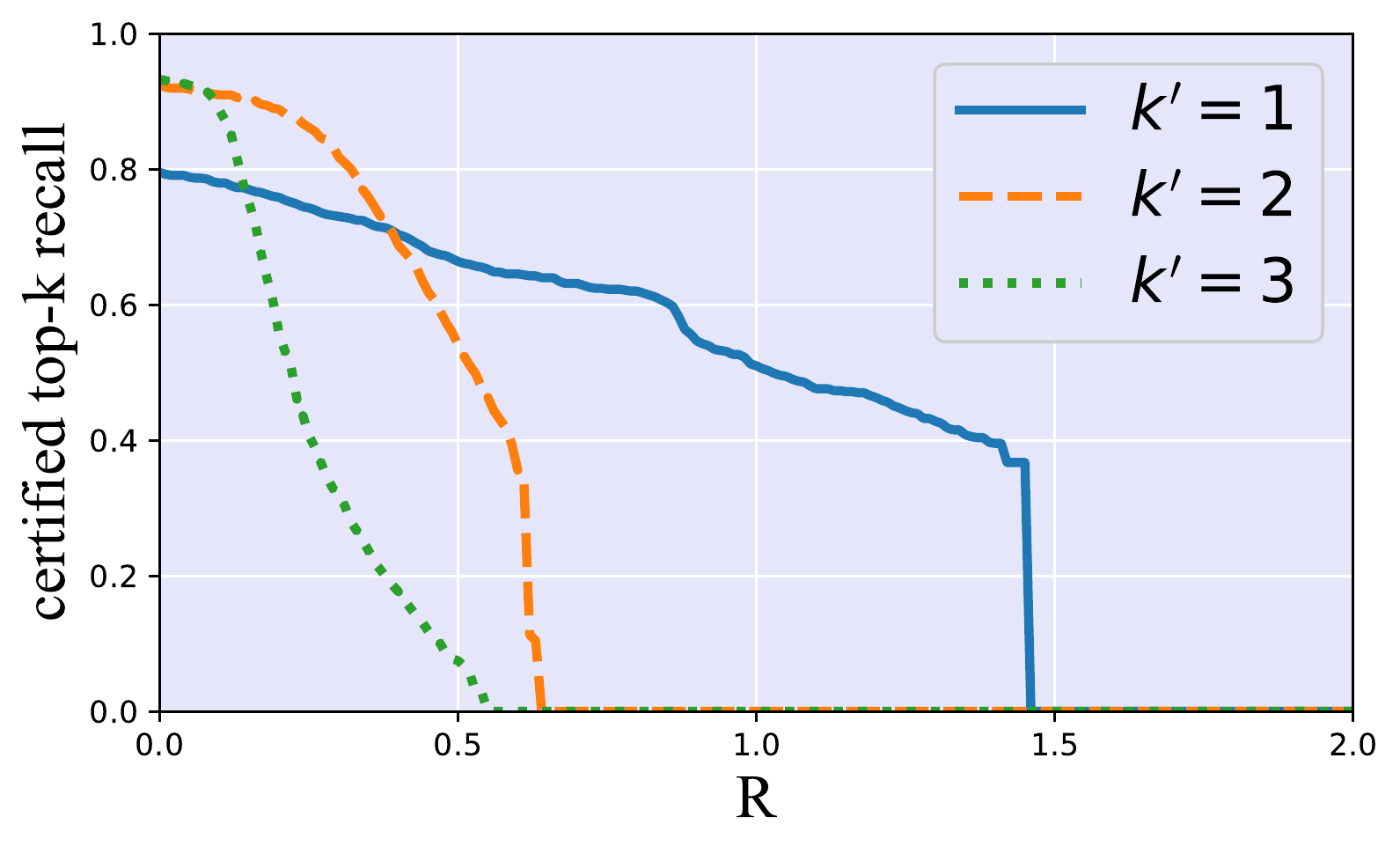}}
{\includegraphics[width=0.32\textwidth]{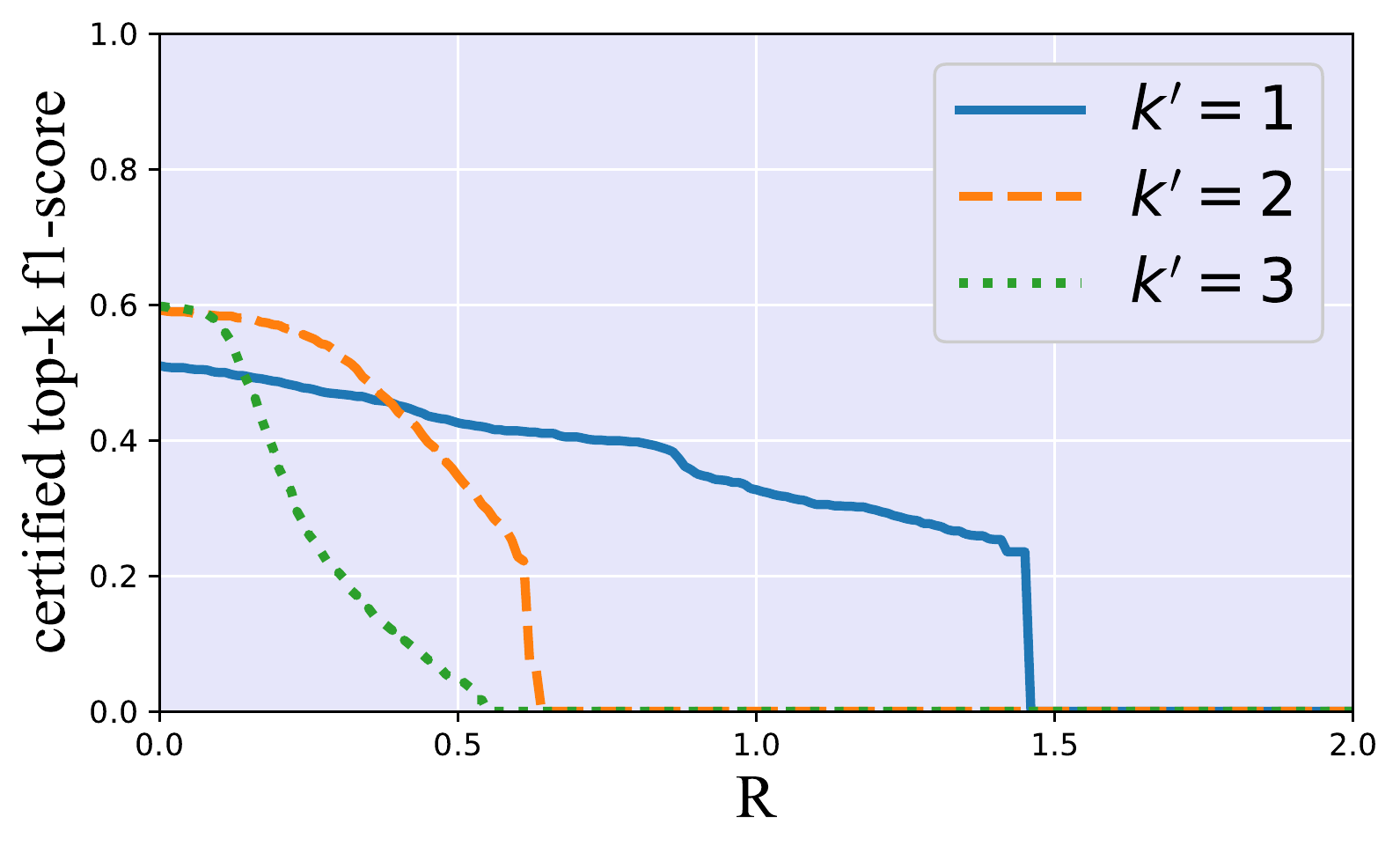}}
{\includegraphics[width=0.32\textwidth]{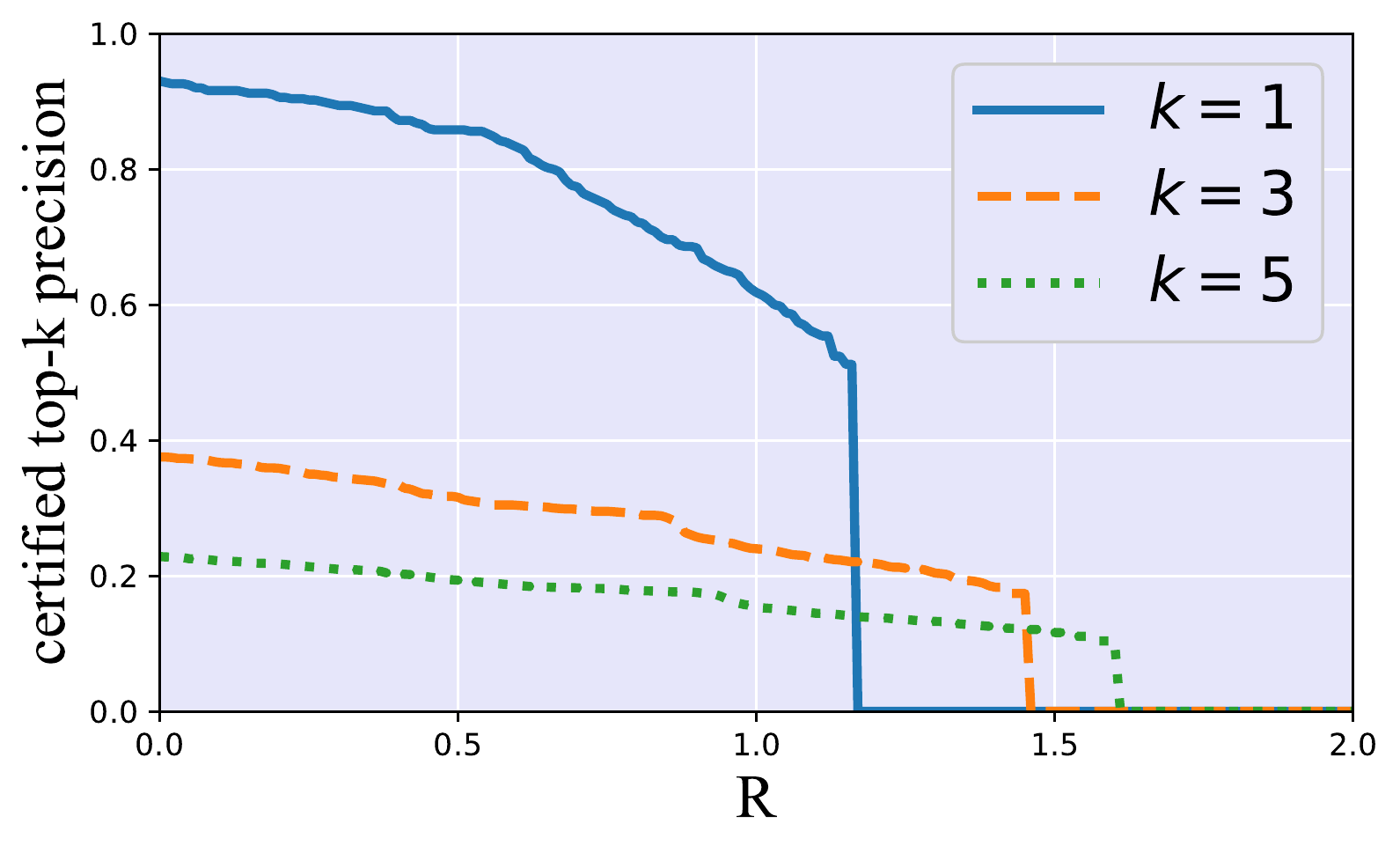}}
{\includegraphics[width=0.32\textwidth]{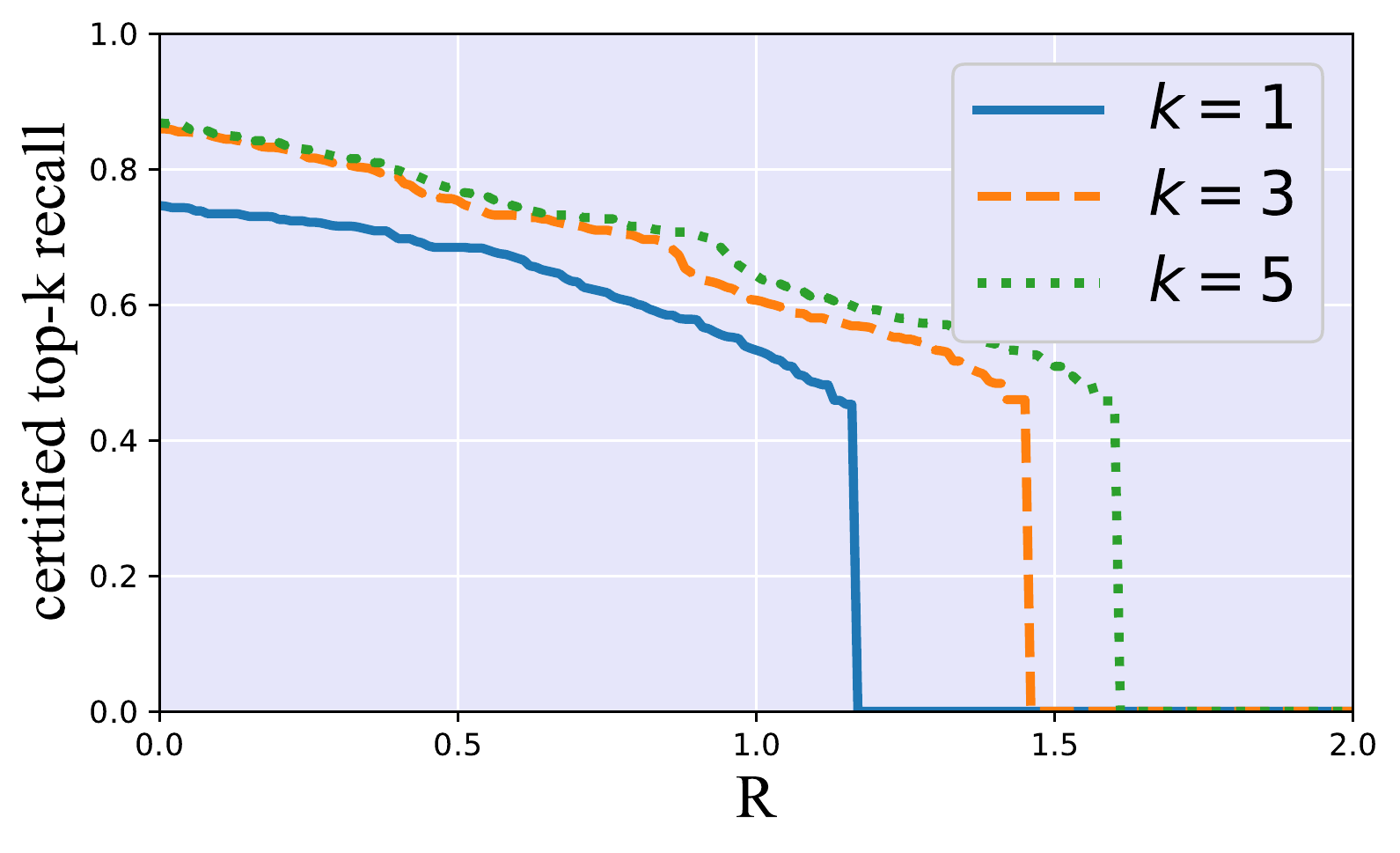}}
{\includegraphics[width=0.32\textwidth]{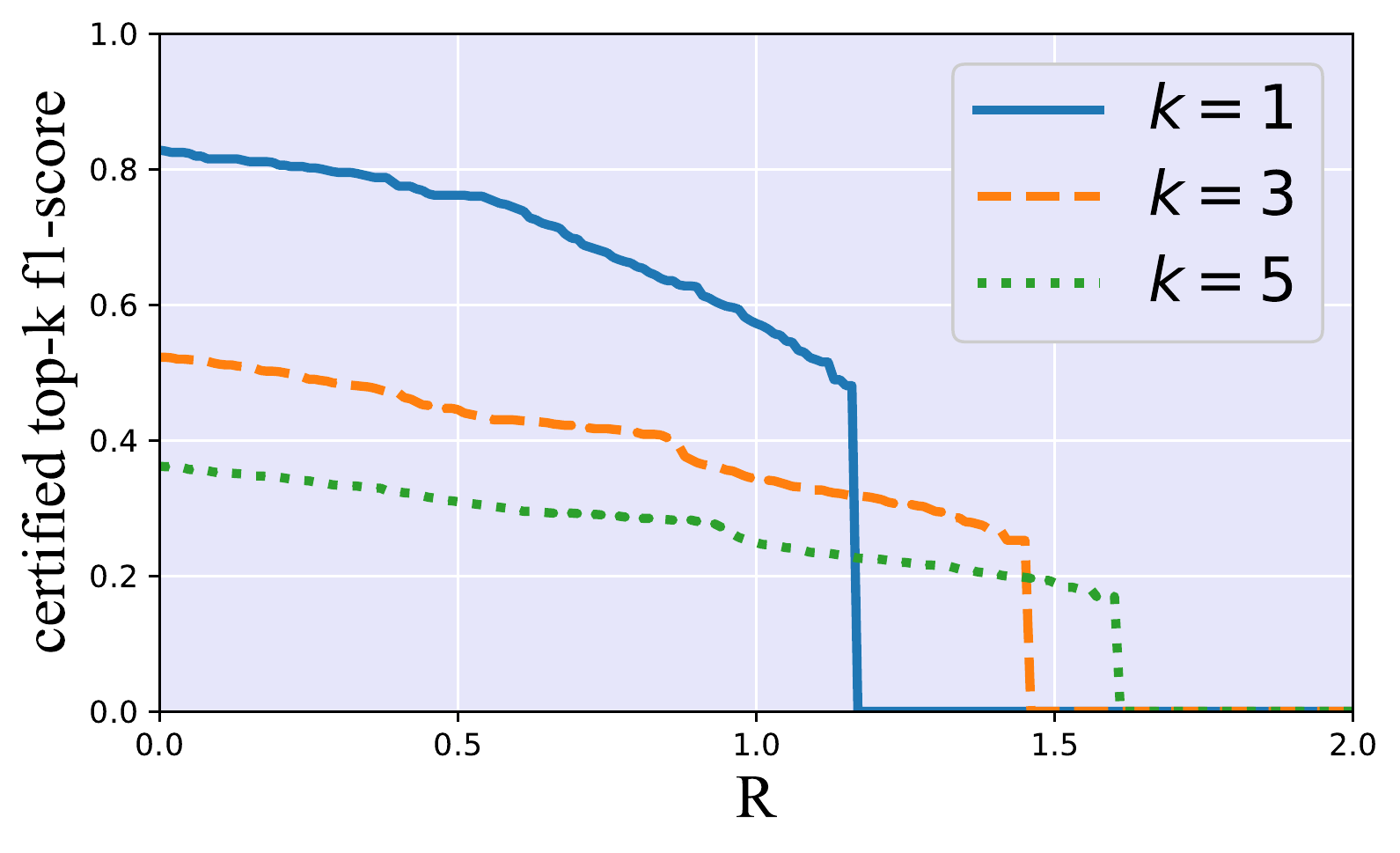}}
{\includegraphics[width=0.32\textwidth]{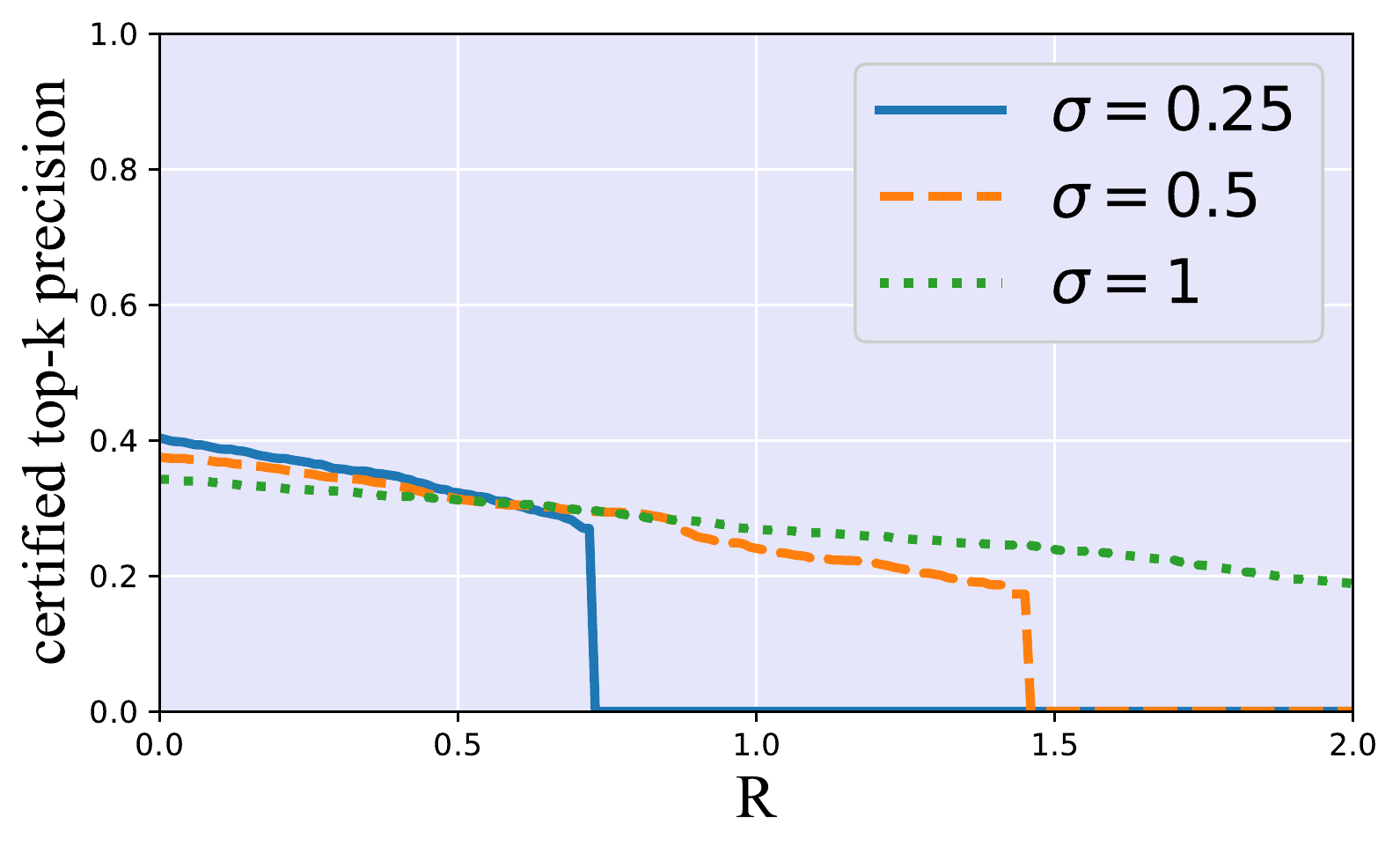}}
{\includegraphics[width=0.32\textwidth]{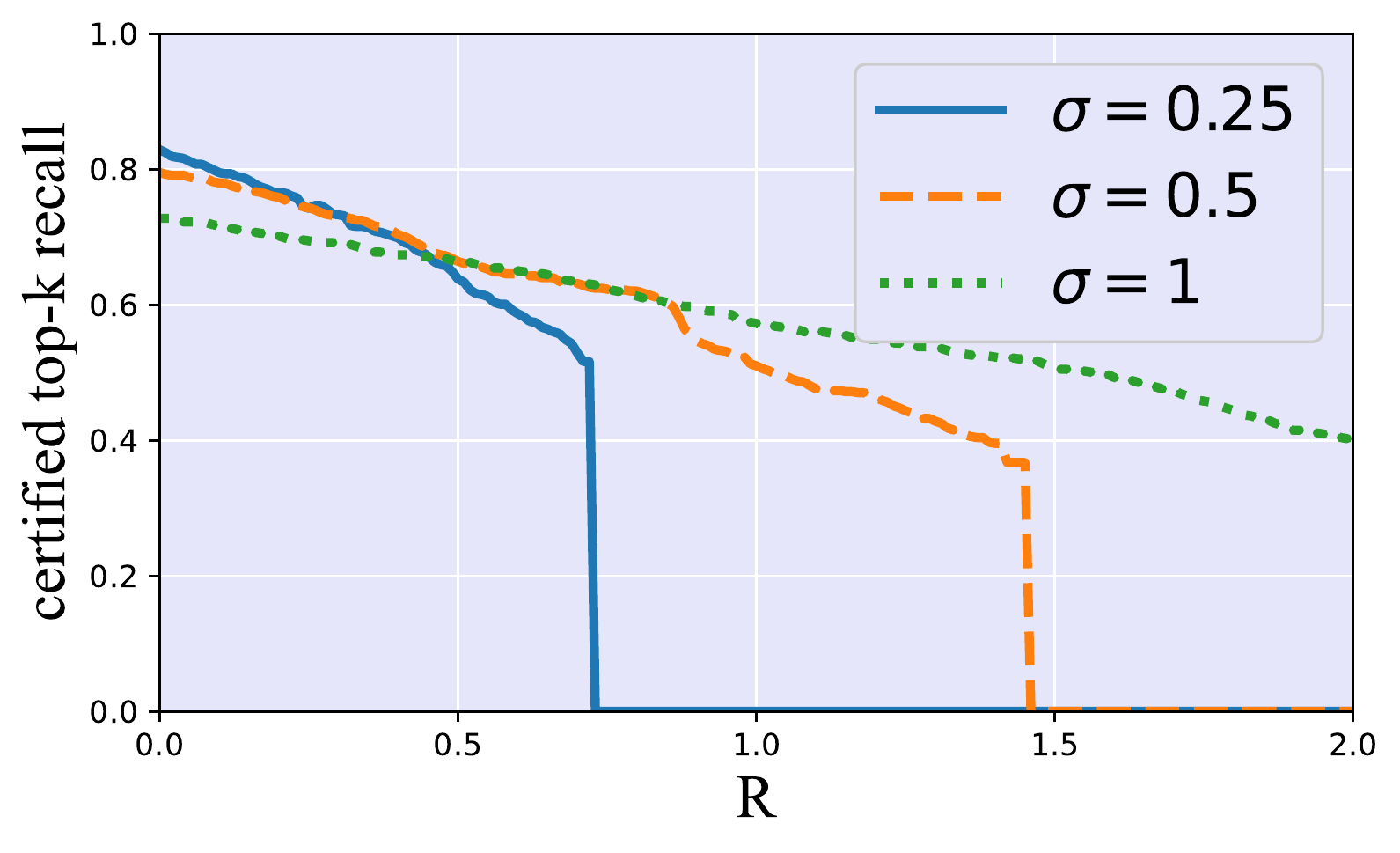}}
{\includegraphics[width=0.32\textwidth]{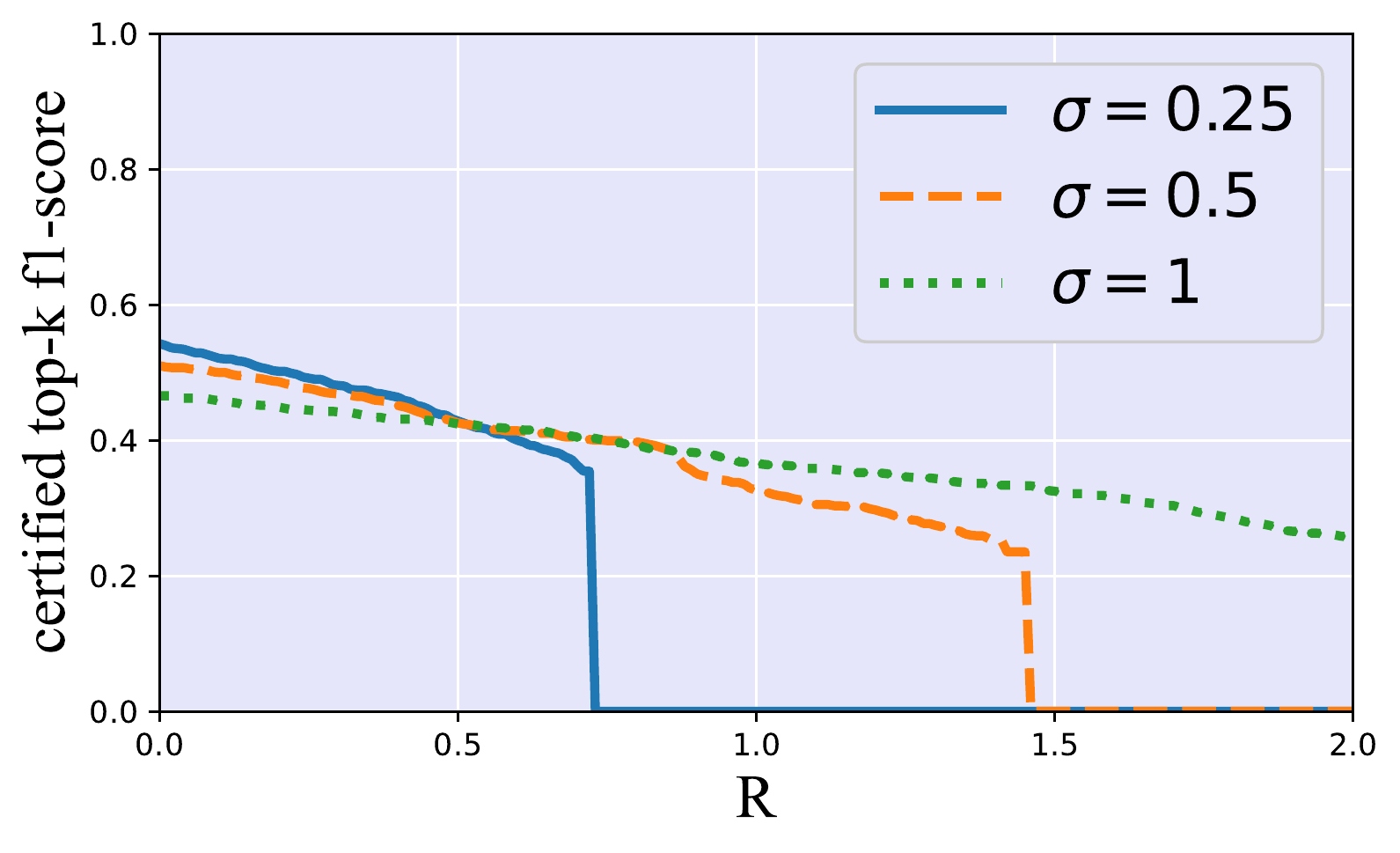}}
{\includegraphics[width=0.32\textwidth]{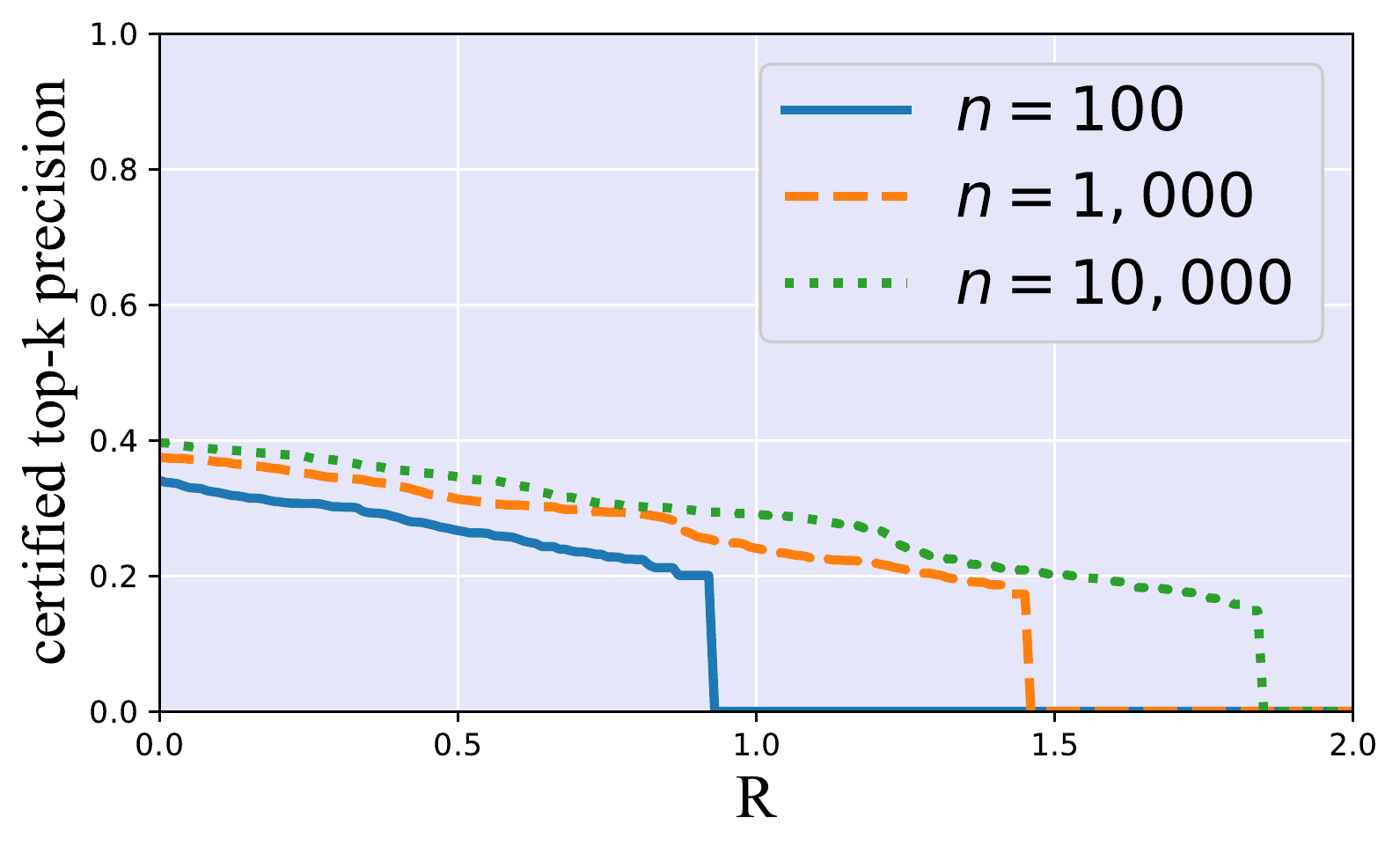}}
{\includegraphics[width=0.32\textwidth]{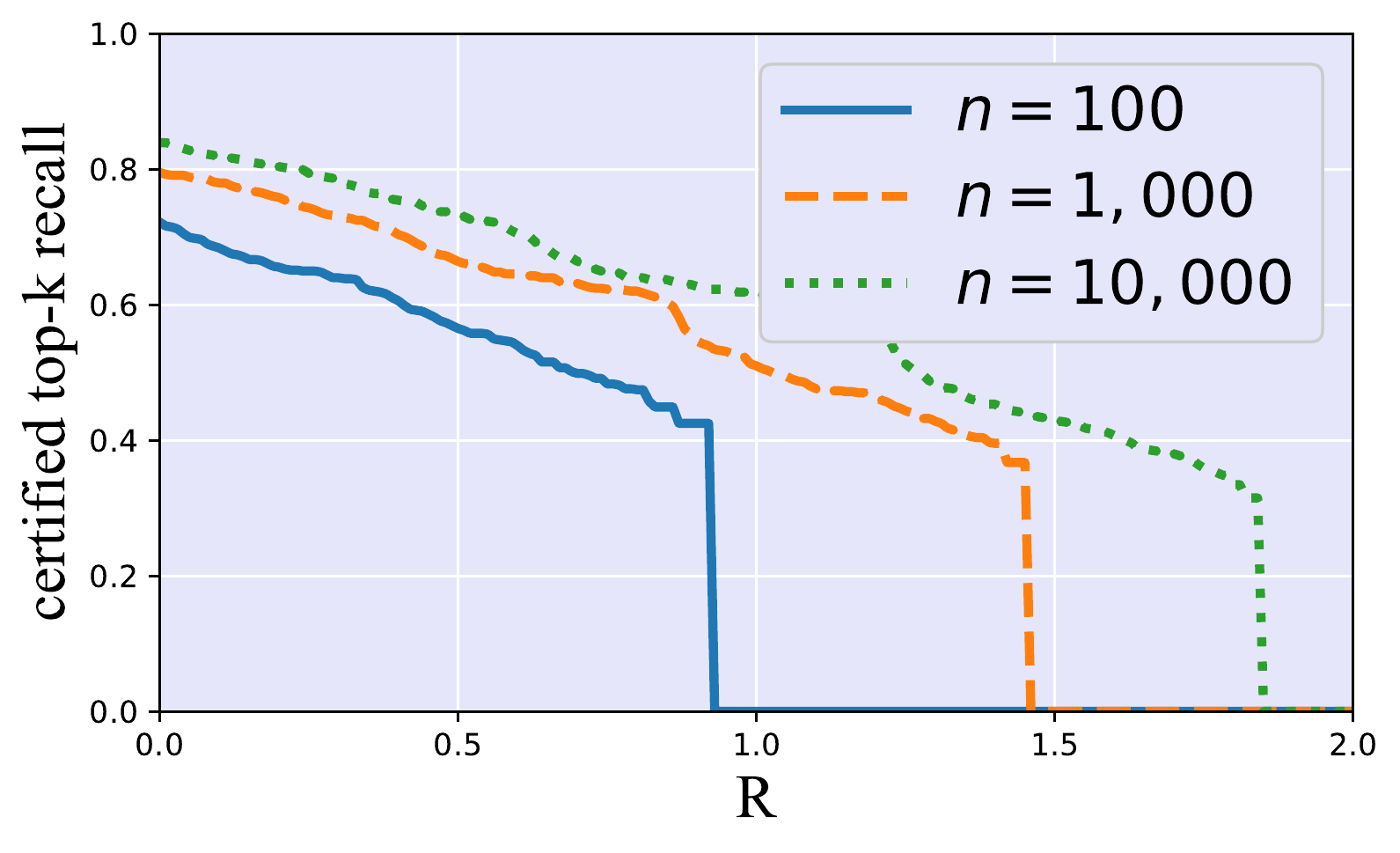}}
{\includegraphics[width=0.32\textwidth]{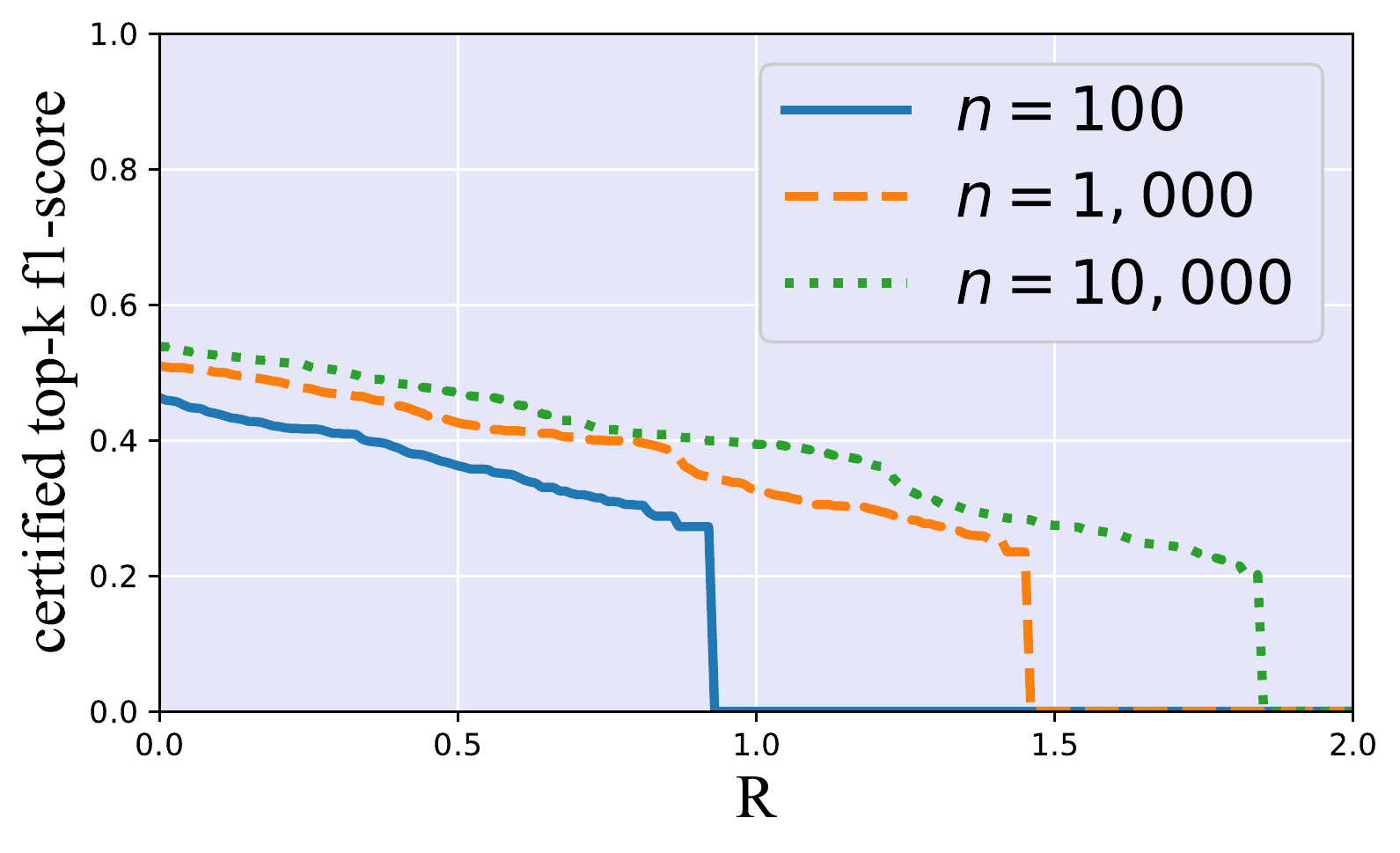}\vspace{-3mm}}

\subfloat[Certified top-$k$ precision@$R$]{\includegraphics[width=0.32\textwidth]{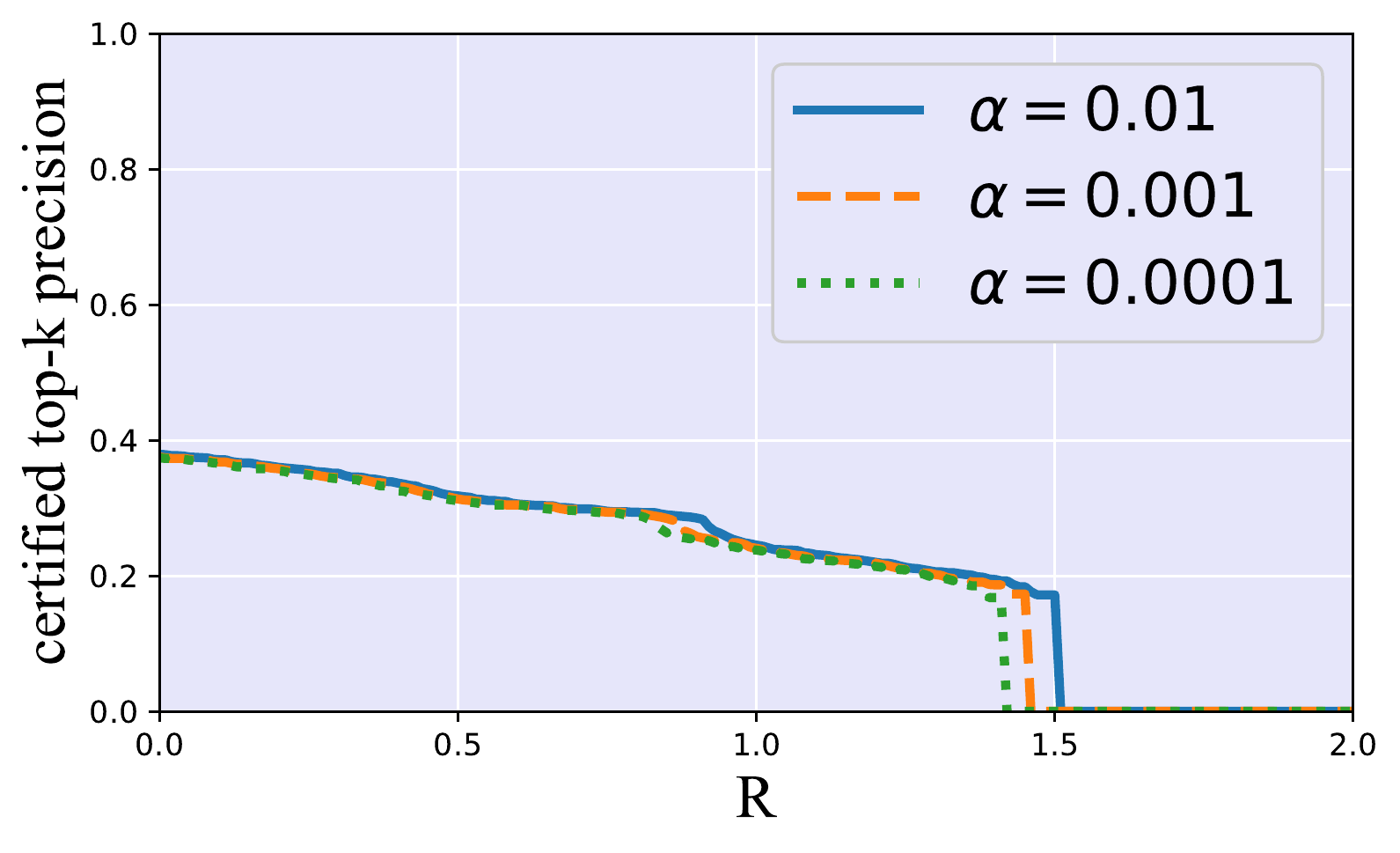}}\vspace{1mm}
\subfloat[Certified top-$k$ recall@$R$]{\includegraphics[width=0.32\textwidth]{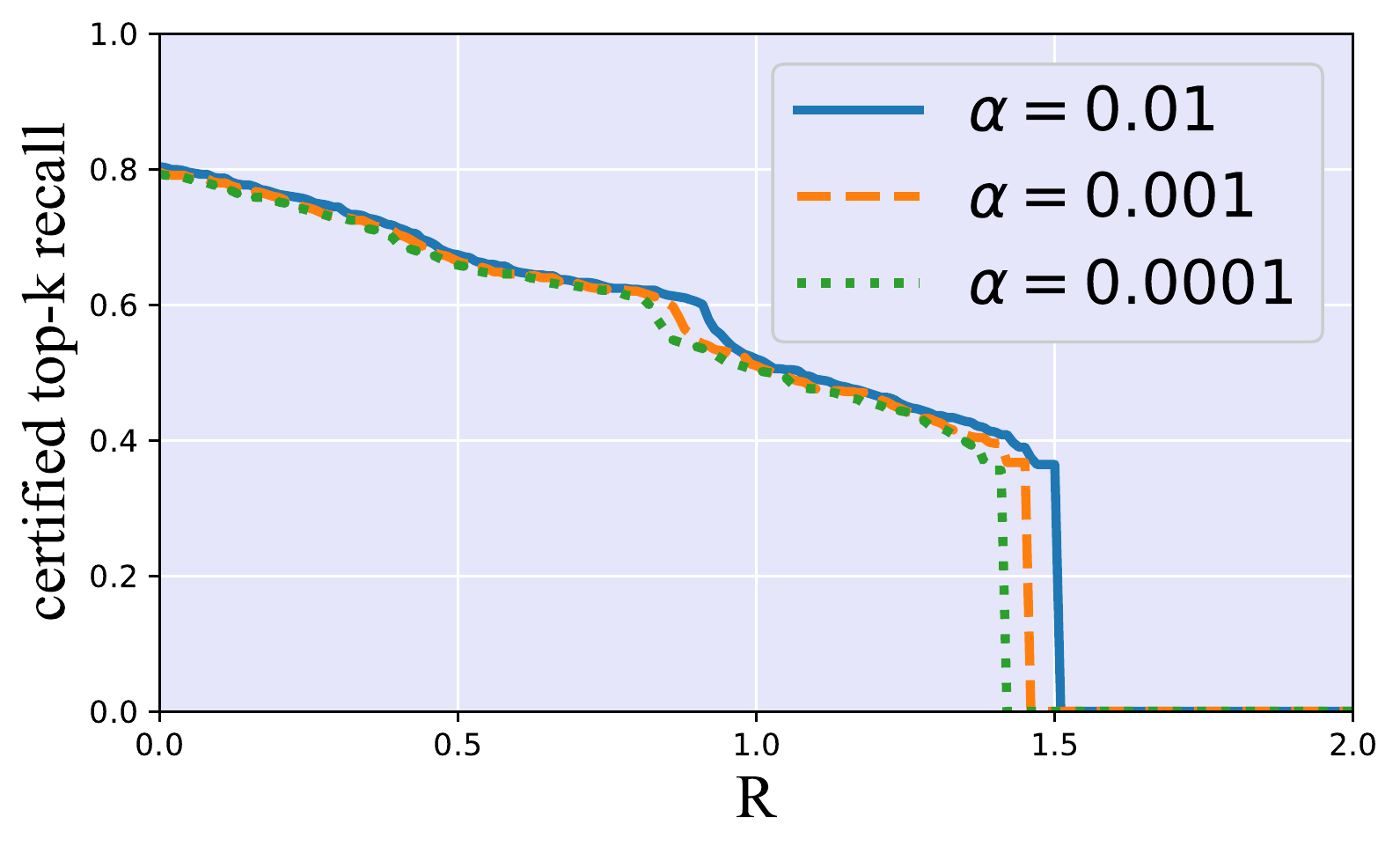}}\vspace{1mm}
\subfloat[Certified top-$k$ f1-score@$R$]{\includegraphics[width=0.32\textwidth]{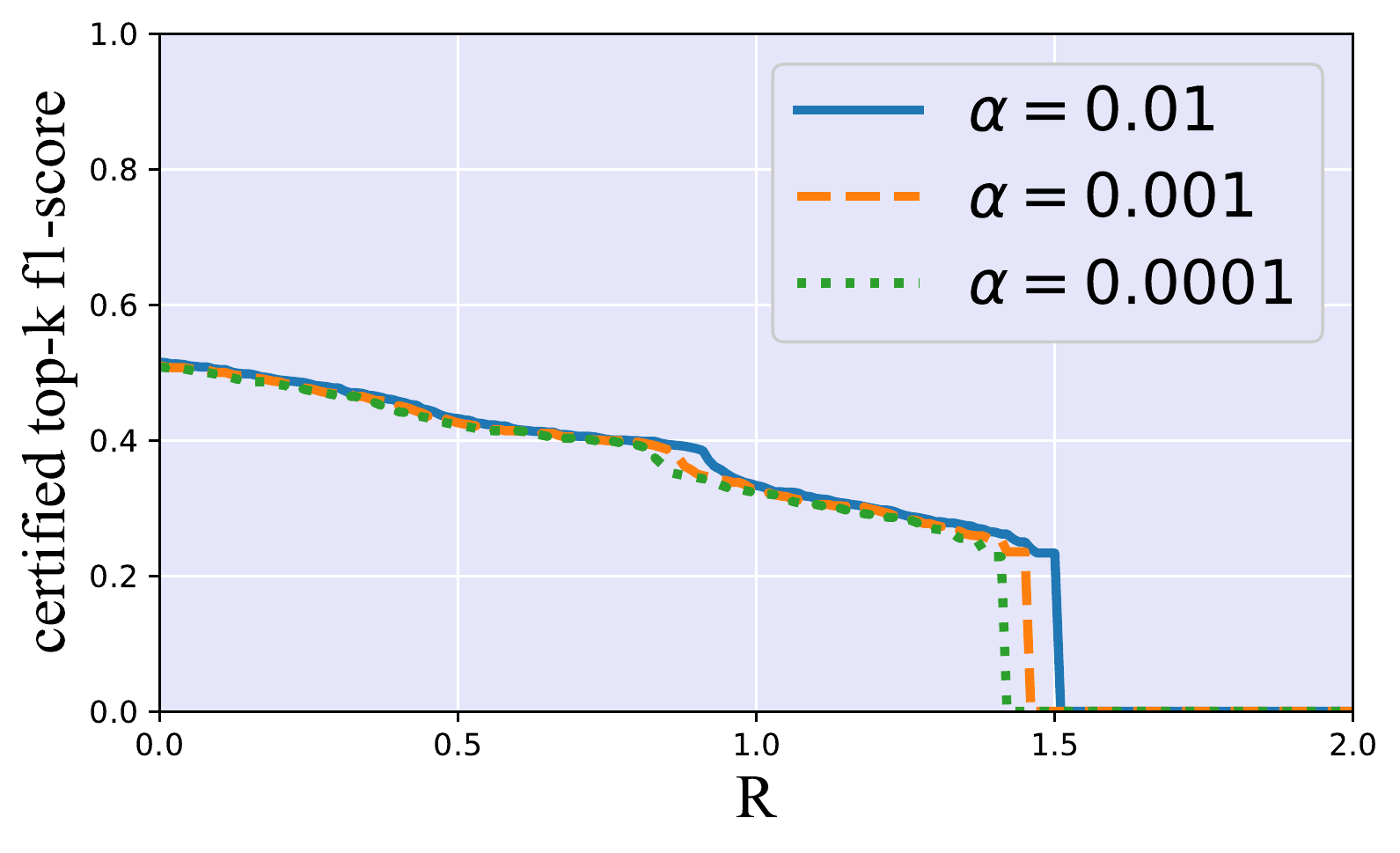}}\vspace{1mm}
\caption{Comparing with Jia et al.~\cite{jia2019certified} (first row). Impact of $k'$ (second row), $k$ (third row), $\sigma$ (fourth row), $n$ (fifth row), and $\alpha$ (sixth row) on certified top-$k$ precision@$R$, certified top-$k$ recall@$R$, and certified top-$k$ f1-score@$R$. The dataset is VOC 2007. The results on the other two datasets are shown in supplementary material.}
\label{result_of_voc2007}
\end{figure*}

\myparatight{Evaluation metrics} We use \emph{certified top-$k$ precision@$R$}, \emph{certified top-$k$ recall@$R$}, and \emph{certified top-$k$ f1-score@$R$} as evaluation metrics. We first define them for a single testing input and then for multiple testing inputs. In particular, given the certified intersection size $e$ for a testing input $\mathbf{x}$ under perturbation size $R$, we define them as follows:
$\text{certified top-$k$ precision}@R = e/k, \text{certified top-$k$ recall}@R = e/|L(\mathbf{x})|, \text{certified top-$k$ f1-score}@R = 2 \cdot e/(|L(\mathbf{x})|+k)$,
where $L(\mathbf{x})$ is the set of ground truth labels of $\mathbf{x}$ and the symbol $|\cdot|$ measures the number of elements in a set. 
Roughly speaking, certified top-$k$ precision@$R$ is the \emph{least} fraction of the $k$ predicted labels for an input that are ground truth labels, when the $\ell_2$-norm of the adversarial perturbation is at most $R$; certified top-$k$ recall@$R$ is the least fraction of the ground truth labels in $L(\mathbf{x})$ that are in the $k$ labels predicted by MultiGuard; and certified top-$k$ f1-score@$R$ measures a trade-off between certified top-$k$ precision@$R$ and certified top-$k$ recall@$R$.
Note that the above definition is  for a single testing input. Given a testing/validation dataset with multiple testing inputs, the overall certified top-$k$ precision@$R$, certified top-$k$ recall@$R$, and certified top-$k$ f1-score@$R$ are computed as the averages over the testing inputs. 

\myparatight{Compared methods} We compare with the state-of-the-art certified defense, namely Jia et al.~\cite{jia2019certified}, by extending it to multi-label classification. \CR{To the best of our knowledge, Jia et al. is the only certified defense that considers top-$k$ predictions against $\ell_2$-norm adversarial perturbations.} Given an arbitrary label, Jia et al. derived a certified radius such that the label is among the top-$k$ labels predicted by the smoothed classifier (Theorem 1 in Jia et al.). For each label in $L(\mathbf{x})$, we compute a certified radius. Given a perturbation size $R$, the certified intersection size $e$ can be computed as the number of labels in $L(\mathbf{x})$ whose certified radii are larger than $R$. 

\myparatight{Parameter setting} Our MultiGuard has the following parameters: $k'$ (the number of  labels predicted  by the base multi-label classifier for an input), $k$ (the number of  labels predicted by our smoothed multi-label classifier for an input), standard deviation $\sigma$,  number of noisy inputs $n$,  and confidence level $1-\alpha$. 
Unless otherwise mentioned, we adopt the following default parameters: $\alpha=0.001$, $n=1,000$, $\sigma=0.5$, $k'=1$ and $k=3$  for VOC 2007 dataset, and $k'=3$ and $k=10$ for MS-COCO and NUS-WIDE datasets, where we use larger $k'$ and $k$ for  MS-COCO and NUS-WIDE because they have more classes.

\subsection{Experimental Results}
\myparatight{Comparison results} The first row in Figure~\ref{result_of_voc2007} shows the comparison results on VOC 2007 dataset in default setting. We find that MultiGuard achieves higher certified top-$k$ precision@$R$, certified top-$k$ recall@$R$, and certified top-$k$ f1-score@$R$ than Jia et al.. 
\CR{MultiGuard is better than Jia et al. because MultiGuard jointly considers all ground truth labels, while Jia et al. can only  consider each ground truth label independently. For instance, suppose we have two ground truth labels;  it is very likely that both of them are not in the top-$k$ predicted labels when considered independently, but at least one of them is among the top-$k$ predicted labels when considered jointly. The intuition is that it is easier for an attacker to find an adversarial perturbation such that a certain label is not in the top-$k$ predicted labels, but it is more challenging for an attacker to find an adversarial perturbation such that both of the two labels are not in the top-$k$ predicted labels. Our observations on the other two datasets are similar, which can be found in Figure~\ref{compare_jia_coco_nus} in supplementary material.
}

\myparatight{Impact of $k'$} The second row in Figure~\ref{result_of_voc2007} shows the impact of $k'$ on VOC 2007 dataset. In particular, we find that a larger $k'$ achieves a larger certified top-$k$ precision@$R$ (or certified top-$k$ recall@$R$ or certified top-$k$ f1-score@$R$) without attacks (i.e., $R=0$), but the curves drop more quickly as $R$ increases (i.e., a larger $k'$ is less robust against adversarial examples as $R$ increases). The reason is that a larger $k'$ gives an attacker a larger attack space. Note that this is also reflected in our optimization problem in Equation~\ref{equation_to_solve_for_topk}. In particular, the left-hand (or right-hand) side in Equation~\ref{equation_to_solve_for_topk} decreases (or increases) as $k'$ increases, which leads to smaller certified intersection size $e$ as $k'$ increases. We have similar observations on MS-COCO and NUS-WIDE datasets. Please refer to Figure~\ref{impact_of_kprime_coco_nuswide} in supplementary material.

\myparatight{Impact of $k$} The third row in  Figure~\ref{result_of_voc2007} shows the impact of $k$ on VOC 2007 dataset. We have the following observations from our experimental results. First, $k$ achieves a tradeoff between the certified top-$k$ precision@$R$ without attacks and robustness. In particular, a larger $k$ gives us a smaller certified top-$k$ precision@$R$ without attacks, but the curve drops more slowly as $R$ increases (i.e., a larger $k$ is more robust against adversarial examples as $R$ increases). Second, we find that certified top-$k$ recall@$R$ increases as $k$ increases. The reason is that more labels are predicted by our MultiGuard as $k$ increases. Third, similar to certified top-$k$ precision@$R$, $k$ also achieves a tradeoff between the certified top-$k$ f1-score@$R$ without attacks and robustness. We also have those observations on the other two datasets. Please refer to Figure~\ref{impact_of_k_coco_nuswide} in supplementary material for details. 

\myparatight{Impact of $\sigma$} The fourth row in  Figure~\ref{result_of_voc2007} shows the impact of $\sigma$ on VOC 2007 dataset. The experimental results indicate that $\sigma$ achieves a tradeoff between the certified top-$k$ precision@$R$ (or certified top-$k$ recall@$R$ or certified top-$k$ f1-score@$R$) without attacks (i.e., $R=0$) and robustness. Specifically, a larger $\sigma$ leads to a smaller certified top-$k$ precision@$R$ (or certified top-$k$ recall@$R$ or certified top-$k$ f1-score@$R$) without attacks, but is more robust against adversarial examples as $R$ increases. The observations on the other two datasets are similar. Please refer to Figure~\ref{impact_of_sigma_coco_nuswide} in supplementary material.

\myparatight{Impact of $n$ and $\alpha$} The fifth and sixth rows in  Figure~\ref{result_of_voc2007} respectively show the impact of $n$ and $\alpha$ on VOC 2007 dataset. In particular, we find that certified top-$k$ precision@$R$ (or certified top-$k$ recall@$R$ or certified top-$k$ f1-score@$R$) increases as $n$ or $\alpha$ increases. The reason is that a larger $n$ or $\alpha$ gives us tighter lower or upper bounds of the label probabilities, which leads to larger certified intersection sizes. However, we find that the certified top-$k$ precision@$R$ (or certified top-$k$ recall@$R$ or certified top-$k$ f1-score@$R$) is insensitive to $\alpha$ once it is small enough and insensitive to $n$ once it is large enough.

\myparatight{Effectiveness of the second terms in Equation~\ref{equation_to_solve_for_topk}}\CR{We perform experiments under our default setting to validate the effectiveness of our  second terms in the left- and right-hand sides of Equation~\ref{equation_to_solve_for_topk}. Our results are as follows: with and without the second terms, MultiGuard respectively achieves 31.3\% and  23.6\% certified top-$k$ precision@$R$, 66.4\% and 48.8\% certified top-$k$ recall@$R$, as well as 42.6\% and 31.8\% certified top-$k$ f1-score@$R$, where the perturbation size  $R = 0.5$ and the dataset is VOC 2007. As the result shows, our second terms can significantly improve certified intersection size.}

\section{Conclusion}
In this paper, we propose MultiGuard, the first provably robust defense against adversarial examples for multi-label classification. In particular, we show that a certain number of ground truth labels of an input are provably predicted by our MultiGuard when the $\ell_2$-norm of the adversarial perturbation added to the input is bounded. Moreover, we design an algorithm to compute the certified robustness guarantees. Empirically, we conduct experiments on VOC 2007, MS-COCO, and NUS-WIDE benchmark datasets to validate our MultiGuard. Interesting future work to improve MultiGuard includes: 1) incorporating the knowledge of the base multi-label classifier, and 2) designing new  methods to train more accurate base multi-label classifiers.

\myparatight{\CR{Acknowledgements}} \CR{ We thank the anonymous reviewers for constructive comments. This work was supported by NSF under Grant No. 1937786 and 2125977 and the Army Research Office under Grant No. W911NF2110182.}

\bibliographystyle{plain}
\bibliography{egbib}

\section*{Checklist}


\begin{enumerate}

\item For all authors...
\begin{enumerate}
  \item Do the main claims made in the abstract and introduction accurately reflect the paper's contributions and scope?
    \answerYes{}
  \item Did you describe the limitations of your work?
    \answerYes{}
  \item Did you discuss any potential negative societal impacts of your work?
    \answerNA{}
  \item Have you read the ethics review guidelines and ensured that your paper conforms to them?
    \answerYes{}
\end{enumerate}

\item If you are including theoretical results...
\begin{enumerate}
  \item Did you state the full set of assumptions of all theoretical results?
    \answerYes{}
        \item Did you include complete proofs of all theoretical results?
    \answerYes{}
\end{enumerate}

\item If you ran experiments...
\begin{enumerate}
  \item Did you include the code, data, and instructions needed to reproduce the main experimental results (either in the supplemental material or as a URL)?
    \answerYes{}
  \item Did you specify all the training details (e.g., data splits, hyperparameters, how they were chosen)?
    \answerYes{}
        \item Did you report error bars (e.g., with respect to the random seed after running experiments multiple times)?
    \answerNA{We have formal guarantees for our experimental results}
        \item Did you include the total amount of compute and the type of resources used (e.g., type of GPUs, internal cluster, or cloud provider)?
    \answerNo{}
\end{enumerate}

\item If you are using existing assets (e.g., code, data, models) or curating/releasing new assets...
\begin{enumerate}
  \item If your work uses existing assets, did you cite the creators?
    \answerYes{}
  \item Did you mention the license of the assets?
    \answerYes{}
  \item Did you include any new assets either in the supplemental material or as a URL?
    \answerNo{}
  \item Did you discuss whether and how consent was obtained from people whose data you're using/curating?
    \answerYes{}
  \item Did you discuss whether the data you are using/curating contains personally identifiable information or offensive content?
    \answerNA{}
\end{enumerate}

\item If you used crowdsourcing or conducted research with human subjects...
\begin{enumerate}
  \item Did you include the full text of instructions given to participants and screenshots, if applicable?
    \answerNA{}
  \item Did you describe any potential participant risks, with links to Institutional Review Board (IRB) approvals, if applicable?
    \answerNA{}
  \item Did you include the estimated hourly wage paid to participants and the total amount spent on participant compensation?
    \answerNA{}
\end{enumerate}

\end{enumerate}


\appendix

\appendix

\newpage
\onecolumn

\section{Proof of Theorem~\ref{theorem_of_certified_radius}}
\label{proof_of_theorem_of_certified_radius}
We first define some notations that will be used in our proof. Given an input $\mathbf{x}$, we define the following two random variables: 
\begin{align}
   \label{definition_of_x}
   & \mathbf{X}=\mathbf{x}+\epsilon \sim \mathcal{N}(\mathbf{x},\sigma^{2}I), \\
     \label{definition_of_y}
   & \mathbf{Y}=\mathbf{x}+\delta+\epsilon \sim \mathcal{N}(\mathbf{x}+\delta,\sigma^{2}I),
\end{align}
where $\epsilon \sim \mathcal{N}(0,\sigma^2 I)$ and $\delta$ is an adversarial perturbation that has the same size with $\mathbf{x}$. The random variables $\mathbf{X}$ and $\mathbf{Y}$ represent random inputs obtained by adding isotropic Gaussian noise to the input $\mathbf{x}$ and its perturbed version $\mathbf{x}+\delta$, respectively. 
Cohen et al.~\cite{cohen2019certified} applied the standard Neyman-Pearson lemma~\cite{neyman1933ix} to the above two random variables, and obtained the following two lemmas: 
\begin{lemma}[Neyman-Pearson lemma for Gaussian with different means]
\label{lemma_np_gaussian}
Let $\mathbf{X}\sim \mathcal{N}(\mathbf{x},\sigma^2 I)$, $\mathbf{Y}\sim \mathcal{N}(\mathbf{x}+\delta,\sigma^2 I)$, and $F:\mathbb{R}^d\xrightarrow{} \{0,1\}$ be a random or deterministic function. Then, we have the following:  

(1) If $W=\{\mathbf{w}\in \mathbb{R}^d: \delta^{T}\mathbf{w} \leq \beta  \}$ for some $\beta$ and $\text{Pr}(F(\mathbf{X})=1)\geq \text{Pr}(\mathbf{X}\in W)$, then $\text{Pr}(F(\mathbf{Y})=1)\geq \text{Pr}(\mathbf{Y}\in W)$.

(2) If $W=\{\mathbf{w}\in \mathbb{R}^d: \delta^{T}\mathbf{w} \geq \beta  \}$ for some $\beta$ and $\text{Pr}(F(\mathbf{X})=1)\leq \text{Pr}(\mathbf{X}\in W)$, then $\text{Pr}(F(\mathbf{Y})=1)\leq \text{Pr}(\mathbf{Y}\in W)$.
\end{lemma}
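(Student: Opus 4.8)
The plan is to deduce this lemma from the general measure-theoretic Neyman--Pearson lemma~\cite{neyman1933ix}, exactly as Cohen et al.~\cite{cohen2019certified} do, by recognizing the half-spaces $W$ as the likelihood-ratio level sets of the two Gaussians. Recall the general form: for random variables $\mathbf{X},\mathbf{Y}$ with densities $\mu_{\mathbf{X}},\mu_{\mathbf{Y}}$ and any (possibly randomized) test $F$, if $S=\{\mathbf{w}:\mu_{\mathbf{Y}}(\mathbf{w})/\mu_{\mathbf{X}}(\mathbf{w})\leq t\}$ and $\text{Pr}(F(\mathbf{X})=1)\geq\text{Pr}(\mathbf{X}\in S)$, then $\text{Pr}(F(\mathbf{Y})=1)\geq\text{Pr}(\mathbf{Y}\in S)$, with the symmetric statement holding for superlevel sets under reversed inequalities. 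Thus it suffices to show that each half-space in the statement is precisely such a level set, which reduces the entire lemma to a single Gaussian density computation.

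First I would compute the likelihood ratio. Writing out the two isotropic Gaussian densities, cancelling the common normalizing constant, and expanding the squared norms so that the quadratic-in-$\mathbf{w}$ terms cancel, the ratio becomes
\[
\frac{\mu_{\mathbf{Y}}(\mathbf{w})}{\mu_{\mathbf{X}}(\mathbf{w})}=\exp\!\left(\frac{\delta^{T}\mathbf{w}-\delta^{T}\mathbf{x}-\tfrac{1}{2}\lnorm{\delta}_2^{2}}{\sigma^{2}}\right),
\]
which is strictly increasing in $\delta^{T}\mathbf{w}$ whenever $\delta\neq\mathbf{0}$. Consequently the sublevel set $\{\mu_{\mathbf{Y}}/\mu_{\mathbf{X}}\leq t\}$ coincides exactly with a half-space $\{\delta^{T}\mathbf{w}\leq\beta\}$, and conversely every half-space of this form arises from a \emph{unique} threshold $t>0$ by solving the affine relation for $\beta$ in terms of $\sigma^{2}\ln t$. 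This gives a bijection between the thresholds $\beta$ used in the statement and the thresholds $t$ used in the general lemma, pairing part~(1) with the sublevel-set case and part~(2) with the superlevel-set case. The degenerate case $\delta=\mathbf{0}$ is immediate, since then $\mathbf{X}$ and $\mathbf{Y}$ are identically distributed and all four probabilities collapse.

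Having matched the regions, I would invoke the general lemma directly; alternatively, to keep the argument self-contained, I would give the short integral proof. Representing $F$ by its acceptance-probability function $h(\mathbf{w})=\text{Pr}(F(\mathbf{w})=1)\in[0,1]$ (which uniformly covers both the deterministic and the randomized case), I would write $\text{Pr}(F(\mathbf{Y})=1)-\text{Pr}(\mathbf{Y}\in W)=\int\big(h(\mathbf{w})-\mathbb{I}[\mathbf{w}\in W]\big)\,\mu_{\mathbf{Y}}(\mathbf{w})\,d\mathbf{w}$, split the integral over $W$ and $W^{c}$, and use $0\leq h\leq 1$ together with $\mu_{\mathbf{Y}}\leq t\,\mu_{\mathbf{X}}$ on $W$ and $\mu_{\mathbf{Y}}\geq t\,\mu_{\mathbf{X}}$ on $W^{c}$ (the sign of $h-\mathbb{I}[\mathbf{w}\in W]$ flipping exactly where needed) to bound the expression below by $t\big(\text{Pr}(F(\mathbf{X})=1)-\text{Pr}(\mathbf{X}\in W)\big)\geq 0$. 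Part~(2) then follows by the mirror-image argument with all inequalities reversed.

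The main obstacle is essentially bookkeeping rather than any deep difficulty: I must verify the exact correspondence between the half-space thresholds $\beta$ and the likelihood-ratio thresholds $t$, and check that the boundary hyperplane $\{\delta^{T}\mathbf{w}=\beta\}$ has Lebesgue measure zero---hence zero probability under both $\mathbf{X}$ and $\mathbf{Y}$---so that the strict-versus-nonstrict distinction on the boundary is irrelevant and the level sets and half-spaces agree up to a null set. The only genuinely new content beyond the classical lemma is the Gaussian likelihood-ratio computation that converts abstract density level sets into the concrete linear half-spaces $\{\delta^{T}\mathbf{w}\leq\beta\}$ and $\{\delta^{T}\mathbf{w}\geq\beta\}$ appearing in the statement.
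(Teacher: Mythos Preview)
Your proposal is correct and matches the approach the paper relies on: the paper does not prove Lemma~\ref{lemma_np_gaussian} itself but simply attributes it to Cohen et al.~\cite{cohen2019certified}, whose argument is precisely the one you outline---compute the Gaussian likelihood ratio, observe it is monotone in $\delta^{T}\mathbf{w}$ so that its level sets are the stated half-spaces, and invoke (or reprove via the split-integral argument) the classical Neyman--Pearson lemma. Your handling of the randomized-$F$ case via the acceptance probability $h(\mathbf{w})$ and the degenerate $\delta=\mathbf{0}$ case is also correct and slightly more explicit than what appears in the cited source.
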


\begin{lemma}
\label{lemma_compute_algebra}
Given an input $\mathbf{x}$, a real number $q\in [0,1]$, as well as regions $\mathcal{A}$ and $\mathcal{B}$ defined as follows: 
\begin{align}
    &\mathcal{A}=\{\mathbf{w}: \delta^{T}(\mathbf{w}-\mathbf{x})\leq \sigma\lnorm{\delta}_2\Phi^{-1}(q)\}, \\
    &\mathcal{B}=\{\mathbf{w}: \delta^{T}(\mathbf{w}-\mathbf{x})\geq \sigma\lnorm{\delta}_2\Phi^{-1}(1-q)\}, 
\end{align}
we have the following: 
\begin{align}
&\text{Pr}(\mathbf{X}\in\mathcal{A})= q, \\ &\text{Pr}(\mathbf{X}\in\mathcal{B})= q, \\
&\text{Pr}(\mathbf{Y}\in \mathcal{A})=\Phi(\Phi^{-1}(q)-\frac{\lnorm{\delta}_2}{\sigma}), \\
&\text{Pr}(\mathbf{Y}\in \mathcal{B})=\Phi(\Phi^{-1}(q)+\frac{\lnorm{\delta}_2}{\sigma}).
\end{align}
\end{lemma}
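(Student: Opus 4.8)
The plan is to reduce all four probability computations to one-dimensional facts about the standard Gaussian, exploiting that each set is defined by the single linear functional $\delta^{T}(\mathbf{w}-\mathbf{x})$ of $\mathbf{w}$. First I would observe that whenever $\mathbf{w}$ is drawn from an isotropic Gaussian, the scalar $\delta^{T}(\mathbf{w}-\mathbf{x})$ is itself a one-dimensional Gaussian, so each of $\text{Pr}(\mathbf{X}\in\mathcal{A})$, $\text{Pr}(\mathbf{X}\in\mathcal{B})$, $\text{Pr}(\mathbf{Y}\in\mathcal{A})$, and $\text{Pr}(\mathbf{Y}\in\mathcal{B})$ is just a tail probability of a scalar normal variable. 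Each can then be evaluated through $\Phi$ after standardizing, and the thresholds in $\mathcal{A}$ and $\mathcal{B}$ were deliberately written in terms of $\sigma\lnorm{\delta}_2\Phi^{-1}(\cdot)$ precisely so that the standardized thresholds come out as clean $\Phi^{-1}$ values.

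For the two statements about $\mathbf{X}$, I would compute the mean and variance of $\delta^{T}(\mathbf{X}-\mathbf{x})$. Since $\mathbf{X}\sim\mathcal{N}(\mathbf{x},\sigma^{2}I)$, the mean is $\delta^{T}(\mathbf{x}-\mathbf{x})=0$ and the variance is $\delta^{T}(\sigma^{2}I)\delta=\sigma^{2}\lnorm{\delta}_2^{2}$. Dividing by $\sigma\lnorm{\delta}_2$ produces a standard normal $Z$, under which $\mathcal{A}$ becomes the event $Z\leq\Phi^{-1}(q)$, giving $\text{Pr}(\mathbf{X}\in\mathcal{A})=\Phi(\Phi^{-1}(q))=q$ immediately from the definition of $\Phi$. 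The region $\mathcal{B}$ becomes $Z\geq\Phi^{-1}(1-q)$, so $\text{Pr}(\mathbf{X}\in\mathcal{B})=1-\Phi(\Phi^{-1}(1-q))=1-(1-q)=q$.

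For the two statements about $\mathbf{Y}$, the only change is in the mean. With $\mathbf{Y}\sim\mathcal{N}(\mathbf{x}+\delta,\sigma^{2}I)$, the scalar $\delta^{T}(\mathbf{Y}-\mathbf{x})$ has mean $\delta^{T}\delta=\lnorm{\delta}_2^{2}$ while its variance is still $\sigma^{2}\lnorm{\delta}_2^{2}$. Standardizing, the threshold defining $\mathcal{A}$ shifts from $\Phi^{-1}(q)$ to $\Phi^{-1}(q)-\lnorm{\delta}_2/\sigma$, yielding $\text{Pr}(\mathbf{Y}\in\mathcal{A})=\Phi(\Phi^{-1}(q)-\lnorm{\delta}_2/\sigma)$. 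The same substitution for $\mathcal{B}$ gives $1-\Phi(\Phi^{-1}(1-q)-\lnorm{\delta}_2/\sigma)$, and the final move is to rewrite this in the stated form using the reflection identities $1-\Phi(t)=\Phi(-t)$ and $\Phi^{-1}(1-q)=-\Phi^{-1}(q)$, which together collapse it to $\Phi(\Phi^{-1}(q)+\lnorm{\delta}_2/\sigma)$.

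The derivation is essentially mechanical, so I do not expect a genuine obstacle; the only step requiring care rather than rote substitution is this last simplification for $\text{Pr}(\mathbf{Y}\in\mathcal{B})$, where the symmetry of the standard normal must be applied correctly to match the desired closed form. I would also note that the standardization by $\sigma\lnorm{\delta}_2$ presumes $\lnorm{\delta}_2>0$, which is the relevant case for a nontrivial adversarial perturbation; the trivial case $\delta=\mathbf{0}$ is degenerate and is excluded here, consistent with the surrounding development.
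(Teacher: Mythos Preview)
Your proposal is correct and is exactly the standard argument: the paper itself does not give an independent proof of this lemma but simply defers to Cohen et al.~\cite{cohen2019certified}, whose computation proceeds by the same reduction of $\delta^{T}(\mathbf{w}-\mathbf{x})$ to a one-dimensional Gaussian followed by standardization. Your handling of the four cases, including the symmetry step for $\text{Pr}(\mathbf{Y}\in\mathcal{B})$, matches that derivation.
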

\begin{proof}
Please refer to \cite{cohen2019certified}. 
\end{proof}

Next, we first generalize the Neyman-Pearson lemma to the case of multiple functions and then derive the lemmas that will be used in our proof.

\begin{lemma}
\label{lemma_np_general}
Let $\mathbf{X}$, $\mathbf{Y}$ be two random variables whose probability densities are respectively $\text{Pr}(\mathbf{X}=\mathbf{w})$ and $\text{Pr}(\mathbf{Y}=\mathbf{w})$, where $\mathbf{w}\in \mathbb{R}^{d}$. Let $F_{1}, F_{2}, \cdots, F_{t}:\mathbb{R}^d\xrightarrow{} \{0,1\}$ be $t$ random or deterministic functions. Let $k^{\prime}$ be an integer such that:
\begin{align}
 \sum_{i=1}^{t}F_{i}(1|\mathbf{w})\leq k^{\prime}, \forall \mathbf{w}\in \mathbb{R}^{d},
\end{align}
where $F_{i}(1|\mathbf{w})$ denotes the probability that $F_{i}(\mathbf{w})=1$. 
Then, we have the following:  

(1) If $W=\{\mathbf{w}\in \mathbb{R}^d: \text{Pr}(\mathbf{Y}=\mathbf{w})/\text{Pr}(\mathbf{X}=\mathbf{w}) \leq \mu  \}$ for some $\mu > 0$ and $\frac{\sum_{i=1}^{t}\text{Pr}(F_{i}(\mathbf{X})=1)}{k^{\prime}}\geq \text{Pr}(\mathbf{X}\in W)$, then $\frac{\sum_{i=1}^{t}\text{Pr}(F_{i}(\mathbf{Y})=1)}{k^{\prime}}\geq \text{Pr}(\mathbf{Y}\in W)$. 

(2) If $W=\{\mathbf{w}\in \mathbb{R}^d: \text{Pr}(\mathbf{Y}=\mathbf{w})/\text{Pr}(\mathbf{X}=\mathbf{w}) \geq \mu  \}$ for some $\mu > 0$ and $\frac{\sum_{i=1}^{t}\text{Pr}(F_{i}(\mathbf{X})=1)}{k^{\prime}}\leq \text{Pr}(\mathbf{X}\in W)$, then $\frac{\sum_{i=1}^{t}\text{Pr}(F_{i}(\mathbf{Y})=1)}{k^{\prime}}\leq \text{Pr}(\mathbf{Y}\in W)$. 
\end{lemma}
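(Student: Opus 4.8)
The plan is to reduce the multi-function statement to the randomized (fractional) version of the Neyman--Pearson argument by bundling the $t$ functions into a single $[0,1]$-valued test. Concretely, I would define $h:\mathbb{R}^d\to[0,1]$ by $h(\mathbf{w})=\frac{1}{k^{\prime}}\sum_{i=1}^{t}F_{i}(1|\mathbf{w})$; the hypothesis $\sum_{i=1}^{t}F_{i}(1|\mathbf{w})\leq k^{\prime}$ for all $\mathbf{w}$ is exactly what guarantees $0\leq h(\mathbf{w})\leq 1$. By linearity of expectation, $\mathbb{E}[h(\mathbf{X})]=\frac{1}{k^{\prime}}\sum_{i=1}^{t}\text{Pr}(F_{i}(\mathbf{X})=1)$ and likewise for $\mathbf{Y}$ (using $\text{Pr}(F_{i}(\mathbf{X})=1)=\mathbb{E}[F_{i}(1|\mathbf{X})]$ to handle randomized $F_i$). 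Thus the assumption of part (1) becomes $\mathbb{E}[h(\mathbf{X})]\geq \text{Pr}(\mathbf{X}\in W)$ and its conclusion becomes $\mathbb{E}[h(\mathbf{Y})]\geq \text{Pr}(\mathbf{Y}\in W)$, so it suffices to prove the single-test inequality for the fractional test $h$.

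I would then prove that fractional Neyman--Pearson inequality directly. Write $\mathbb{E}[h(\mathbf{Y})]-\text{Pr}(\mathbf{Y}\in W)=\int_{\mathbb{R}^d}\big(h(\mathbf{w})-\mathbb{I}(\mathbf{w}\in W)\big)\,\text{Pr}(\mathbf{Y}=\mathbf{w})\,d\mathbf{w}$ and split the integral over $W$ and $W^{c}=\{\mathbf{w}:\text{Pr}(\mathbf{Y}=\mathbf{w})/\text{Pr}(\mathbf{X}=\mathbf{w})>\mu\}$. On $W$ we have $h(\mathbf{w})-1\leq 0$ and $\text{Pr}(\mathbf{Y}=\mathbf{w})\leq\mu\,\text{Pr}(\mathbf{X}=\mathbf{w})$, hence $(h(\mathbf{w})-1)\text{Pr}(\mathbf{Y}=\mathbf{w})\geq\mu\,(h(\mathbf{w})-1)\text{Pr}(\mathbf{X}=\mathbf{w})$; on $W^{c}$ we have $h(\mathbf{w})\geq 0$ and $\text{Pr}(\mathbf{Y}=\mathbf{w})\geq\mu\,\text{Pr}(\mathbf{X}=\mathbf{w})$, hence $h(\mathbf{w})\text{Pr}(\mathbf{Y}=\mathbf{w})\geq\mu\, h(\mathbf{w})\text{Pr}(\mathbf{X}=\mathbf{w})$. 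Adding the two pieces gives $\mathbb{E}[h(\mathbf{Y})]-\text{Pr}(\mathbf{Y}\in W)\geq\mu\big(\mathbb{E}[h(\mathbf{X})]-\text{Pr}(\mathbf{X}\in W)\big)\geq 0$, using $\mu>0$ and the hypothesis. Part (2) follows from the mirror-image computation: with $W$ now the superlevel set of the likelihood ratio, the signs of $h-\mathbb{I}_{W}$ on $W$ and on $W^{c}$ flip relative to the likelihood-ratio bounds, so the same chain runs in the reverse direction and yields $\mathbb{E}[h(\mathbf{Y})]-\text{Pr}(\mathbf{Y}\in W)\leq\mu\big(\mathbb{E}[h(\mathbf{X})]-\text{Pr}(\mathbf{X}\in W)\big)\leq 0$.

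I do not expect a deep obstacle here; the statement is essentially the randomized-test form of Neyman--Pearson with the $F_i$'s packed into one $[0,1]$-valued test, and the main obstacle is really just the bookkeeping: (i) checking $h\in[0,1]$, which is the only place the constraint $\sum_i F_i(1|\mathbf{w})\leq k^{\prime}$ is used and is precisely what later lets the $k^{\prime}/u$ scaling factor appear in Theorem~\ref{theorem_of_certified_radius}; (ii) treating the boundary set $\{\mathbf{w}:\text{Pr}(\mathbf{Y}=\mathbf{w})=\mu\,\text{Pr}(\mathbf{X}=\mathbf{w})\}$ and points with $\text{Pr}(\mathbf{X}=\mathbf{w})=0$, which can be absorbed into either region without changing any inequality; and (iii) keeping the direction of the inequality straight when $h-1$ is non-positive. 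Once the lemma is established, specializing to $\mathbf{X}\sim\mathcal{N}(\mathbf{x},\sigma^{2}I)$ and $\mathbf{Y}\sim\mathcal{N}(\mathbf{x}+\delta,\sigma^{2}I)$ turns $W$ into a half-space of the form in Lemma~\ref{lemma_np_gaussian}, since the Gaussian likelihood ratio is monotone in $\delta^{T}\mathbf{w}$, which is how the lemma will be invoked in the proof of the theorem.
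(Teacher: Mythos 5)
Your proposal is correct and follows essentially the same route as the paper's proof: the paper also splits the integral of $\frac{1}{k'}\sum_i F_i(1|\mathbf{w})$ (your $h$) over $W$ and $W^c$, uses the likelihood-ratio bound on each piece together with $0\le h\le 1$ (the only place the constraint $\sum_i F_i(1|\mathbf{w})\le k'$ enters), and concludes via the same chain $\mathbb{E}[h(\mathbf{Y})]-\text{Pr}(\mathbf{Y}\in W)\ge\mu\left(\mathbb{E}[h(\mathbf{X})]-\text{Pr}(\mathbf{X}\in W)\right)\ge 0$. Your framing of the argument as the randomized-test Neyman--Pearson inequality applied to a single bundled test is a cleaner way to say the same thing.
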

\begin{proof}
We first prove part (1). For convenience, we denote the complement of $W$ as $W^{c}$. Then, we have the following: 
\begin{align}
  & \frac{\sum_{i=1}^{t}\text{Pr}(F_{i}(\mathbf{Y})=1)}{k^{\prime}}- \text{Pr}(\mathbf{Y}\in W) \\
=&\int_{\mathbb{R}^{d}}\frac{\sum_{i=1}^{t}F_{i}(1|\mathbf{w})}{k^{\prime}}\cdot \text{Pr}(\mathbf{Y}=\mathbf{w})d\mathbf{w} - \int_{W}\text{Pr}(\mathbf{Y}=\mathbf{w})d\mathbf{w} \\
=&\int_{W^{c}}\frac{\sum_{i=1}^{t}F_{i}(1|\mathbf{w})}{k^{\prime}}\cdot \text{Pr}(\mathbf{Y}=\mathbf{w})d\mathbf{w} + 
\int_{W}\frac{\sum_{i=1}^{t}F_{i}(1|\mathbf{w})}{k^{\prime}}\cdot \text{Pr}(\mathbf{Y}=\mathbf{w})d\mathbf{w} - \int_{W}\text{Pr}(\mathbf{Y}=\mathbf{w})d\mathbf{w} \\
=&\int_{W^{c}}\frac{\sum_{i=1}^{t}F_{i}(1|\mathbf{w})}{k^{\prime}}\cdot \text{Pr}(\mathbf{Y}=\mathbf{w})d\mathbf{w} 
\label{lemma_np_general_e1}
- \int_{W}(1-\frac{\sum_{i=1}^{t}F_{i}(1|\mathbf{w})}{k^{\prime}})\cdot\text{Pr}(\mathbf{Y}=\mathbf{w})d\mathbf{w} \\
\label{lemma_np_general_e2}
\geq&\mu \cdot[\int_{W^{c}}\frac{\sum_{i=1}^{t}F_{i}(1|\mathbf{w})}{k^{\prime}}\cdot \text{Pr}(\mathbf{X}=\mathbf{w})d\mathbf{w} 
- \int_{W}(1-\frac{\sum_{i=1}^{t}F_{i}(1|\mathbf{w})}{k^{\prime}})\cdot\text{Pr}(\mathbf{X}=\mathbf{w})d\mathbf{w}] \\
=&\mu \cdot[\int_{W^{c}}\frac{\sum_{i=1}^{t}F_{i}(1|\mathbf{w})}{k^{\prime}}\cdot \text{Pr}(\mathbf{X}=\mathbf{w})d\mathbf{w} + 
\int_{W}\frac{\sum_{i=1}^{t}F_{i}(1|\mathbf{w})}{k^{\prime}}\cdot \text{Pr}(\mathbf{X}=\mathbf{w})d\mathbf{w} - \int_{W}\text{Pr}(\mathbf{X}=\mathbf{w})d\mathbf{w}] \\
=&\mu \cdot[\int_{\mathbb{R}^{d}}\frac{\sum_{i=1}^{t}F_{i}(1|\mathbf{w})}{k^{\prime}}\cdot \text{Pr}(\mathbf{X}=\mathbf{w})d\mathbf{w} - \int_{W}\text{Pr}(\mathbf{X}=\mathbf{w})d\mathbf{w}] \\
=&\mu \cdot [\frac{\sum_{i=1}^{t}\text{Pr}(F_{i}(\mathbf{X})=1)}{k^{\prime}}- \text{Pr}(\mathbf{X}\in W)] \\
\geq & 0.
\end{align}
We have Equation~\ref{lemma_np_general_e2} from~\ref{lemma_np_general_e1} due to the fact that $\text{Pr}(\mathbf{Y}=\mathbf{w})/\text{Pr}(\mathbf{X}=\mathbf{w}) \leq \mu, \forall \mathbf{w}\in W$, $\text{Pr}(\mathbf{Y}=\mathbf{w})/\text{Pr}(\mathbf{X}=\mathbf{w}) > \mu, \forall \mathbf{w}\in W^{c}$, and  $1-\frac{\sum_{i=1}^{t}F_{i}(1|\mathbf{w})}{k^{\prime}} \geq 0$. 
Similarly, we can prove the part (2). We omit the details for conciseness reason. 
\end{proof}
We apply the above lemma to random variables $\mathbf{X}$ and $\mathbf{Y}$, and obtain the following lemma:
\begin{lemma}
\label{lemma_np_gaussian_general}
Let $\mathbf{X}\sim \mathcal{N}(\mathbf{x},\sigma^2 I)$, $\mathbf{Y}\sim \mathcal{N}(\mathbf{x}+\delta,\sigma^2 I)$, $F_{1}, F_{2}, \cdots, F_{t}:\mathbb{R}^d\xrightarrow{} \{0,1\}$ be $t$ random or deterministic functions,  and $k^{\prime}$ be an integer such that: 
\begin{align}
   \sum_{i=1}^{t}F_{i}(1|\mathbf{w})\leq k^{\prime}, \forall \mathbf{w}\in \mathbb{R}^{d},
\end{align}
where $F_{i}(1|\mathbf{w})$ denote the probability that $F_{i}(\mathbf{w})=1$. 
Then, we have the following:  

(1) If $W=\{\mathbf{w}\in \mathbb{R}^d: \delta^{T}\mathbf{w} \leq \beta  \}$ for some $\beta$ and $\frac{\sum_{i=1}^{t}\text{Pr}(F_{i}(\mathbf{X})=1)}{k^{\prime}}\geq \text{Pr}(\mathbf{X}\in W)$, then $\frac{\sum_{i=1}^{t}\text{Pr}(F_{i}(\mathbf{Y})=1)}{k^{\prime}}\geq \text{Pr}(\mathbf{Y}\in W)$. 

(2) If $W=\{\mathbf{w}\in \mathbb{R}^d: \delta^{T}\mathbf{w} \geq \beta  \}$ for some  $\beta$ and $\frac{\sum_{i=1}^{t}\text{Pr}(F_{i}(\mathbf{X})=1)}{k^{\prime}}\leq \text{Pr}(\mathbf{X}\in W)$, then $\frac{\sum_{i=1}^{t}\text{Pr}(F_{i}(\mathbf{Y})=1)}{k^{\prime}}\leq \text{Pr}(\mathbf{Y}\in W)$. 
\end{lemma}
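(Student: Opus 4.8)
The plan is to deduce this statement from Lemma~\ref{lemma_np_general} by observing that, for two Gaussians with common covariance $\sigma^2 I$, the likelihood ratio $\text{Pr}(\mathbf{Y}=\mathbf{w})/\text{Pr}(\mathbf{X}=\mathbf{w})$ is a strictly monotone function of the scalar linear functional $\mathbf{w}\mapsto\delta^{T}\mathbf{w}$. Consequently every half-space $\{\mathbf{w}:\delta^{T}\mathbf{w}\leq\beta\}$ is exactly a sublevel set $\{\mathbf{w}:\text{Pr}(\mathbf{Y}=\mathbf{w})/\text{Pr}(\mathbf{X}=\mathbf{w})\leq\mu\}$ of the kind that appears in Lemma~\ref{lemma_np_general}, and symmetrically for the $\geq$ case, so both parts of the present lemma follow immediately by substitution.

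First I would compute the ratio explicitly. Since the Gaussian densities of $\mathbf{X}\sim\mathcal{N}(\mathbf{x},\sigma^2 I)$ and $\mathbf{Y}\sim\mathcal{N}(\mathbf{x}+\delta,\sigma^2 I)$ are everywhere positive and share the normalizing constant and the quadratic-in-$\mathbf{w}$ term, these cancel and leave
\begin{align}
\frac{\text{Pr}(\mathbf{Y}=\mathbf{w})}{\text{Pr}(\mathbf{X}=\mathbf{w})}
=\exp\!\left(\frac{\lnorm{\mathbf{w}-\mathbf{x}}_2^{2}-\lnorm{\mathbf{w}-\mathbf{x}-\delta}_2^{2}}{2\sigma^{2}}\right)
=\exp\!\left(\frac{2\,\delta^{T}(\mathbf{w}-\mathbf{x})-\lnorm{\delta}_2^{2}}{2\sigma^{2}}\right).
\end{align}
If $\delta=\mathbf{0}$ then $\mathbf{X}$ and $\mathbf{Y}$ have the same law and each part reduces to a trivially true statement about a single distribution, so I may assume $\delta\neq\mathbf{0}$. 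Then the right-hand side is a strictly increasing, continuous function of $\delta^{T}\mathbf{w}$ mapping $\mathbb{R}$ bijectively onto $(0,\infty)$.

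Next I would match half-spaces to level sets and invoke Lemma~\ref{lemma_np_general}. For part (1), given $\beta$ set $\mu=\exp\!\big((2(\beta-\delta^{T}\mathbf{x})-\lnorm{\delta}_2^{2})/(2\sigma^{2})\big)>0$; by the strict monotonicity just established, $\delta^{T}\mathbf{w}\leq\beta$ holds if and only if $\text{Pr}(\mathbf{Y}=\mathbf{w})/\text{Pr}(\mathbf{X}=\mathbf{w})\leq\mu$, so $W=\{\mathbf{w}:\delta^{T}\mathbf{w}\leq\beta\}$ is literally the set to which Lemma~\ref{lemma_np_general}(1) applies. The hypotheses $\sum_{i=1}^{t}F_{i}(1|\mathbf{w})\leq k^{\prime}$ for all $\mathbf{w}$ and $\frac{\sum_{i=1}^{t}\text{Pr}(F_{i}(\mathbf{X})=1)}{k^{\prime}}\geq\text{Pr}(\mathbf{X}\in W)$ are exactly what that lemma requires, and its conclusion is precisely $\frac{\sum_{i=1}^{t}\text{Pr}(F_{i}(\mathbf{Y})=1)}{k^{\prime}}\geq\text{Pr}(\mathbf{Y}\in W)$. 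Part (2) is the mirror image: with the same $\mu$, the half-space $\{\mathbf{w}:\delta^{T}\mathbf{w}\geq\beta\}$ equals $\{\mathbf{w}:\text{Pr}(\mathbf{Y}=\mathbf{w})/\text{Pr}(\mathbf{X}=\mathbf{w})\geq\mu\}$, and Lemma~\ref{lemma_np_general}(2) yields the reversed inequality.

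There is no deep obstacle here; the lemma is essentially a change-of-variables bookkeeping step on top of Lemma~\ref{lemma_np_general}. The only points that need care are (i) checking that the likelihood ratio is \emph{strictly} monotone in $\delta^{T}\mathbf{w}$ and maps onto all of $(0,\infty)$, so that the half-space and the level set coincide \emph{exactly} (no leftover boundary hyperplane to discard) and so that a valid $\mu>0$ always exists, and (ii) keeping the inequality directions aligned — both the $\leq$ versus $\geq$ in the definition of $W$ and the direction of the assumed inequality on $\mathbf{X}$ — when passing between the two parts of Lemma~\ref{lemma_np_general}; the degenerate case $\delta=\mathbf{0}$ is handled separately as above.
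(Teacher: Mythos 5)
Your proposal is correct and follows essentially the same route as the paper, which obtains this lemma by applying Lemma~\ref{lemma_np_general} to the two Gaussians after noting (implicitly, via Cohen et al.) that the likelihood ratio $\text{Pr}(\mathbf{Y}=\mathbf{w})/\text{Pr}(\mathbf{X}=\mathbf{w})=\exp\bigl((2\delta^{T}(\mathbf{w}-\mathbf{x})-\lnorm{\delta}_2^{2})/(2\sigma^{2})\bigr)$ is a strictly increasing function of $\delta^{T}\mathbf{w}$, so half-spaces coincide with likelihood-ratio level sets. You have merely written out explicitly the substitution the paper leaves implicit, including the correct choice of $\mu$ and the degenerate case $\delta=\mathbf{0}$.
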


By leveraging Lemma~\ref{lemma_compute_algebra}, Lemma~\ref{lemma_np_general}, and Lemma~\ref{lemma_np_gaussian_general}, we derive the following lemma:  
\begin{lemma}
\label{theorem_compare_general}
Suppose we have an arbitrary base multi-label classifier $f$, an integer $k^{\prime}$, an input $\mathbf{x}$, an arbitrary set denoted as $O$, two label probability bounds $\underline{p_O}$ and $\overline{p}_{O}$ that satisfy $\underline{p_O} \leq p_{O} =\sum_{i\in O}\text{Pr}(i \in f_{k^{\prime}}(\mathbf{X}))\leq \overline{p}_{O}$, as well as regions $\mathcal{A}_O$ and $\mathcal{B}_O$ defined as follows: 
\begin{align}
\label{region_a_s_definition}
&\mathcal{A}_O=\{\mathbf{w}: \delta^{T}(\mathbf{w}-\mathbf{x})\leq \sigma\lnorm{\delta}_2\Phi^{-1}(\frac{\underline{p_{O}}}{k^{\prime}})\} \\
\label{region_b_s_definition}
&\mathcal{B}_O=\{\mathbf{w}: \delta^{T}(\mathbf{w}-\mathbf{x})\geq \sigma\lnorm{\delta}_2\Phi^{-1}(1-\frac{\overline{p}_{O}}{k^{\prime}})\} 
\end{align}
Then, we have: 
\begin{align}
\label{equation_x_s_a_general}
&\text{Pr}(\mathbf{X}\in\mathcal{A}_O) \leq \frac{\sum_{i \in O}\text{Pr}(i \in f_{k^{\prime}}(\mathbf{X}))}{k^{\prime}}\leq \text{Pr}(\mathbf{X}\in\mathcal{B}_O)  \\
\label{equation_y_s_b_general}
&\text{Pr}(\mathbf{Y}\in \mathcal{A}_O) \leq \frac{\sum_{i\in O}\text{Pr}(i \in f_{k^{\prime}}(\mathbf{Y}))}{k^{\prime}}\leq \text{Pr}(\mathbf{Y}\in \mathcal{B}_O)
\end{align}
\end{lemma}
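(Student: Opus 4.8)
The plan is to apply our generalized Neyman--Pearson machinery (Lemma~\ref{lemma_np_general}, Lemma~\ref{lemma_np_gaussian_general}) together with the algebraic identities of Lemma~\ref{lemma_compute_algebra}. First I would instantiate the family of functions: for each label $i \in O$, define $F_i:\mathbb{R}^d \to \{0,1\}$ by $F_i(\mathbf{w})=1$ iff $i \in f_{k'}(\mathbf{w})$ (a random function if $f$ breaks ties at random). Since the base classifier outputs exactly $k'$ labels for every input, $\sum_{i=1}^c F_i(1|\mathbf{w}) = k'$ and hence $\sum_{i\in O} F_i(1|\mathbf{w}) \le k'$ for all $\mathbf{w}$, so the hypothesis of Lemma~\ref{lemma_np_gaussian_general} is met with $t=|O|$. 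Also $\text{Pr}(F_i(\mathbf{X})=1)=\text{Pr}(i \in f_{k'}(\mathbf{X}))$ and likewise for $\mathbf{Y}$, so $\sum_{i\in O}\text{Pr}(F_i(\mathbf{X})=1)=p_O$.

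Next I would prove the two inequalities of Equation~\ref{equation_x_s_a_general} directly from Lemma~\ref{lemma_compute_algebra}. The region $\mathcal{A}_O$ of Equation~\ref{region_a_s_definition} is exactly the region $\mathcal{A}$ of Lemma~\ref{lemma_compute_algebra} with $q=\underline{p_O}/k'$, so $\text{Pr}(\mathbf{X}\in\mathcal{A}_O)=\underline{p_O}/k' \le p_O/k'$, which is the left inequality. Likewise $\mathcal{B}_O$ of Equation~\ref{region_b_s_definition} is the region $\mathcal{B}$ with $q=\overline{p}_O/k'$, so $\text{Pr}(\mathbf{X}\in\mathcal{B}_O)=\overline{p}_O/k' \ge p_O/k'$, which is the right inequality. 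Here I would also note $0 \le \underline{p_O}/k' \le \overline{p}_O/k' \le 1$, so that $\Phi^{-1}$ is well defined; the degenerate cases $\underline{p_O}=0$ and $\overline{p}_O=k'$ make $\mathcal{A}_O$ empty and $\mathcal{B}_O=\mathbb{R}^d$ respectively, and both inequalities hold trivially there.

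Finally I would transport the $\mathbf{X}$-side inequalities to $\mathbf{Y}$ via Lemma~\ref{lemma_np_gaussian_general}. Writing $\mathcal{A}_O=\{\mathbf{w}:\delta^{T}\mathbf{w}\le \beta\}$ with $\beta=\delta^{T}\mathbf{x}+\sigma\lnorm{\delta}_2\Phi^{-1}(\underline{p_O}/k')$, the inequality $\tfrac{\sum_{i\in O}\text{Pr}(F_i(\mathbf{X})=1)}{k'}=p_O/k' \ge \text{Pr}(\mathbf{X}\in\mathcal{A}_O)$ established above lets part (1) of Lemma~\ref{lemma_np_gaussian_general} conclude $\tfrac{\sum_{i\in O}\text{Pr}(F_i(\mathbf{Y})=1)}{k'}\ge \text{Pr}(\mathbf{Y}\in\mathcal{A}_O)$, i.e.\ the left inequality of Equation~\ref{equation_y_s_b_general}. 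Symmetrically, writing $\mathcal{B}_O=\{\mathbf{w}:\delta^{T}\mathbf{w}\ge \beta'\}$ and using $p_O/k' \le \text{Pr}(\mathbf{X}\in\mathcal{B}_O)$, part (2) of Lemma~\ref{lemma_np_gaussian_general} gives $\tfrac{\sum_{i\in O}\text{Pr}(F_i(\mathbf{Y})=1)}{k'}\le \text{Pr}(\mathbf{Y}\in\mathcal{B}_O)$, the right inequality.

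The proof is essentially bookkeeping once the generalized Neyman--Pearson lemma is available; the one genuinely delicate step is the reduction that packs the $|O|$ indicator functions into the single constraint $\sum_{i\in O}F_i(1|\mathbf{w})\le k'$ --- this is what forces the factor $1/k'$ throughout and is precisely why the single-function Neyman--Pearson lemma used in prior work does not suffice. The remaining care is in handling the boundary probability values $0$ and $k'$ so that the half-space descriptions of $\mathcal{A}_O$, $\mathcal{B}_O$ and the appeals to $\Phi^{-1}$ stay valid.
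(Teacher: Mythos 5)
Your proof is correct and follows essentially the same route as the paper's: define the indicator functions $F_i(\mathbf{w})=\mathbb{I}(i\in f_{k'}(\mathbf{w}))$ for $i\in O$, use the fact that $f_{k'}$ outputs $k'$ labels to get $\sum_{i\in O}F_i(1|\mathbf{w})\leq k'$, compute $\text{Pr}(\mathbf{X}\in\mathcal{A}_O)$ and $\text{Pr}(\mathbf{X}\in\mathcal{B}_O)$ via Lemma~\ref{lemma_compute_algebra}, and transport to $\mathbf{Y}$ via the two parts of Lemma~\ref{lemma_np_gaussian_general}. Your added care with the degenerate values $\underline{p_O}=0$ and $\overline{p}_O=k'$ is a minor refinement the paper leaves implicit.
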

\begin{proof}
We know $\text{Pr}(\mathbf{X}\in\mathcal{A}_O)=\frac{\underline{p_O}}{k^{\prime}}$ based on Lemma~\ref{lemma_compute_algebra}. Moreover, based on the condition $\underline{p_O} \leq \sum_{i\in O}\text{Pr}(i\in f_{k^{\prime}}(\mathbf{X}))$, we obtain the first inequality in Equation~\ref{equation_x_s_a_general}. Similarly, we can obtain the second inequality in Equation~\ref{equation_x_s_a_general}. We define $F_{i}(\mathbf{w})=\mathbb{I}(i\in f_{k^{\prime}}(\mathbf{w})), \forall i \in O$, where $\mathbb{I}$ is indicator function. Then, we have $\text{Pr}(\mathbf{X}\in\mathcal{A}_O)\leq \frac{\sum_{i \in O}\text{Pr}(i \in f_{k^{\prime}}(\mathbf{X}))}{k^{\prime}}=\frac{\sum_{i \in O}\text{Pr}(F_{i}(\mathbf{X})=1)}{k^{\prime}}$. Note that there are $k^{\prime}$ elements in $f_{k^{\prime}}(\mathbf{w}),\forall \mathbf{w}\in \mathbb{R}^d$, therefore, we have $\sum_{i\in O}F_{i}(1|\mathbf{w}) = \sum_{i\in O}\mathbb{I}(i\in f_{k^{\prime}}(\mathbf{w})) \leq k^{\prime}, \forall \mathbf{w}\in \mathbb{R}^d$. 
Then, we can apply Lemma~\ref{lemma_np_gaussian_general} and we have the following:
\begin{align}
\text{Pr}(\mathbf{Y}\in \mathcal{A}_O) 
\leq \frac{\sum_{i\in O}\text{Pr}(F_i(\mathbf{Y})=1)}{k^{\prime}} 
=\frac{\sum_{i\in O }\text{Pr}(i\in f_{k^{\prime}}(\mathbf{Y}))}{k^{\prime}}, 
\end{align}
which is the first inequality in Equation~\ref{equation_y_s_b_general}. Similarly, we can obtain the second inequality in Equation~\ref{equation_y_s_b_general}.  
\end{proof}
Based on Lemma~\ref{lemma_np_gaussian} and Lemma~\ref{lemma_compute_algebra}, we derive the following lemma:
\begin{lemma}
\label{theorem_compare}
Suppose we have an arbitrary base multi-label classifier $f$, an integer $k^{\prime}$, an input $\mathbf{x}$, an arbitrary label which is denoted as $l$, two label probability bounds $\underline{p_l}$ and $\overline{p}_{l}$ that satisfy $\underline{p_l} \leq p_{l} =\text{Pr}(l \in f_{k^{\prime}}(\mathbf{X}))\leq \overline{p}_{l}$, and regions $\mathcal{A}_l$ and $\mathcal{B}_l$ defined as follows: 
\begin{align}
&\mathcal{A}_l=\{\mathbf{w}: \delta^{T}(\mathbf{w}-\mathbf{x})\leq \sigma\lnorm{\delta}_2\Phi^{-1}(\underline{p_{l}})\} \\
&\mathcal{B}_l=\{\mathbf{w}: \delta^{T}(\mathbf{w}-\mathbf{x})\geq \sigma\lnorm{\delta}_2\Phi^{-1}(1-\overline{p}_{l})\} 
\end{align}
Then, we have: 
\begin{align}
\label{equation_x_s_a}
&\text{Pr}(\mathbf{X}\in\mathcal{A}_l) \leq \text{Pr}(l \in f_{k^{\prime}}(\mathbf{X}))\leq \text{Pr}(\mathbf{X}\in\mathcal{B}_l)  \\
\label{equation_y_s_b}
&\text{Pr}(\mathbf{Y}\in \mathcal{A}_l) \leq \text{Pr}(l \in f_{k^{\prime}}(\mathbf{Y}))\leq \text{Pr}(\mathbf{Y}\in \mathcal{B}_l)
\end{align}
\end{lemma}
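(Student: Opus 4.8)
The plan is to mirror the proof of Lemma~\ref{theorem_compare_general} almost verbatim, but to replace the generalized Neyman--Pearson tool (Lemma~\ref{lemma_np_gaussian_general}) with the standard one (Lemma~\ref{lemma_np_gaussian}). This substitution is legitimate here because we are tracking a single label $l$ rather than a set $O$: the inclusion indicator of one label is bounded by $1$, not merely by $k'$, so we may dispense with the division by $k'$ and use the sharper single-function lemma. In effect, Lemma~\ref{theorem_compare} is the $|O|=1$ specialization of Lemma~\ref{theorem_compare_general} with the extra tightening that $\Phi^{-1}(\underline{p_l}/k')$ and $\Phi^{-1}(1-\overline{p}_l/k')$ get replaced by $\Phi^{-1}(\underline{p_l})$ and $\Phi^{-1}(1-\overline{p}_l)$.

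First I would handle the $\mathbf{X}$ line by a direct computation with Lemma~\ref{lemma_compute_algebra}. Taking $q=\underline{p_l}$ in that lemma, the region $\mathcal{A}$ there coincides with $\mathcal{A}_l$, so $\text{Pr}(\mathbf{X}\in\mathcal{A}_l)=\underline{p_l}$; taking $q=\overline{p}_l$, the region $\mathcal{B}$ there coincides with $\mathcal{B}_l$, so $\text{Pr}(\mathbf{X}\in\mathcal{B}_l)=\overline{p}_l$. Combining these two equalities with the hypothesis $\underline{p_l}\le p_l=\text{Pr}(l\in f_{k'}(\mathbf{X}))\le\overline{p}_l$ yields $\text{Pr}(\mathbf{X}\in\mathcal{A}_l)\le\text{Pr}(l\in f_{k'}(\mathbf{X}))\le\text{Pr}(\mathbf{X}\in\mathcal{B}_l)$, which is Equation~\ref{equation_x_s_a}.

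Next, for the $\mathbf{Y}$ line, I would introduce the function $F(\mathbf{w})=\mathbb{I}(l\in f_{k'}(\mathbf{w})):\mathbb{R}^d\to\{0,1\}$, so that $\text{Pr}(F(\mathbf{X})=1)=\text{Pr}(l\in f_{k'}(\mathbf{X}))$ and $\text{Pr}(F(\mathbf{Y})=1)=\text{Pr}(l\in f_{k'}(\mathbf{Y}))$. The key observation is that $\mathcal{A}_l=\{\mathbf{w}:\delta^T\mathbf{w}\le\beta\}$ with $\beta=\delta^T\mathbf{x}+\sigma\lnorm{\delta}_2\Phi^{-1}(\underline{p_l})$ and $\mathcal{B}_l=\{\mathbf{w}:\delta^T\mathbf{w}\ge\beta'\}$ with $\beta'=\delta^T\mathbf{x}+\sigma\lnorm{\delta}_2\Phi^{-1}(1-\overline{p}_l)$, i.e.\ both are halfspaces of the form required by Lemma~\ref{lemma_np_gaussian}. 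From the $\mathbf{X}$ line we already know $\text{Pr}(F(\mathbf{X})=1)\ge\text{Pr}(\mathbf{X}\in\mathcal{A}_l)$, so part~(1) of Lemma~\ref{lemma_np_gaussian} gives $\text{Pr}(F(\mathbf{Y})=1)\ge\text{Pr}(\mathbf{Y}\in\mathcal{A}_l)$; likewise $\text{Pr}(F(\mathbf{X})=1)\le\text{Pr}(\mathbf{X}\in\mathcal{B}_l)$, so part~(2) of Lemma~\ref{lemma_np_gaussian} gives $\text{Pr}(F(\mathbf{Y})=1)\le\text{Pr}(\mathbf{Y}\in\mathcal{B}_l)$. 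Rewriting $F$ in terms of $f_{k'}$ produces Equation~\ref{equation_y_s_b}.

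I do not expect a genuine obstacle: the argument is a routine specialization. The only points requiring care are bookkeeping ones: matching the quantile arguments to Lemma~\ref{lemma_compute_algebra} ($q=\underline{p_l}$ for $\mathcal{A}_l$ and $q=\overline{p}_l$ for $\mathcal{B}_l$), and checking that $\mathcal{A}_l,\mathcal{B}_l$ indeed have the halfspace form $\{\delta^T\mathbf{w}\le\beta\}$ or $\{\delta^T\mathbf{w}\ge\beta\}$ demanded by Lemma~\ref{lemma_np_gaussian} — both verifications are one line each. The "other half'' of each inequality (the upper bounds via $\mathcal{B}_l$) is symmetric to the lower-bound case and can be stated with the phrase "similarly,'' exactly as in the proof of Lemma~\ref{theorem_compare_general}.
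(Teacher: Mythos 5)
Your proposal is correct and is essentially identical to the paper's own proof: both establish Equation~\ref{equation_x_s_a} by computing $\text{Pr}(\mathbf{X}\in\mathcal{A}_l)=\underline{p_l}$ and $\text{Pr}(\mathbf{X}\in\mathcal{B}_l)=\overline{p}_l$ via Lemma~\ref{lemma_compute_algebra}, then define $F(\mathbf{w})=\mathbb{I}(l\in f_{k'}(\mathbf{w}))$ and apply the standard single-function Neyman--Pearson result (Lemma~\ref{lemma_np_gaussian}) to the halfspaces $\mathcal{A}_l,\mathcal{B}_l$ to obtain Equation~\ref{equation_y_s_b}, handling the upper-bound half symmetrically. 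No gaps.
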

\begin{proof}
We know $\text{Pr}(\mathbf{X}\in\mathcal{A}_l)=\underline{p_l}$ based on Lemma~\ref{lemma_compute_algebra}. Moreover, based on the condition $\underline{p_l} \leq \text{Pr}(l \in f_{k^{\prime}}(\mathbf{X}))$, we obtain the first inequality in Equation~\ref{equation_x_s_a}. Similarly, we can obtain the second inequality in Equation~\ref{equation_x_s_a}. We define $F(\mathbf{w})=\mathbb{I}(l\in f_{k^{\prime}}(\mathbf{w}))$. Based on the first inequality in Equation~\ref{equation_x_s_a}, we know $\text{Pr}(\mathbf{X}\in \mathcal{A}_l) \leq  \text{Pr}(l \in f_{k^{\prime}}(\mathbf{X})) = \text{Pr}(F(\mathbf{X})=1)$. Then, we apply Lemma~\ref{lemma_np_gaussian} and we have the following:
\begin{align}
\text{Pr}(\mathbf{Y}\in \mathcal{A}_l)\leq\text{Pr}(F(\mathbf{Y})=1)=\text{Pr}(l\in f_{k^{\prime}}(\mathbf{Y})), 
\end{align}
which is the first inequality in Equation~\ref{equation_y_s_b}. The second inequality in Equation~\ref{equation_y_s_b} can be obtained similarly.  
\end{proof}

Next, we formally show our proof for Theorem~\ref{theorem_of_certified_radius}.
\begin{proof}
We leverage the law of contraposition to prove our theorem. Roughly speaking, if we have a statement: $P\rightarrow Q$, then, it's contrapositive is: $\neg Q\rightarrow  \neg P$, where $\neg$ denotes negation. The law of contraposition claims that a statement is true if, and only if, its contrapositive is true. 
We define the predicate $P$ as follows:
\begin{align}
& \max\{\Phi(\Phi^{-1}(\underline{p_{a_e}})-\frac{R}{\sigma}), \max_{u=1}^{d-e+1}\frac{k^{\prime}}{u}\cdot\Phi(\Phi^{-1}(\frac{\underline{p_{\Gamma_u}}}{k^{\prime}})-\frac{R}{\sigma})\} \nonumber \\
\label{equation_condition_to_certify}
> & \min\{\Phi(\Phi^{-1}(\overline{p}_{b_s})+\frac{R}{\sigma}),  \max_{v=1}^{k-e+1}\frac{k^{\prime}}{v}\cdot\Phi(\Phi^{-1}(\frac{\overline{p}_{\Lambda_v}}{k^{\prime}})+\frac{R}{\sigma})\}.
\end{align}
We define the predicate $Q$ as follows: 
\begin{align}
  \min_{\delta,\lnorm{\delta}_2\leq R}|L(\mathbf{x}) \cap g_{k}(\mathbf{x}+\delta)|\geq e . 
\end{align}
We will first prove the statement: $P \rightarrow Q$. To prove it, we consider its contrapositive, i.e., we prove the following statement: $\neg Q \rightarrow \neg P$. 

\myparatight{Deriving necessary condition} Suppose $\neg Q$ is true, i.e., $ \min_{\delta,\lnorm{\delta}_2\leq R}|L(\mathbf{x})\cap g_{k}(\mathbf{x}+\delta)|< e$. On the one hand, this means there exist at least $d-e+1$ elements in $L(\mathbf{x})$ do not appear in $g_{k}(\mathbf{x}+\delta)$. For convenience, we use $\mathcal{U}_{r}\subseteq L(\mathbf{x})$ to denote those elements,  a subset of $L(\mathbf{x})$ with $r$ elements where $r = d-e+1$. On the other hand, there exist at least $k-e+1$ elements  in $\{1,2,\cdots,c\}\setminus L(\mathbf{x})$ appear in $g_{k}(\mathbf{x}+\delta)$. We use $\mathcal{V}_s \subseteq \{1,2,\cdots,c\}\setminus L(\mathbf{x})$ to denote them, a  subset of $\{1,2,\cdots,c\}\setminus L(\mathbf{x})$ with $s=k-e+1$ elements. Formally, we have the following:
\begin{align}
\label{equation_of_subset_to_achieve_1}
&\exists \text{ }\mathcal{U}_r \subseteq L(\mathbf{x}), \mathcal{U}_r \cap g_{k}(\mathbf{x}+\delta) = \emptyset \\
\label{equation_of_subset_to_achieve_2}
&\exists \text{ } \mathcal{V}_s \subseteq \{1,2,\cdots,c\}\setminus L(\mathbf{x}), \mathcal{V}_s \subseteq g_{k}(\mathbf{x}+\delta), 
\end{align}
In other words, there exist sets $\mathcal{U}_r$ and $\mathcal{V}_s$ such that the adversarially perturbed label probability $p_i^{\ast}$'s for elements in $\mathcal{V}_s$ are no smaller than these for the elements in $\mathcal{U}_r$. Formally, we have the following necessary condition if $|L(\mathbf{x})\cap g_{k}(\mathbf{x}+\delta)|< e$: 
\begin{align}
\label{necessary_condition}
\min_{\mathcal{U}_r} \max_{i\in\mathcal{U}_r}\text{Pr}(i\in f_{k^{\prime}}(\mathbf{Y})) \leq \max_{\mathcal{V}_s} \min_{j\in\mathcal{V}_s}\text{Pr}(j\in f_{k^{\prime}}(\mathbf{Y}))
\end{align}
\myparatight{Bounding $\max_{i\in\mathcal{U}_r}\text{Pr}(i\in f_{k^{\prime}}(\mathbf{Y}))$ and $\min_{j\in\mathcal{V}_s}\text{Pr}(j\in f_{k^{\prime}}(\mathbf{Y}))$ for given $\mathcal{U}_r$ and $\mathcal{V}_s$}
For simplicity, we assume $\mathcal{U}_r =\{w_1,w_2,\cdots,w_r\}$. Without loss of generality, we assume $\underline{p_{w_1}} \geq \underline{p_{w_2}}\geq \cdots \geq \underline{p_{w_r}}$. Similarly, we assume $\mathcal{V}_s =\{z_1,z_2,\cdots,z_s\}$ and $\overline{p}_{z_s} \geq  \cdots \geq \overline{p}_{z_2}\geq \overline{p}_{z_1}$. 
For an arbitrary element $i\in \mathcal{U}_r$, we define the following region:
\begin{align}
    \mathcal{A}_{i}=\{\mathbf{w}: \delta^{T}(\mathbf{w}-\mathbf{x})\leq \sigma\lnorm{\delta}_2\Phi^{-1}(\underline{p_i})\}
\end{align}
Then, we have the following for any $i \in \mathcal{U}_r$:
\begin{align}
\text{Pr}(i\in f_{k^{\prime}}(\mathbf{Y})) 
\geq  \text{Pr}(\mathbf{Y}\in \mathcal{A}_i) 
=\Phi(\Phi^{-1}(\underline{p_i})-\frac{\lnorm{\delta}_2}{\sigma})
\end{align}
We obtain the first inequality from Lemma~\ref{theorem_compare}, and the second equality from Lemma~\ref{lemma_compute_algebra}. Similarly, for an arbitrary element $j\in \mathcal{V}_s$, we define the following region:
\begin{align}
     \mathcal{B}_{j}=\{\mathbf{w}: \delta^{T}(\mathbf{w}-\mathbf{x})\geq \sigma\lnorm{\delta}_2\Phi^{-1}(1-\overline{p}_j)\}   
\end{align}
Then, based on Lemma~\ref{theorem_compare} and Lemma~\ref{lemma_compute_algebra}, we have the following:
\begin{align}
\text{Pr}(j\in f_{k^{\prime}}(\mathbf{Y})) \leq \text{Pr}(\mathbf{Y}\in \mathcal{B}_j) = \Phi(\Phi^{-1}(\overline{p}_j)+\frac{\lnorm{\delta}_2}{\sigma})
\end{align}
Therefore, we have the following:
\begin{align}
\label{equation_first_condition_to_combine}
& \max_{i\in \mathcal{U}_r} \text{Pr}(i\in f_{k^{\prime}}(\mathbf{Y})) \\
\geq & \max_{i \in \mathcal{U}_r}\Phi(\Phi^{-1}(\underline{p_i})-\frac{\lnorm{\delta}_2}{\sigma})=\max_{i \in \{w_1,w_2,\cdots,w_r\}}\Phi(\Phi^{-1}(\underline{p_i})-\frac{\lnorm{\delta}_2}{\sigma})=\Phi(\Phi^{-1}(\underline{p_{w_1}})-\frac{\lnorm{\delta}_2}{\sigma})\\
& \min_{j \in \mathcal{V}_s} \text{Pr}(j\in f_{k^{\prime}}(\mathbf{Y})) \\
\leq & \min_{j \in \mathcal{V}_s}\Phi(\Phi^{-1}(\overline{p}_{j})+\frac{\lnorm{\delta}_2}{\sigma}) = \min_{j \in \{z_1,z_2,\cdots,z_s\}}\Phi(\Phi^{-1}(\overline{p}_{j})+\frac{\lnorm{\delta}_2}{\sigma})=\Phi(\Phi^{-1}(\overline{p}_{z_1})+\frac{\lnorm{\delta}_2}{\sigma})
\end{align}
Next, we consider all possible subsets of $\mathcal{U}_r$ and $\mathcal{V}_s$. 
We denote $\Gamma_{u}\subseteq \mathcal{U}_r$, a subset of $u$ elements in $\mathcal{U}_{r}$, and denote $\Lambda_{v} \subseteq \mathcal{V}_s$, a subset of $v$ elements in $\mathcal{V}_s$. Then, we have the following:
\begin{align}
&\max_{i\in \mathcal{U}_r}\text{Pr}(i\in f_{k^{\prime}}(\mathbf{Y})) \geq   \max_{\Gamma_u \subseteq \mathcal{U}_r}\max_{i\in \Gamma_u} \text{Pr}(i\in f_{k^{\prime}}(\mathbf{Y})) \\
&\min_{j\in\mathcal{V}_s}\text{Pr}(j\in f_{k^{\prime}}(\mathbf{Y})) \leq \min_{\Lambda_v \subseteq \mathcal{V}_s}\min_{j\in \Lambda_v}\text{Pr}(j\in f_{k^{\prime}}(\mathbf{Y}))
\end{align}
We define the following quantities:
\begin{align}
    \underline{p_{\Gamma_u}}=\sum_{i\in \Gamma_u}\underline{p_i}\text{ and } \overline{p}_{\Lambda_v}=\sum_{j\in \Lambda_v}\overline{p}_j
\end{align}
Given these quantities, we define the following region based on Equation~\ref{region_a_s_definition}:
\begin{align}
\label{define_of_a_gamma_u}
 &   \mathcal{A}_{\Gamma_u}=\{\mathbf{w}: \delta^{T}(\mathbf{w}-\mathbf{x})\leq \sigma\lnorm{\delta}_2\Phi^{-1}(\frac{\underline{p_{\Gamma_u}}}{k^{\prime}})\} \\
    \label{define_of_b_lambda_v}
 &   \mathcal{B}_{\Lambda_v}=\{\mathbf{w}: \delta^{T}(\mathbf{w}-\mathbf{x})\geq \sigma\lnorm{\delta}_2\Phi^{-1}(1-\frac{\overline{p}_{\Lambda_v}}{k^{\prime}})\}
\end{align}
Then, we have the following: 
\begin{align}
\label{lower_bound_derive_equation_1}
&\frac{\sum_{i\in \Gamma_u}\text{Pr}(i \in f_{k^{\prime}}(\mathbf{Y}))}{k^{\prime}} \\
\label{lower_bound_derive_equation_2}
\geq &\text{Pr}(\mathbf{Y}\in \mathcal{A}_{\Gamma_u}) \\
\label{lower_bound_derive_equation_3}
=& \Phi(\Phi^{-1}(\frac{\underline{p_{\Gamma_u}}}{k^{\prime}})-\frac{\lnorm{\delta}_2}{\sigma})
\end{align}
We have Equation~\ref{lower_bound_derive_equation_2} from~\ref{lower_bound_derive_equation_1} based on Lemma~\ref{theorem_compare_general}, and we have Equation~\ref{lower_bound_derive_equation_3} from~\ref{lower_bound_derive_equation_2} based on Lemma~\ref{lemma_compute_algebra}. Therefore, we have the following: 
\begin{align}
\label{theorem1_equation_1_1}
 & \max_{i\in \Gamma_u} \text{Pr}(i\in f_{k^{\prime}}(\mathbf{Y})) \\
 \label{theorem1_equation_1_2}
 \geq & \frac{\sum_{i\in \Gamma_u}\text{Pr}(i \in f_{k^{\prime}}(\mathbf{Y}))}{u} \\
 \label{theorem1_equation_1_3}
 =&\frac{k^{\prime}}{u}\cdot\Phi(\Phi^{-1}(\frac{\underline{p_{\Gamma_u}}}{k^{\prime}})-\frac{\lnorm{\delta}_2}{\sigma})
\end{align}
We have Equation~\ref{theorem1_equation_1_2} from~\ref{theorem1_equation_1_1} because  the maximum value is no smaller than the average value. Similarly, we have the following:
\begin{align}
\min_{j\in \Lambda_v}\text{Pr}(j\in f_{k^{\prime}}(\mathbf{Y})) 
\leq  \frac{k^{\prime}}{v}\cdot\Phi(\Phi^{-1}(\frac{\overline{p}_{\Lambda_v}}{k^{\prime}})+\frac{\lnorm{\delta}_2}{\sigma})  
\end{align}
Recall that we have  $\overline{p}_{w_1} \geq \overline{p}_{w_2}\geq \cdots \geq \overline{p}_{w_r}$ for $\mathcal{U}_r$. By taking all possible $\Gamma_u$ with $u$ elements into consideration, we have the following: 
\begin{align}
\max_{i\in\mathcal{U}_r}\text{Pr}(i\in f_{k^{\prime}}(\mathbf{Y})) 
\geq  \max_{\Gamma_u \subseteq \mathcal{U}_r}\max_{i\in \Gamma_u} \text{Pr}(i\in f_{k^{\prime}}(\mathbf{Y})) 
\geq   \max_{\Gamma_u=\{w_1,\cdots,w_u\}}\frac{k^{\prime}}{u}\cdot\Phi(\Phi^{-1}(\frac{\underline{p_{\Gamma_u}}}{k^{\prime}})-\frac{\lnorm{\delta}_2}{\sigma})
\end{align}
In other words, we only need to consider $\Gamma_u=\{w_1,\cdots,w_u\}$, i.e., a subset of $u$ elements in $\mathcal{U}_r$ whose label probability upper bounds are the largest, where ties are broken uniformly at random. The reason is that $\Phi(\Phi^{-1}(\frac{\underline{p_{\Gamma_u}}}{k^{\prime}})-\frac{\lnorm{\delta}_2}{\sigma})$ increases as $\underline{p_{\Gamma_u}}$ increases. Combining with Equations~\ref{equation_first_condition_to_combine}, we have the following:
\begin{align}
\max_{i\in\mathcal{U}_r}\text{Pr}(i\in f_{k^{\prime}}(\mathbf{Y})) \geq \max \{\Phi(\Phi^{-1}(\underline{p_{w_1}})-\frac{\lnorm{\delta}_2}{\sigma}),\max_{\Gamma_u=\{w_1,\cdots,w_u\}}\frac{k^{\prime}}{u}\cdot\Phi(\Phi^{-1}(\frac{\underline{p_{\Gamma_u}}}{k^{\prime}})-\frac{\lnorm{\delta}_2}{\sigma})\}
\end{align}

Similarly, we have the following: 
\begin{align}
\min_{j\in\mathcal{V}_s}\text{Pr}(j\in f_{k^{\prime}}(\mathbf{Y})) 
\leq \min \{\Phi(\Phi^{-1}(\overline{p}_{z_1})+\frac{\lnorm{\delta}_2}{\sigma}), 
   \min_{\Lambda_v=\{z_1,\cdots,z_v\}}\frac{k^{\prime}}{v}\cdot\Phi(\Phi^{-1}(\frac{\overline{p}_{\Lambda_v}}{k^{\prime}})+\frac{\lnorm{\delta}_2}{\sigma})\}
\end{align}
\myparatight{Bounding $\min_{\mathcal{U}_r} \max_{i\in\mathcal{U}_r}\text{Pr}(i\in f_{k^{\prime}}(\mathbf{Y}))$ and $\max_{\mathcal{V}_s} \min_{j\in\mathcal{V}_s}\text{Pr}(j\in f_{k^{\prime}}(\mathbf{Y}))$} We have the following: 
\begin{align}
\label{minimal_value_reach_condition_1}
 &  \min_{\mathcal{U}_r} \max_{i\in\mathcal{U}_r}\text{Pr}(i\in f_{k^{\prime}}(\mathbf{Y})) \\
 \label{minimal_value_reach_condition_2}
 \geq & \min_{\mathcal{U}_r} \max\{\max_{i \in \{w_1,w_2,\cdots,w_r\}}\Phi(\Phi^{-1}(\underline{p_i})-\frac{\lnorm{\delta}_2}{\sigma}),\max_{\Gamma_u=\{w_1,\cdots,w_u\}}\frac{k^{\prime}}{u}\cdot\Phi(\Phi^{-1}(\frac{\underline{p_{\Gamma_u}}}{k^{\prime}})-\frac{\lnorm{\delta}_2}{\sigma}) \} \\
 \label{minimal_value_reach_condition_3}
\geq & \max\{\max_{i \in \{a_e,a_{e+1},\cdots,a_k\}}\Phi(\Phi^{-1}(\underline{p_i})-\frac{\lnorm{\delta}_2}{\sigma}),\max_{\Gamma_u=\{a_e,\cdots,a_{e+u-1}\}}\frac{k^{\prime}}{u}\cdot\Phi(\Phi^{-1}(\frac{\underline{p_{\Gamma_u}}}{k^{\prime}})-\frac{\lnorm{\delta}_2}{\sigma}) \} \\
\label{derive_2_eliminage_part_1}
= & \max\{\Phi(\Phi^{-1}(\underline{p_{a_e}})-\frac{\lnorm{\delta}_2}{\sigma}),\max_{\Gamma_u=\{a_e,\cdots,a_{e+u-1}\}}\frac{k^{\prime}}{u}\cdot\Phi(\Phi^{-1}(\frac{\underline{p_{\Gamma_u}}}{k^{\prime}})-\frac{\lnorm{\delta}_2}{\sigma}) \} \\
\label{derive_2_eliminage_part_2}
= & \max\{\Phi(\Phi^{-1}(\underline{p_{a_e}})-\frac{\lnorm{\delta}_2}{\sigma}),\max_{u=1}^{d-e+1}\frac{k^{\prime}}{u}\cdot\Phi(\Phi^{-1}(\frac{\underline{p_{\Gamma_u}}}{k^{\prime}})-\frac{\lnorm{\delta}_2}{\sigma}) \} ,
\end{align}
where $\Gamma_u=\{a_e,\cdots,a_{e+u-1}\}$. We have Equation~\ref{minimal_value_reach_condition_3} from~\ref{minimal_value_reach_condition_2} because $\max\{\max_{i \in \{w_1,w_2,\cdots,w_r\}}\Phi(\Phi^{-1}(\underline{p_i})-\frac{\lnorm{\delta}_2}{\sigma}),\max_{\Gamma_u=\{w_1,\cdots,w_u\}}\frac{k^{\prime}}{u}\cdot\Phi(\Phi^{-1}(\frac{\underline{p_{\Gamma_u}}}{k^{\prime}})-\frac{\lnorm{\delta}_2}{\sigma}) \}$ reaches the minimal value when $\mathcal{U}_r$ contains $r$ elements with smallest label probability lower bounds, i.e., $\mathcal{U}_r=\{a_e,a_{e+1},\cdots,a_{d}\}$, where $r=d-e+1$. Similarly, we have the following: 
\begin{align}
\max_{\mathcal{V}_s} \min_{j\in\mathcal{V}_s}\text{Pr}(j\in f_{k^{\prime}}(\mathbf{Y})) 
\leq \min\{\Phi(\Phi^{-1}(\underline{p_{b_s}})+\frac{\lnorm{\delta}_2}{\sigma}),\min_{v=1}^{s}\frac{k^{\prime}}{v}\cdot\Phi(\Phi^{-1}(\frac{\overline{p}_{\Lambda_v}}{k^{\prime}})+\frac{\lnorm{\delta}_2}{\sigma}) \} ,
\end{align}
where $\Lambda_v=\{b_{s-v+1},\cdots,b_{s}\}$ and $s=k-e+1$.

\myparatight{Applying the law of contraposition}
Based on necessary condition in Equation~\ref{necessary_condition}, if we have $|T\cap g_{k}(\mathbf{x}+\delta)|<e$, then, we must have the following: 
\begin{align}
& \max\{\Phi(\Phi^{-1}(\underline{p_{a_e}})-\frac{\lnorm{\delta}_2}{\sigma}), \max_{u=1}^{d-e+1}\frac{k^{\prime}}{u}\cdot\Phi(\Phi^{-1}(\frac{\underline{p_{\Gamma_u}}}{k^{\prime}})-\frac{\lnorm{\delta}_2}{\sigma})\} \\
\leq & \min_{\mathcal{U}_r} \max_{i\in\mathcal{U}_r}\text{Pr}(i\in f_{k^{\prime}}(\mathbf{Y})) \\
\leq & \max_{\mathcal{V}_s} \min_{j\in\mathcal{V}_s}\text{Pr}(j\in f_{k^{\prime}}(\mathbf{Y}))  \\
\leq & \min\{\Phi(\Phi^{-1}(\overline{p}_{b_e})+\frac{\lnorm{\delta}_2}{\sigma}),  \min_{v=1}^{k-e+1}\frac{k^{\prime}}{v}\cdot\Phi(\Phi^{-1}(\frac{\overline{p}_{\Lambda_v}}{k^{\prime}})+\frac{\lnorm{\delta}_2}{\sigma})\},
\end{align}
We apply the law of contraposition and we obtain the statement: if we have the following: 
\begin{align}
& \max\{\Phi(\Phi^{-1}(\underline{p_{a_e}})-\frac{\lnorm{\delta}_2}{\sigma}), \max_{u=1}^{d-e+1}\frac{k^{\prime}}{u}\cdot\Phi(\Phi^{-1}(\frac{\underline{p_{\Gamma_u}}}{k^{\prime}})-\frac{\lnorm{\delta}_2}{\sigma})\} \nonumber \\
\label{equation_condition_to_certify}
> & \min\{\Phi(\Phi^{-1}(\overline{p}_{b_s})+\frac{\lnorm{\delta}_2}{\sigma}),  \max_{v=1}^{k-e+1}\frac{k^{\prime}}{v}\cdot\Phi(\Phi^{-1}(\frac{\overline{p}_{\Lambda_v}}{k^{\prime}})+\frac{\lnorm{\delta}_2}{\sigma})\},
\end{align}
Then, we must have $|L(\mathbf{x})\cap g_{k}(\mathbf{x}+\delta)|\geq e$.
From Equation~\ref{equation_to_solve_for_topk}, we know that Equation~\ref{equation_condition_to_certify} is satisfied for $\forall \lnorm{\delta}_2 \leq R$. Therefore, we reach our conclusion. 
\end{proof}

\begin{algorithm}[tb]
   \caption{{Computing the Certified Intersection Size}}
   \label{alg:certify}
\begin{algorithmic}
   \STATE {\bfseries Input:} $f$, $\mathbf{x}$, $L(\mathbf{x})$, $R$, $k'$, $k$, $n$, $\sigma$, and $\alpha$. 
   \STATE {\bfseries Output:} Certified intersection size. \\
   $\mathbf{x}^1, \mathbf{x}^2,\cdots,\mathbf{x}^n \gets  \textsc{RandomSample}(\mathbf{x}, \sigma)$ \\
   \STATE counts$[i] \gets \sum_{t=1}^{n}\mathbb{I}(i \in f(\mathbf{x}^t)), i = 1,2,\cdots,c. $ \\
   \STATE $\underline{p_{{i}}}, \overline{p}_{{j}} \gets \textsc{ProbBoundEstimation}(\text{counts},\alpha), i \in L(\mathbf{x}), j \in \{1,2,\cdots,c\}\setminus L(\mathbf{x})$ \\
   $e \gets $ \textsc{BinarySearch}($\sigma,k^{\prime},k,R,\{\underline{p_i}|i\in L(\mathbf{x})\},\{\overline{p}_j | j \in \{1,2,\cdots,c\} \setminus L(\mathbf{x}) \}$) 
  \STATE \textbf{return} $e$
\end{algorithmic}
\end{algorithm}

\begin{figure*}
\centering
{\includegraphics[width=0.3\textwidth]{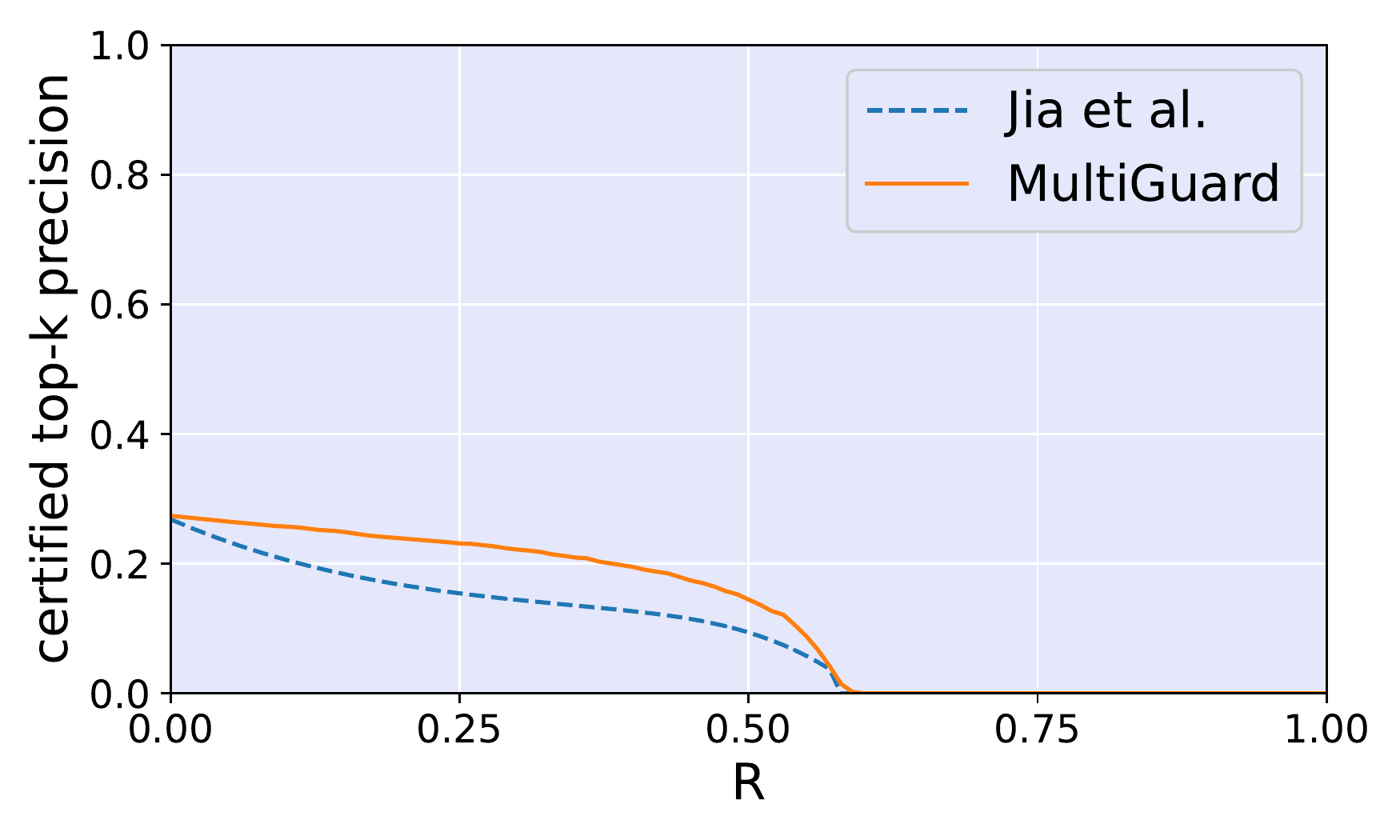}}
{\includegraphics[width=0.3\textwidth]{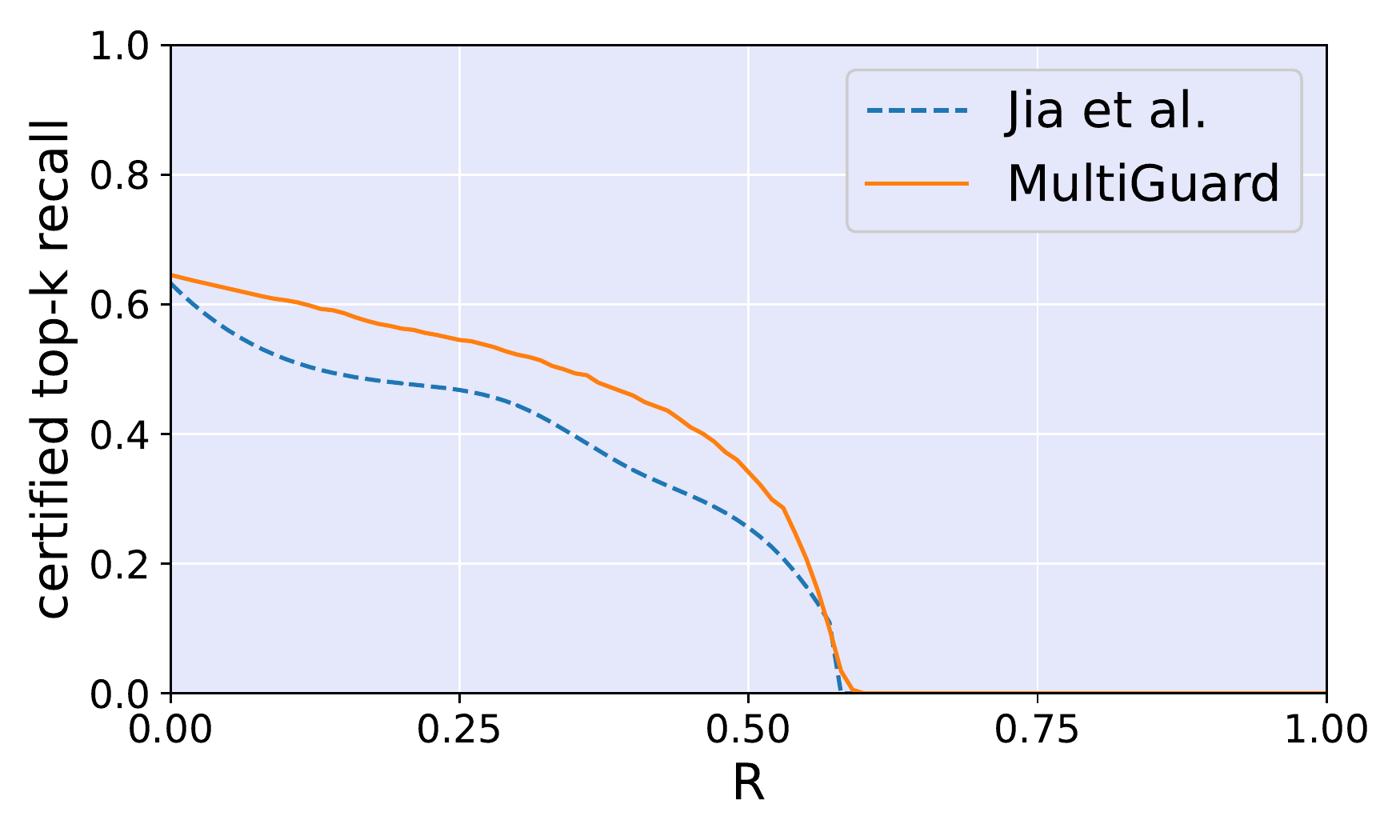}}
{\includegraphics[width=0.3\textwidth]{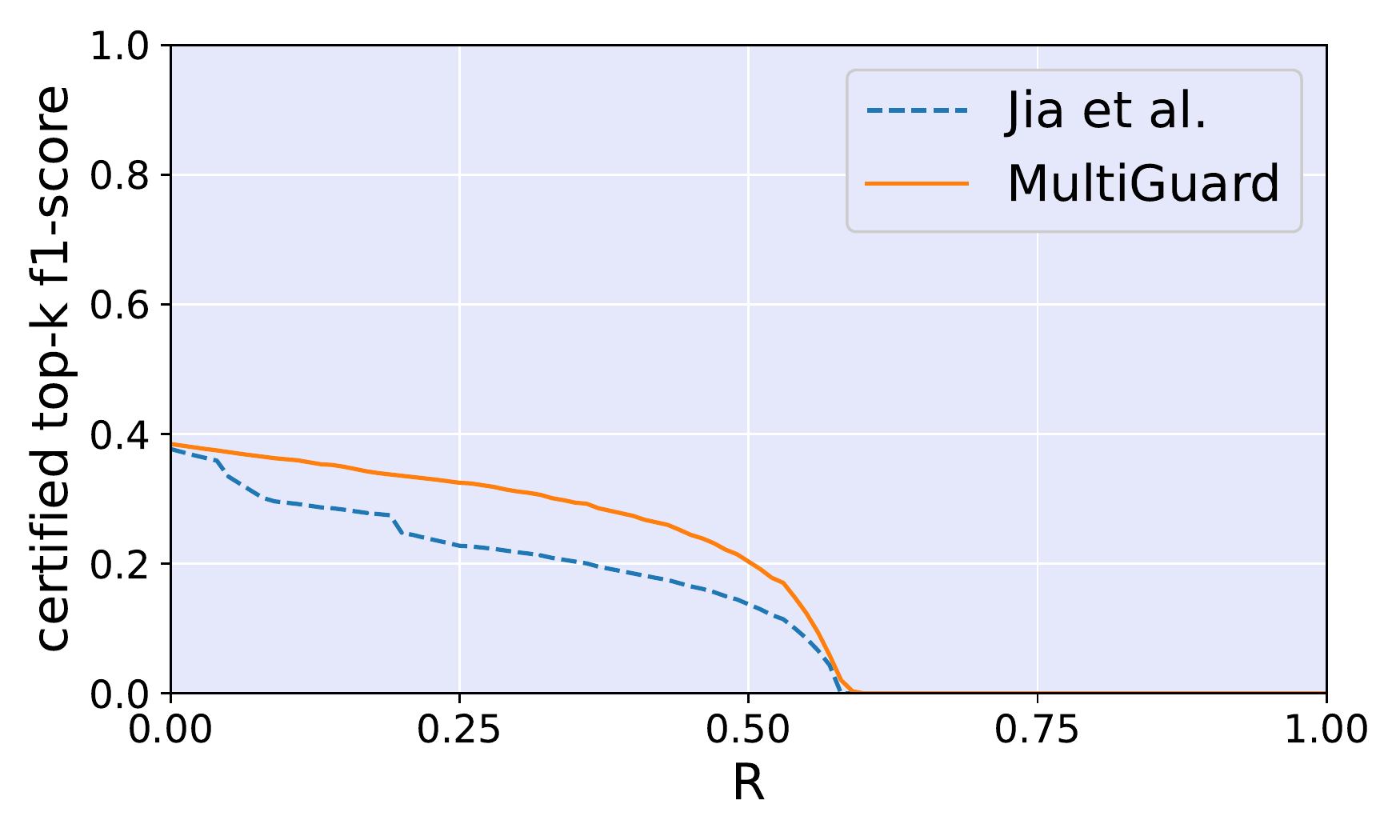}}
\subfloat[Certified top-$k$ precision@$R$]{\includegraphics[width=0.3\textwidth]{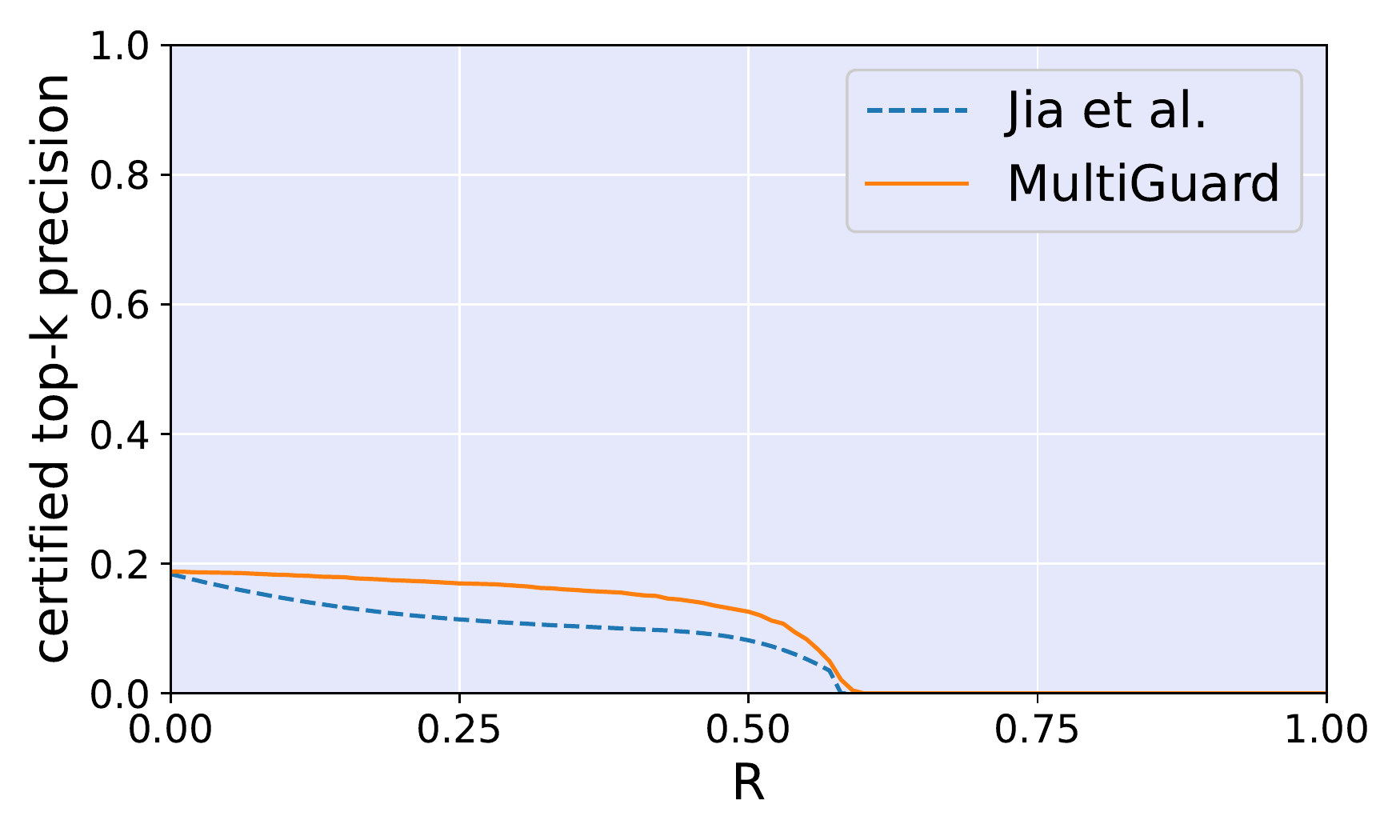}}
\subfloat[Certified top-$k$ recall@$R$]{\includegraphics[width=0.3\textwidth]{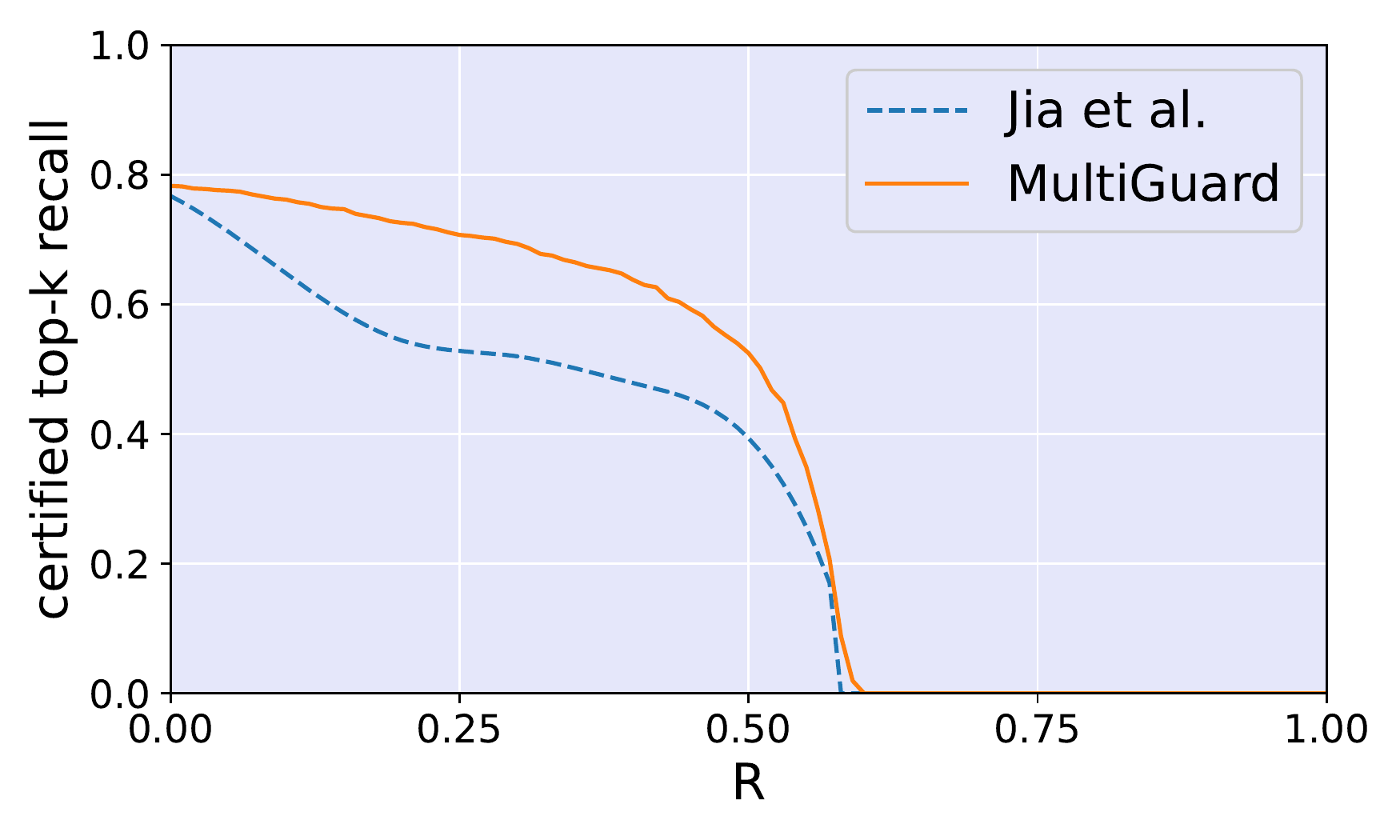}}
\subfloat[Certified top-$k$ f1-score@$R$]{\includegraphics[width=0.3\textwidth]{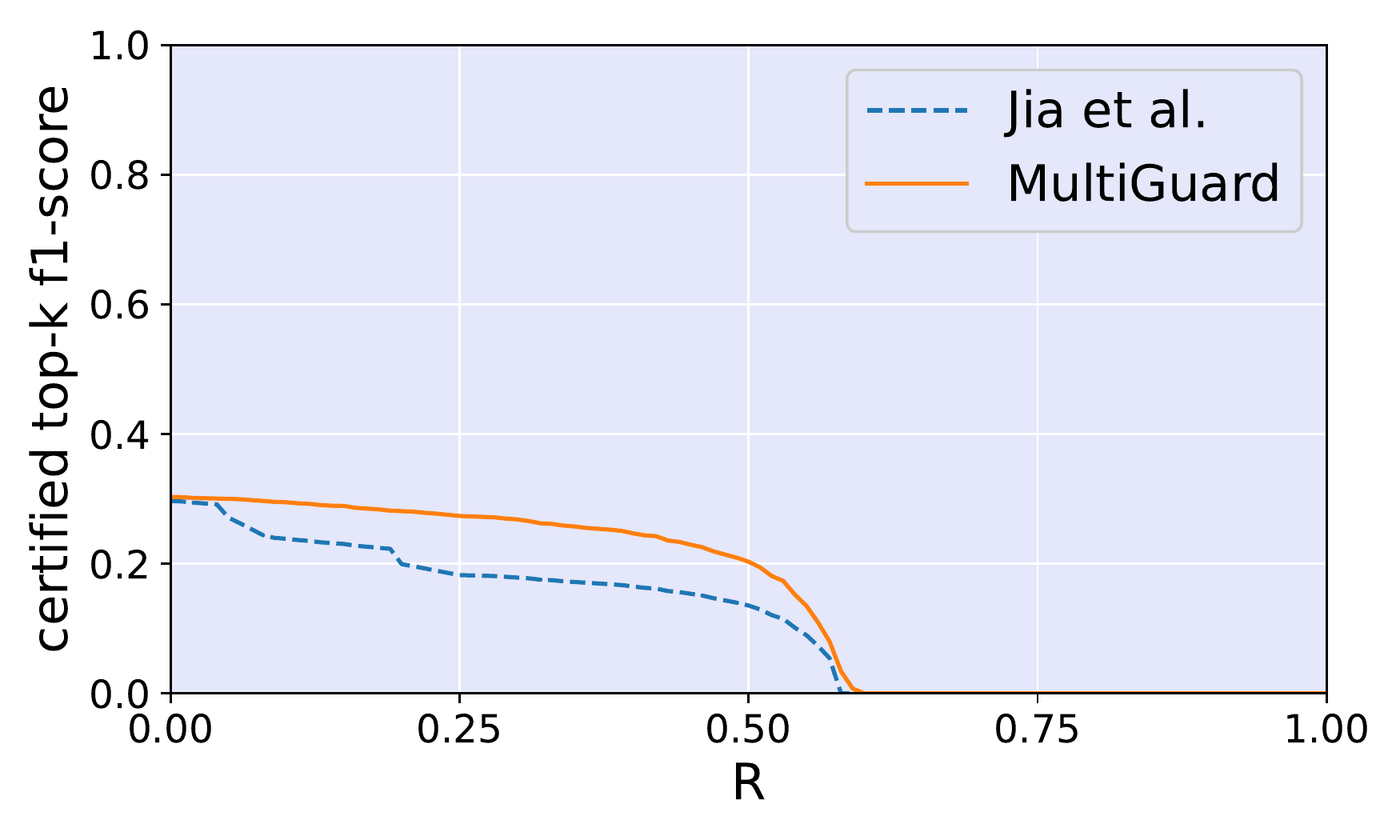}}
\caption{Comparing MultiGuard with with Jia et al.~\cite{jia2019certified} on MS-COCO (first row) and NUS-WIDE (second row) dataset.}
\label{compare_jia_coco_nus}
\end{figure*}

\begin{figure*}
\centering
{\includegraphics[width=0.3\textwidth]{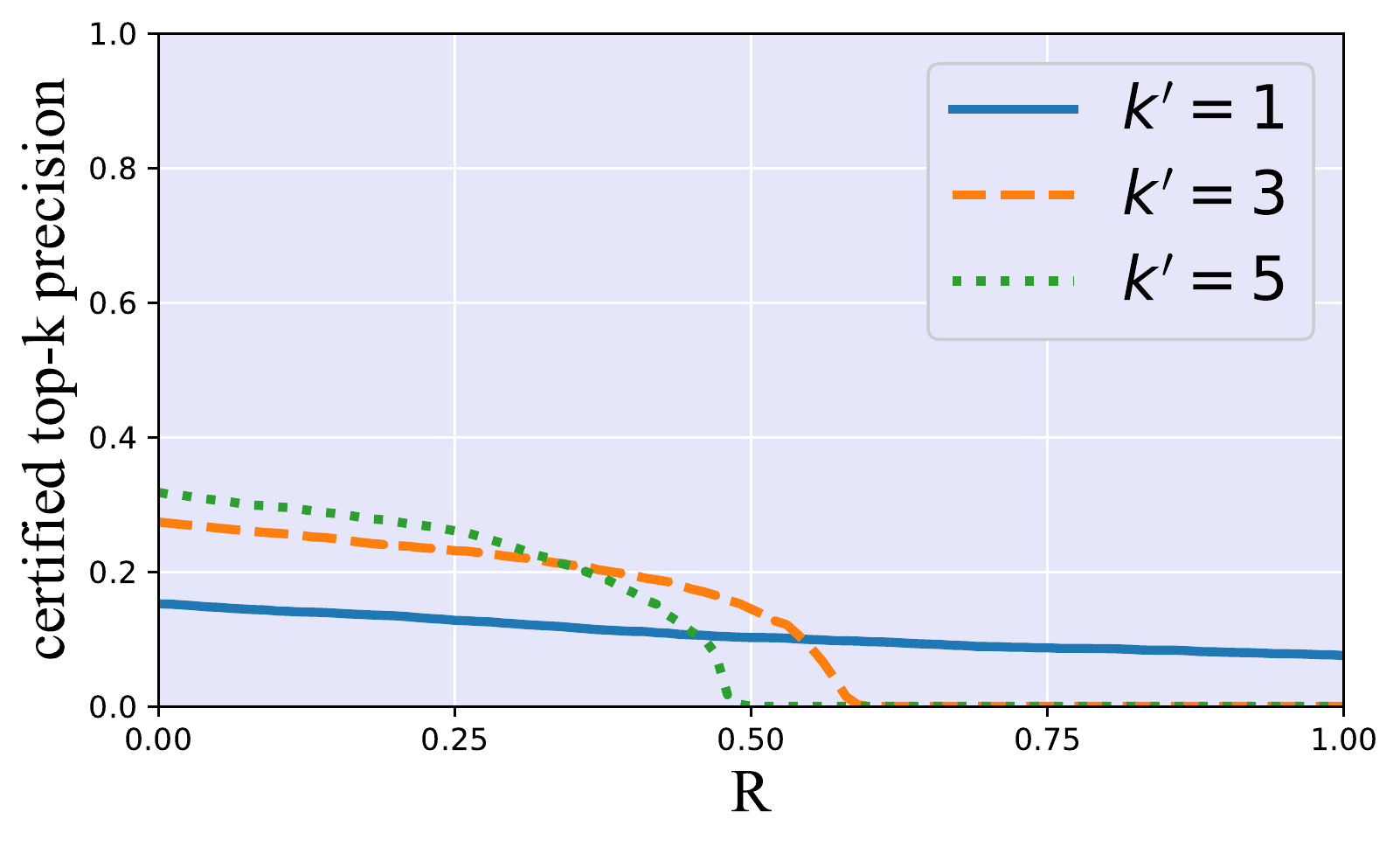}}
{\includegraphics[width=0.3\textwidth]{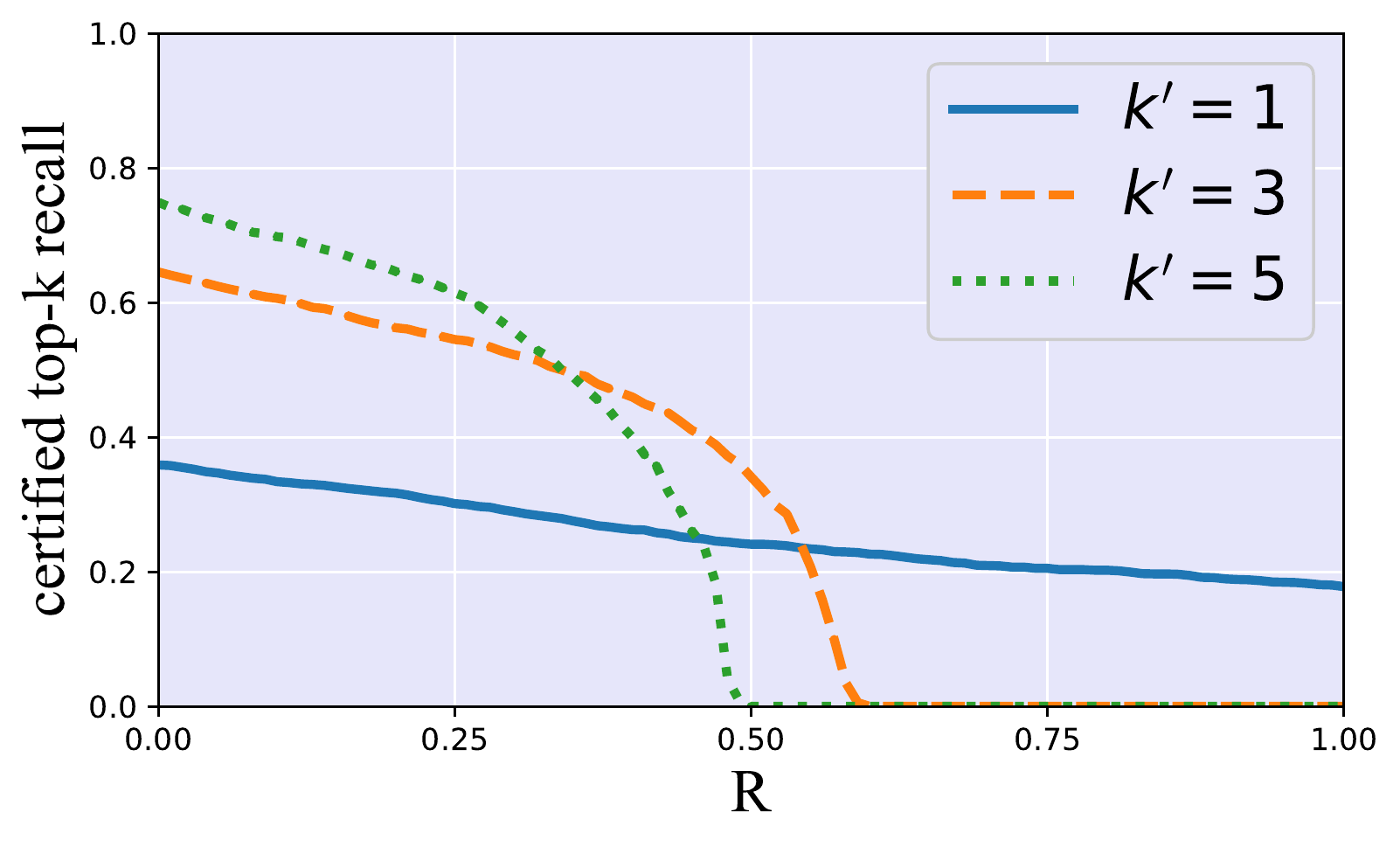}}
{\includegraphics[width=0.3\textwidth]{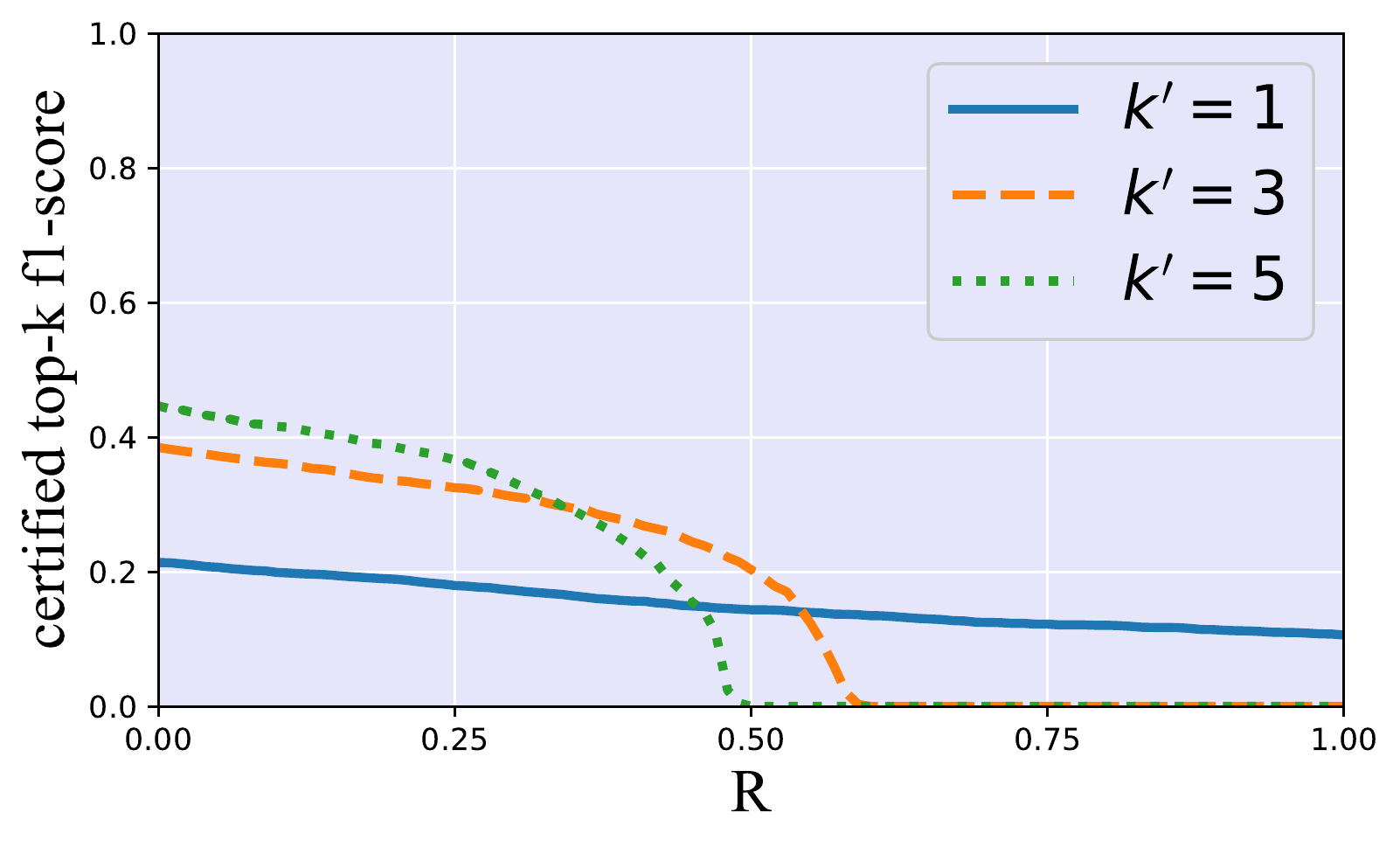}}
\subfloat[Certified top-$k$ precision@$R$]{\includegraphics[width=0.3\textwidth]{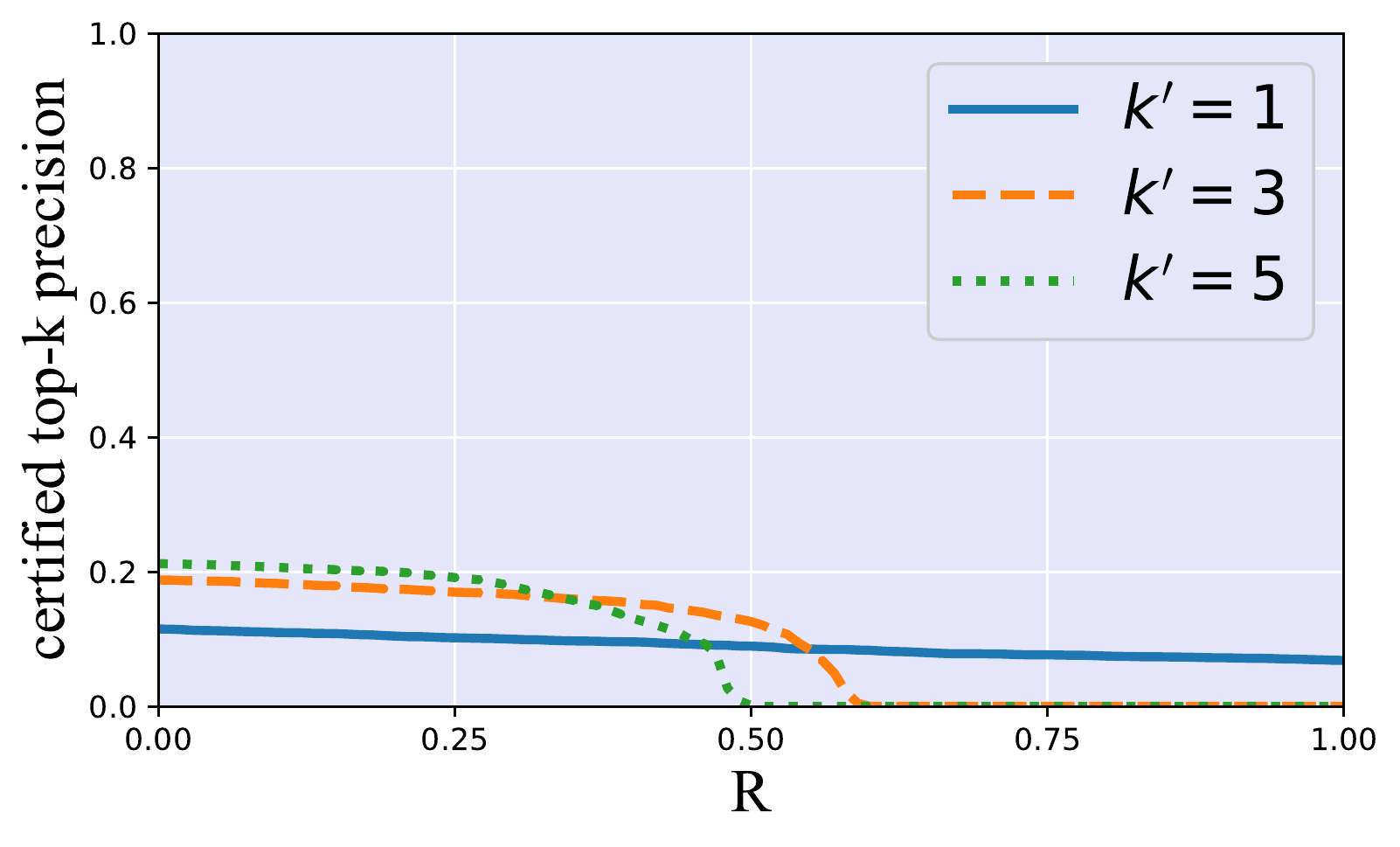}}
\subfloat[Certified top-$k$ recall@$R$]{\includegraphics[width=0.3\textwidth]{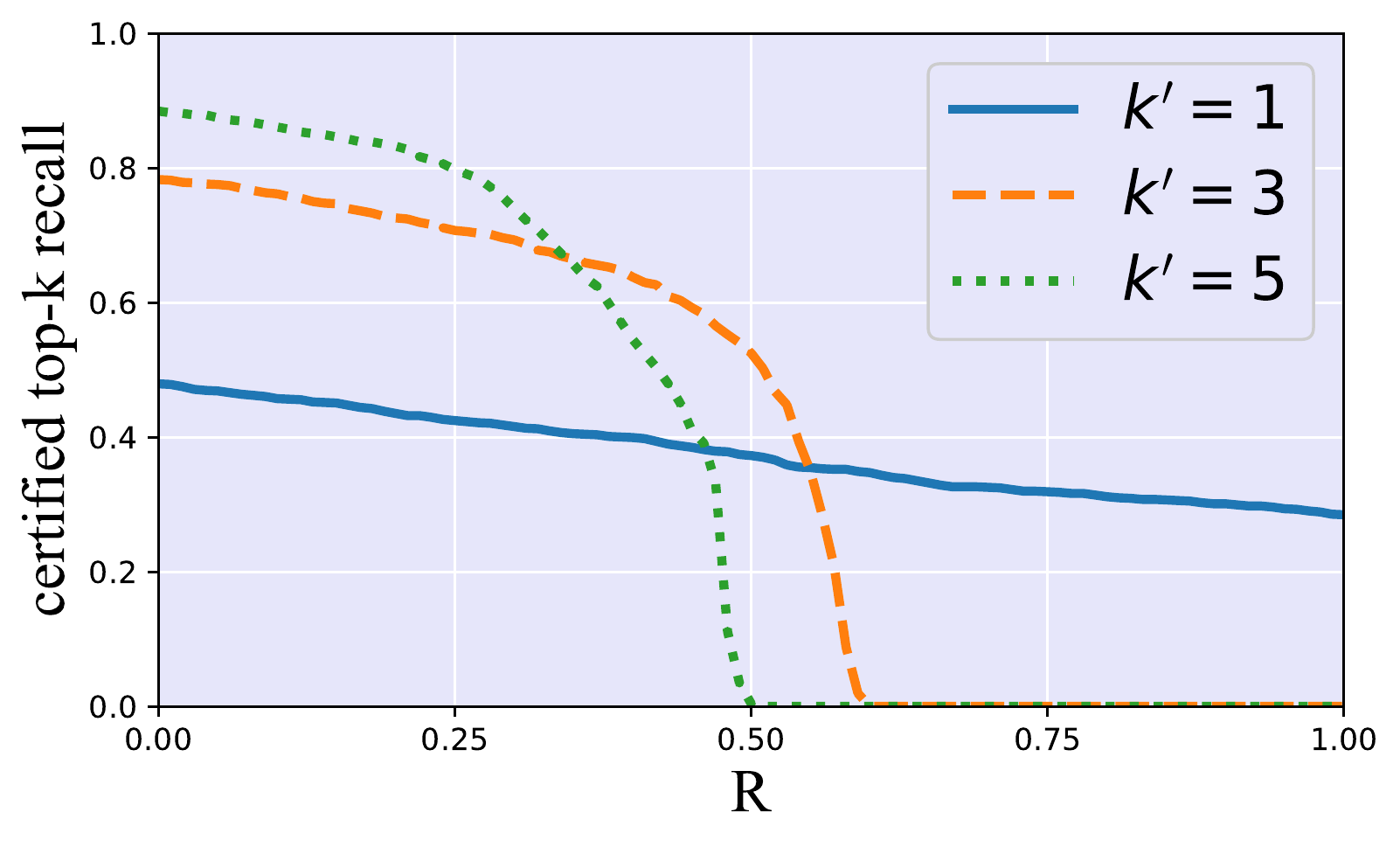}}
\subfloat[Certified top-$k$ f1-score@$R$]{\includegraphics[width=0.3\textwidth]{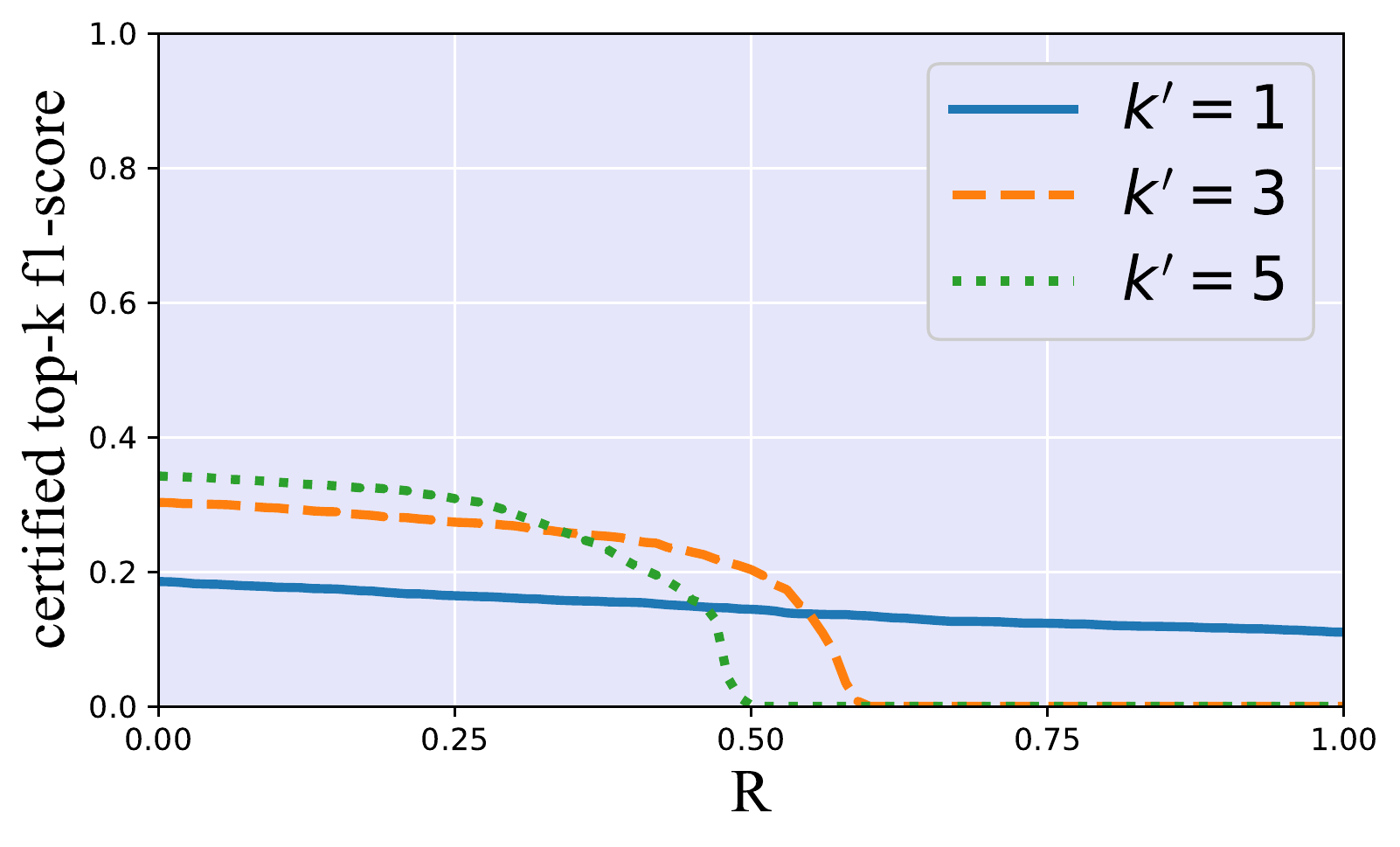}}
\caption{Impact of $k'$ on the certified top-$k$ precision@$R$, certified top-$k$ recall@$R$, and certified top-$k$ f1-score@$R$ on MS-COCO (first row) and NUS-WIDE (second row) dataset.}
\label{impact_of_kprime_coco_nuswide}
\end{figure*}

\begin{figure*}
\centering
{\includegraphics[width=0.3\textwidth]{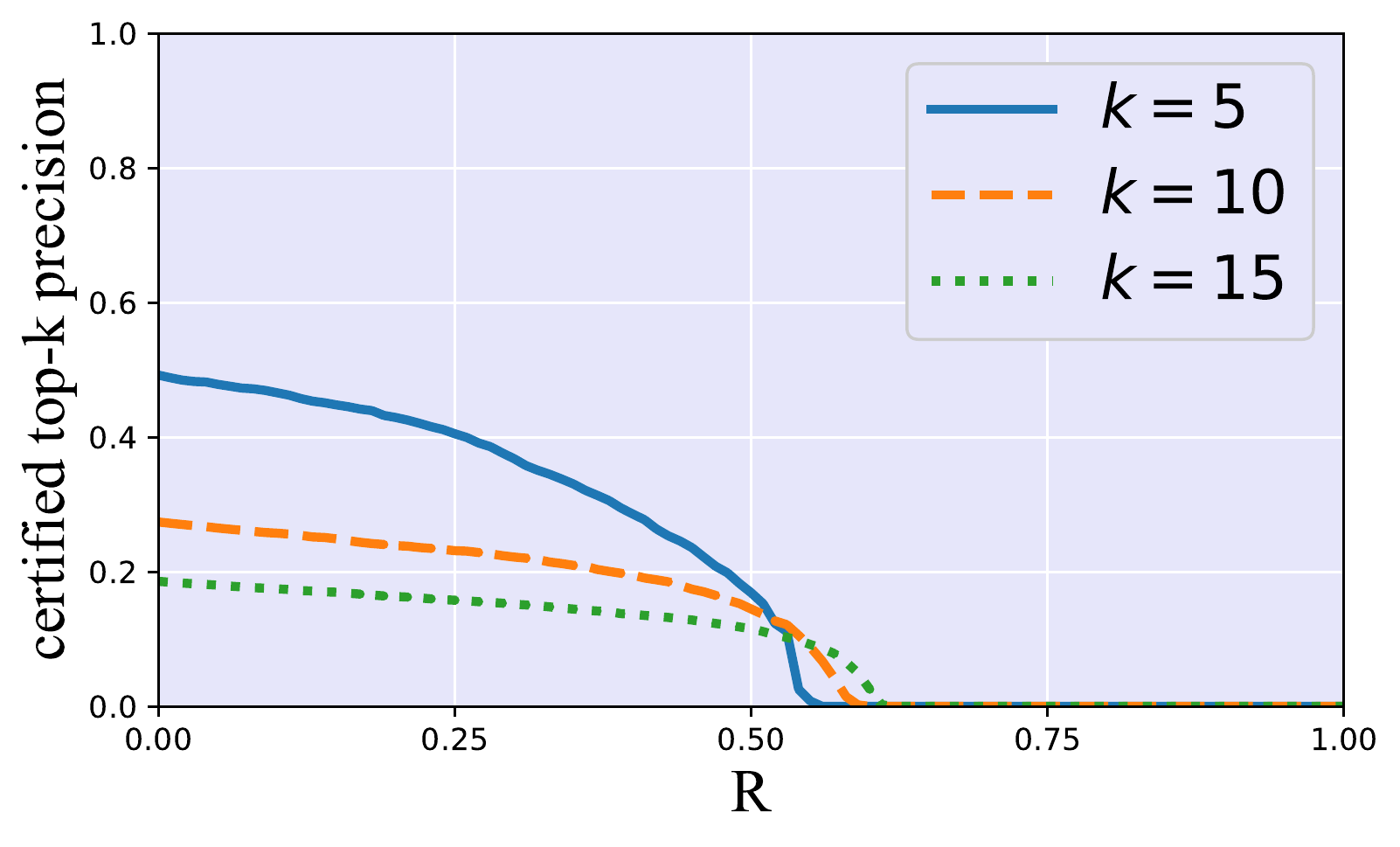}}
{\includegraphics[width=0.3\textwidth]{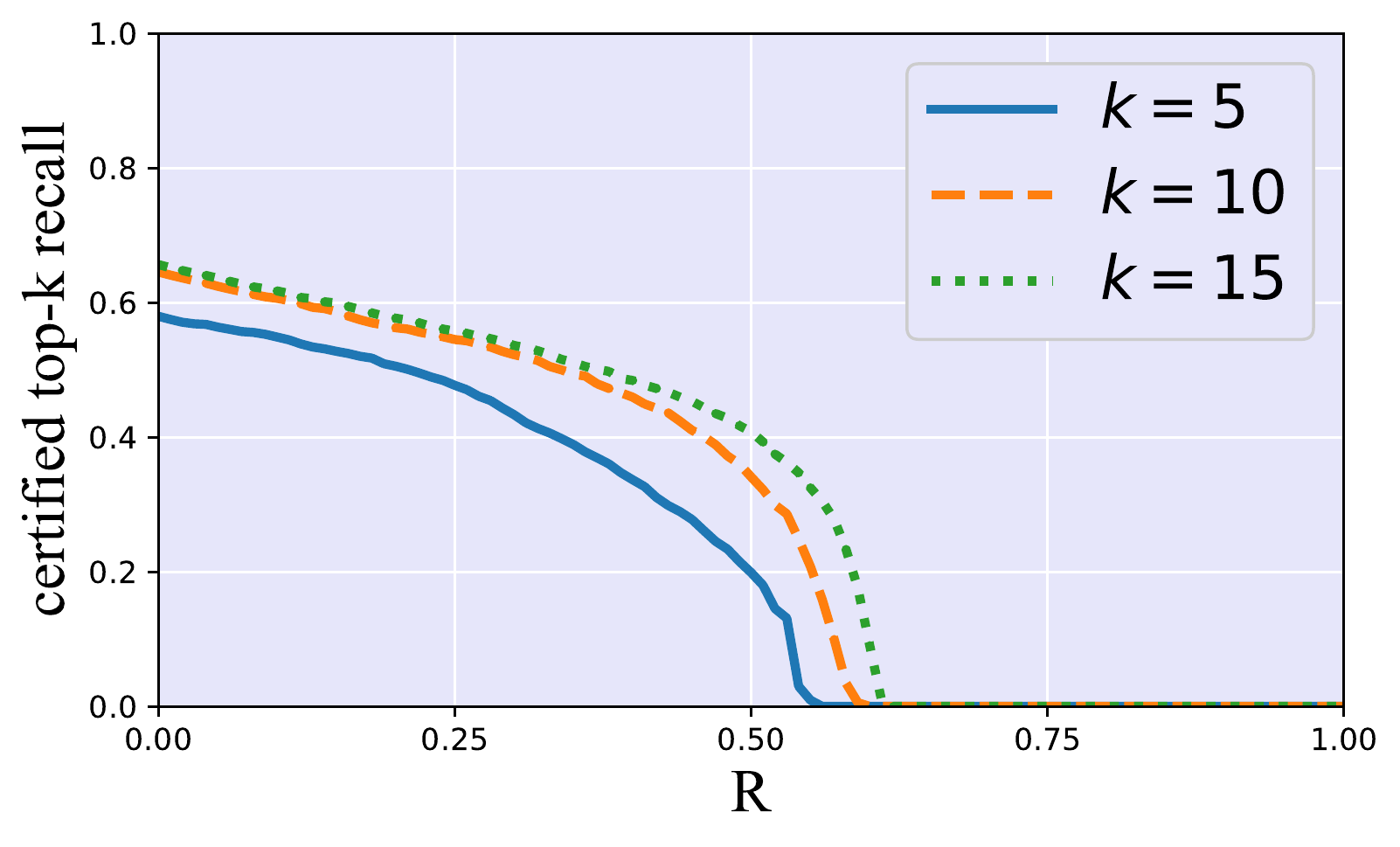}}
{\includegraphics[width=0.3\textwidth]{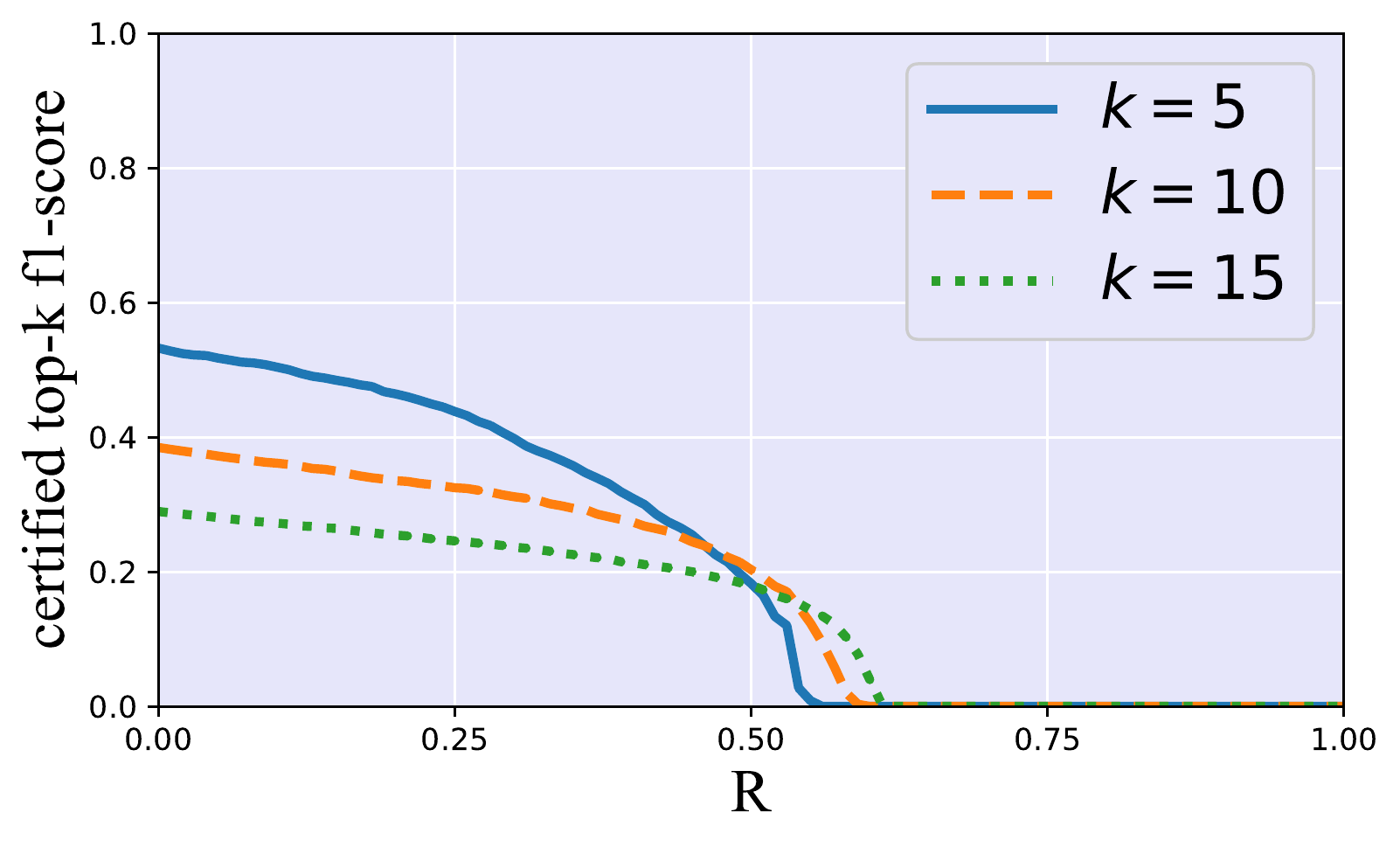}}
\subfloat[Certified top-$k$ precision@$R$]{\includegraphics[width=0.3\textwidth]{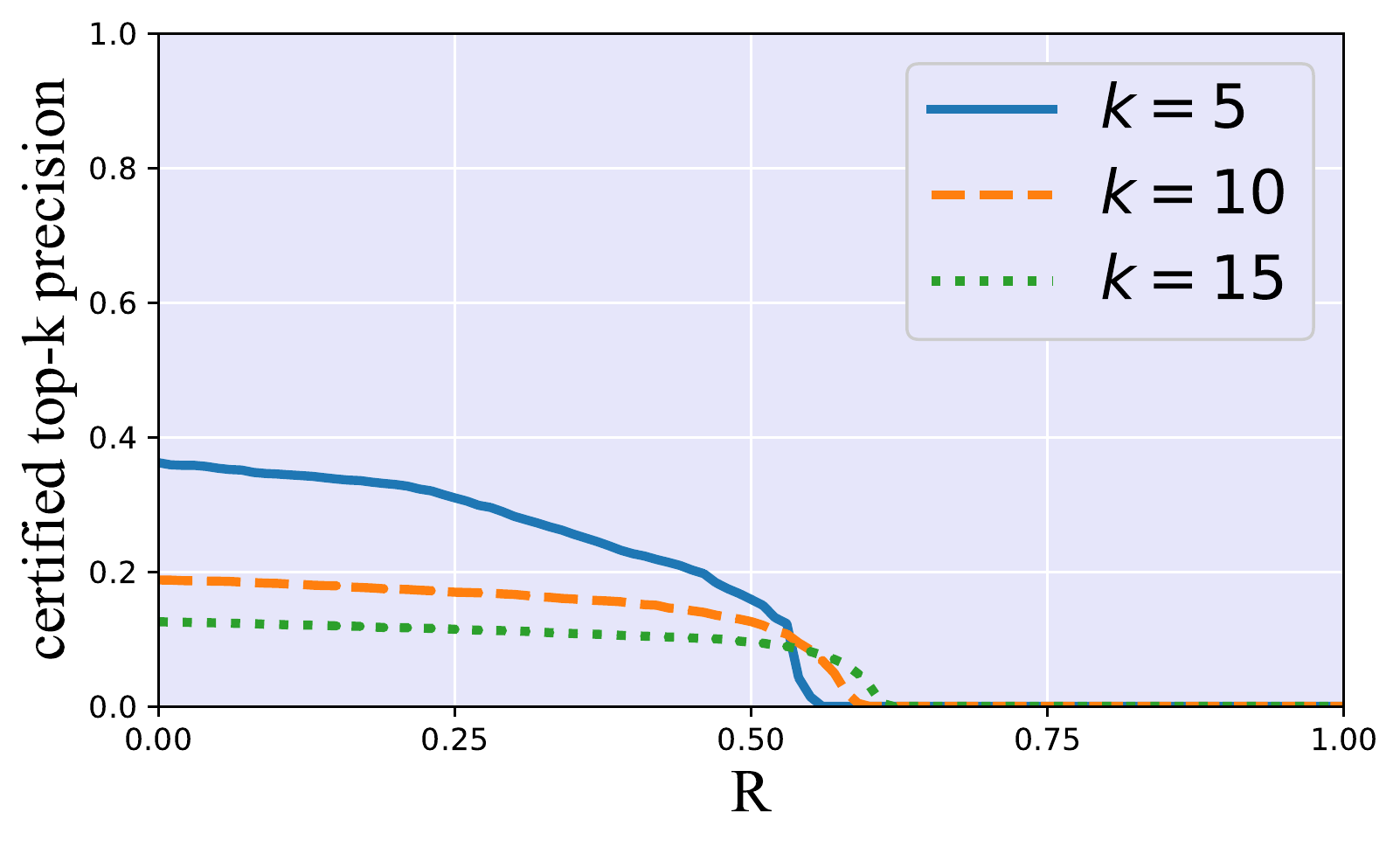}}
\subfloat[Certified top-$k$ recall@$R$]{\includegraphics[width=0.3\textwidth]{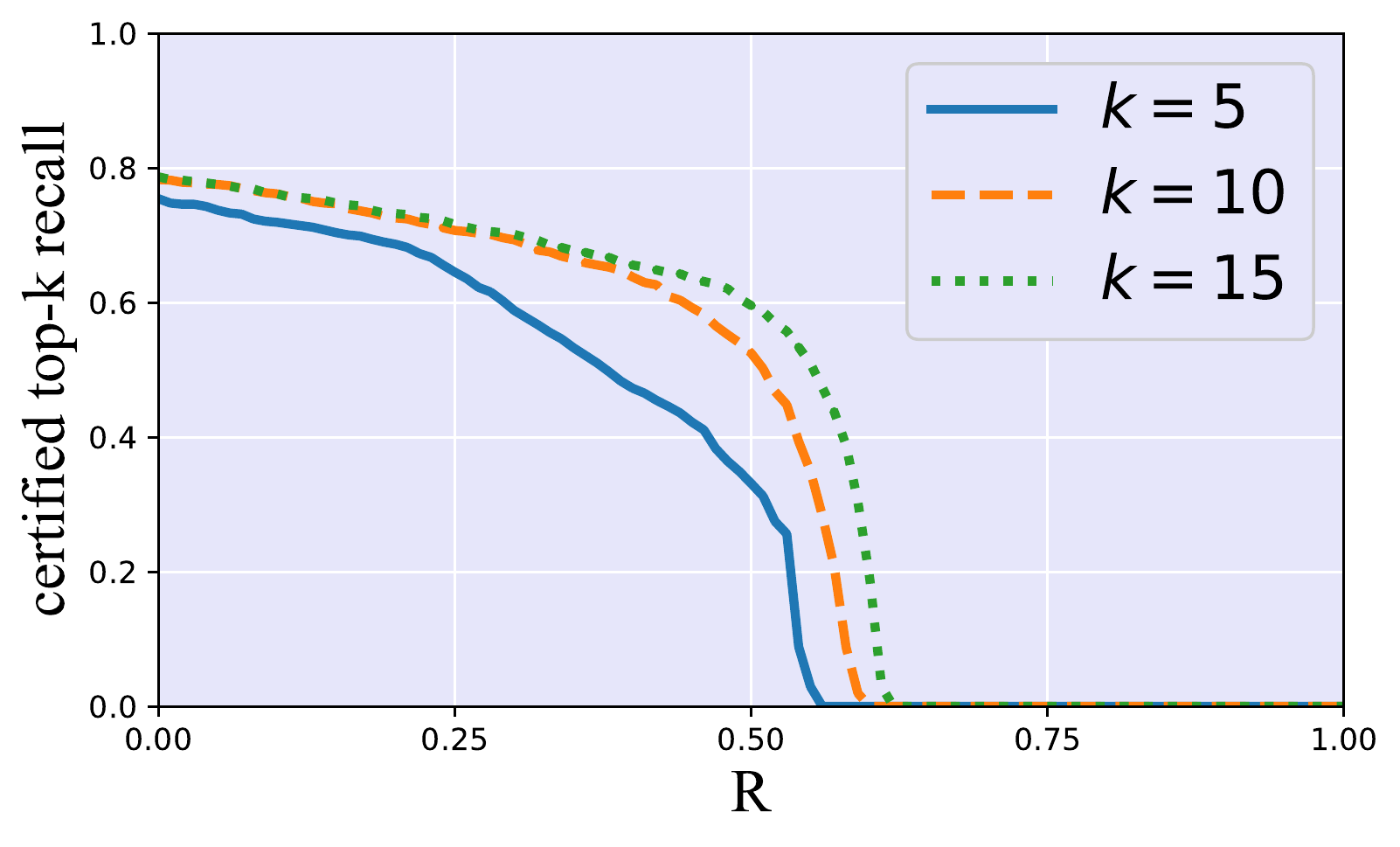}}
\subfloat[Certified top-$k$ f1-score@$R$]{\includegraphics[width=0.3\textwidth]{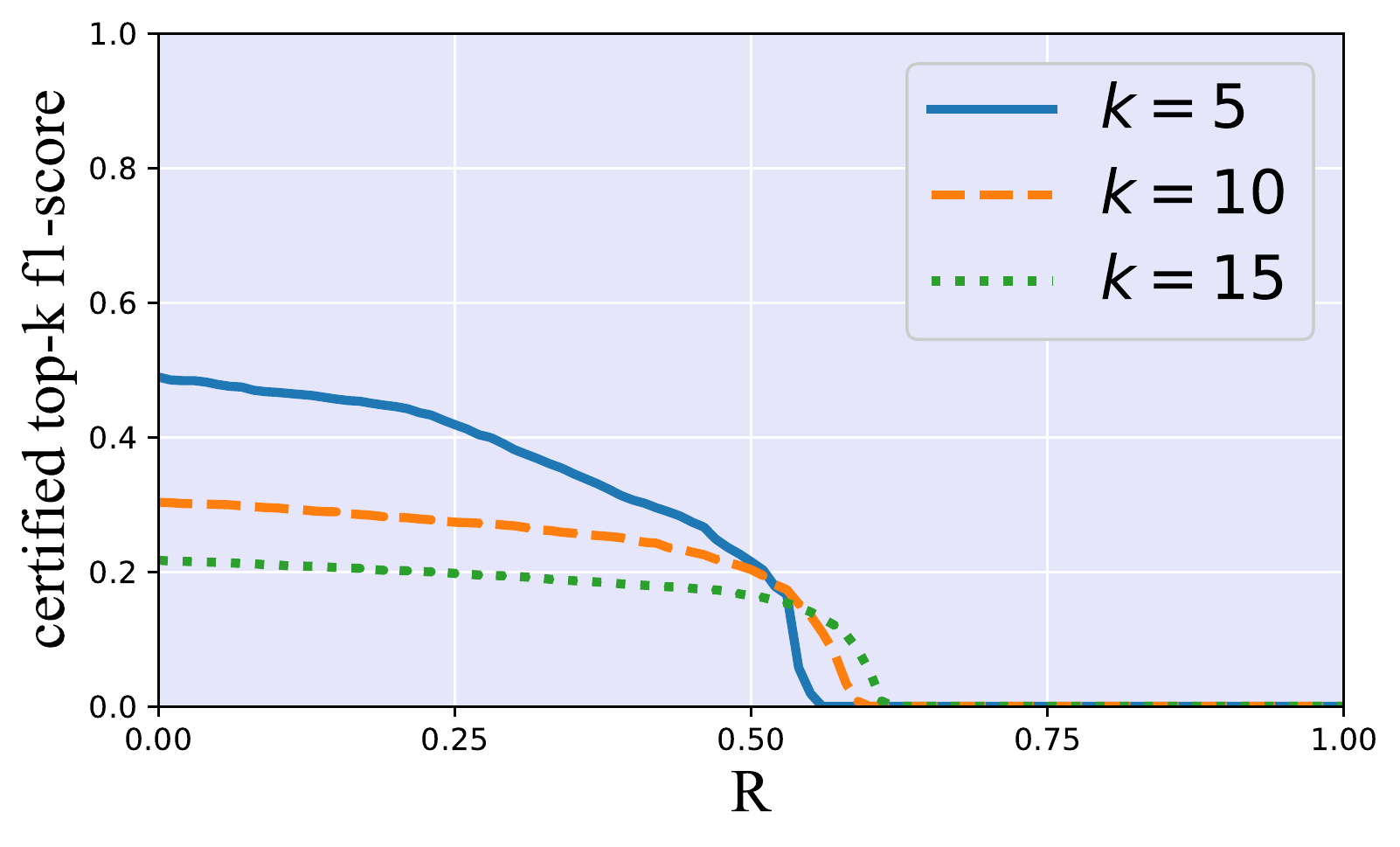}}
\caption{Impact of $k$ on the certified top-$k$ precision@$R$, certified top-$k$ recall@$R$, and certified top-$k$ f1-score@$R$ on MS-COCO (first row) and NUS-WIDE (second row) dataset.}
\label{impact_of_k_coco_nuswide}
\end{figure*}

\begin{figure*}
\centering
{\includegraphics[width=0.3\textwidth]{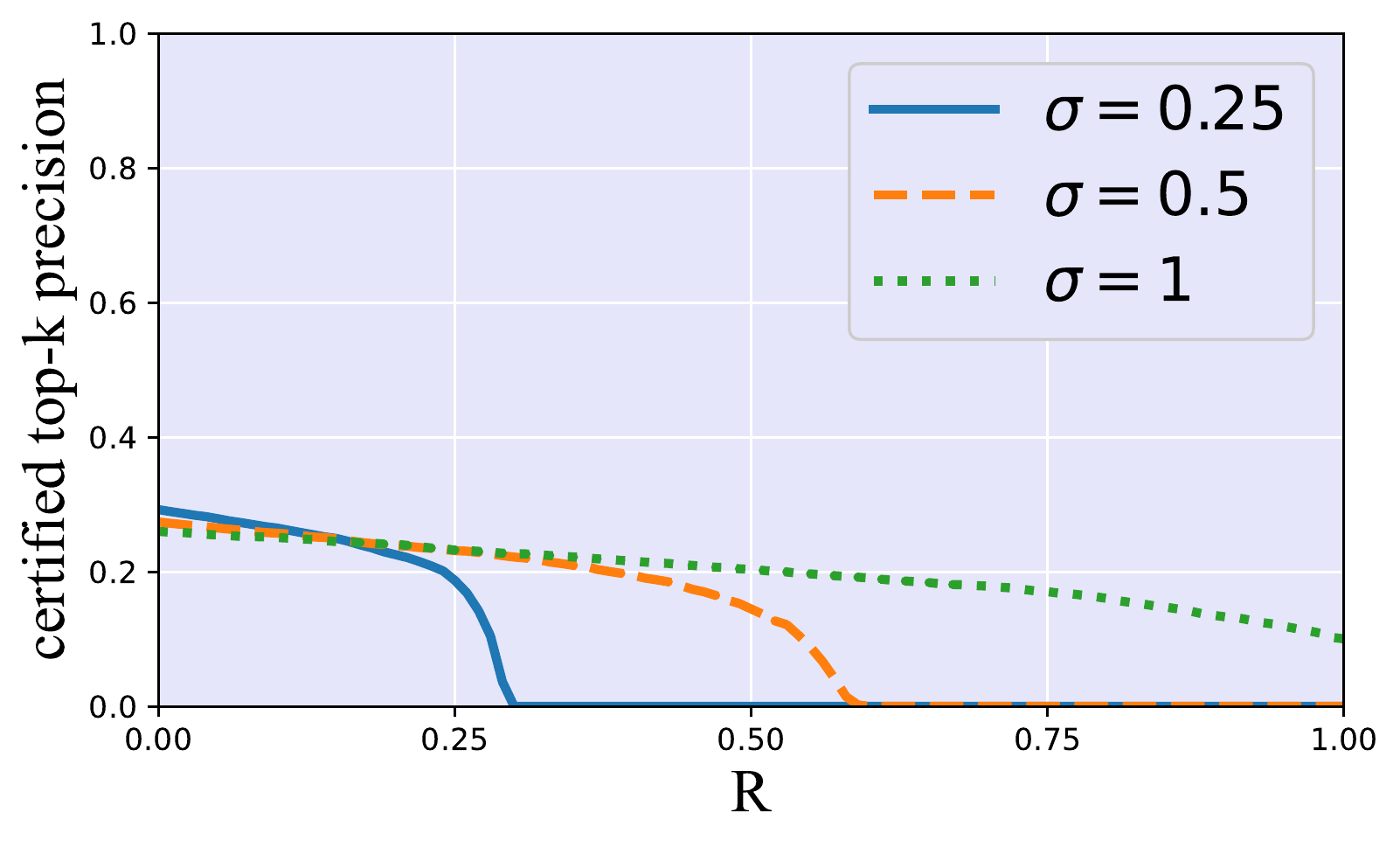}}
{\includegraphics[width=0.3\textwidth]{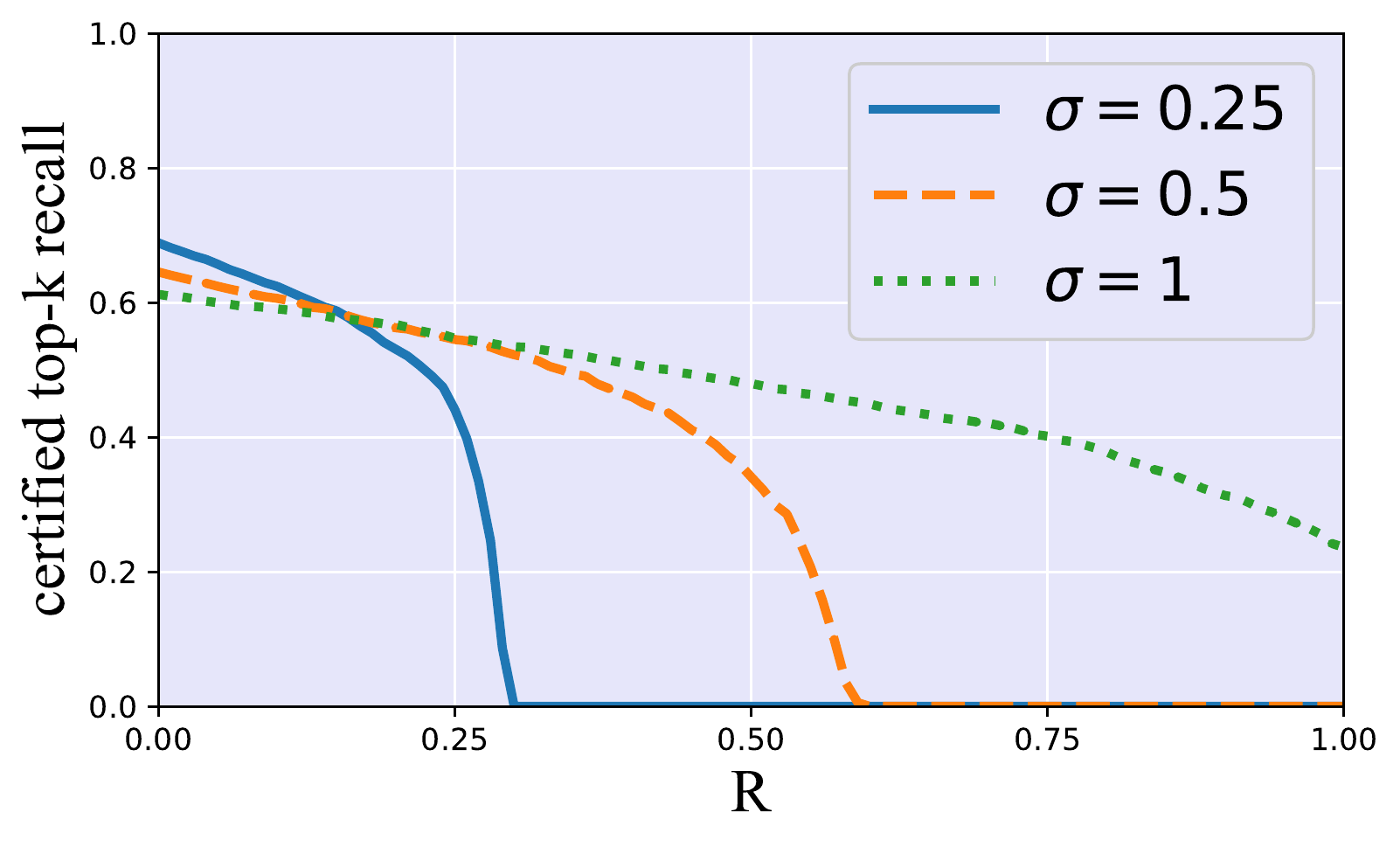}}
{\includegraphics[width=0.3\textwidth]{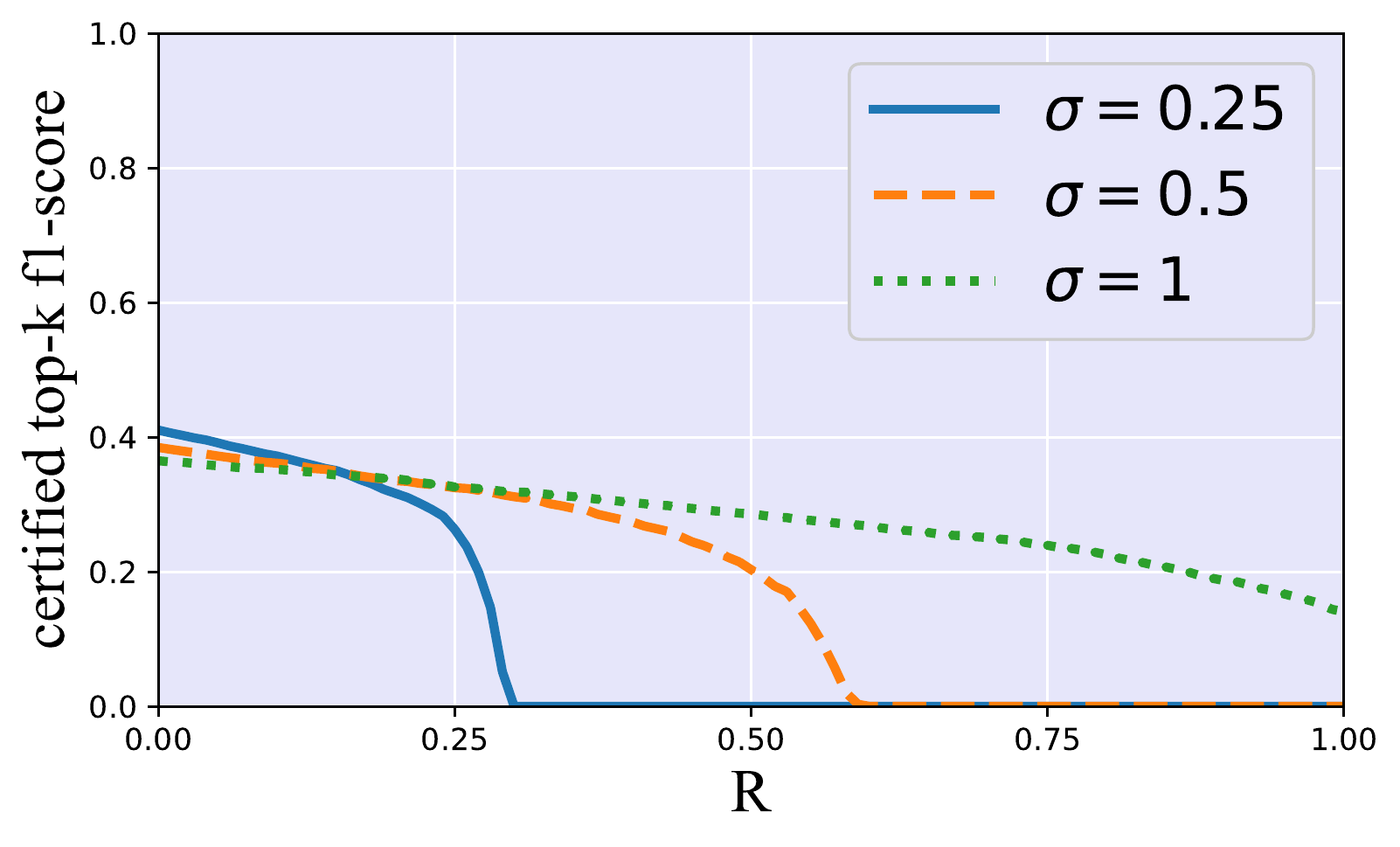}}
\subfloat[Certified top-$k$ precision@$R$]{\includegraphics[width=0.3\textwidth]{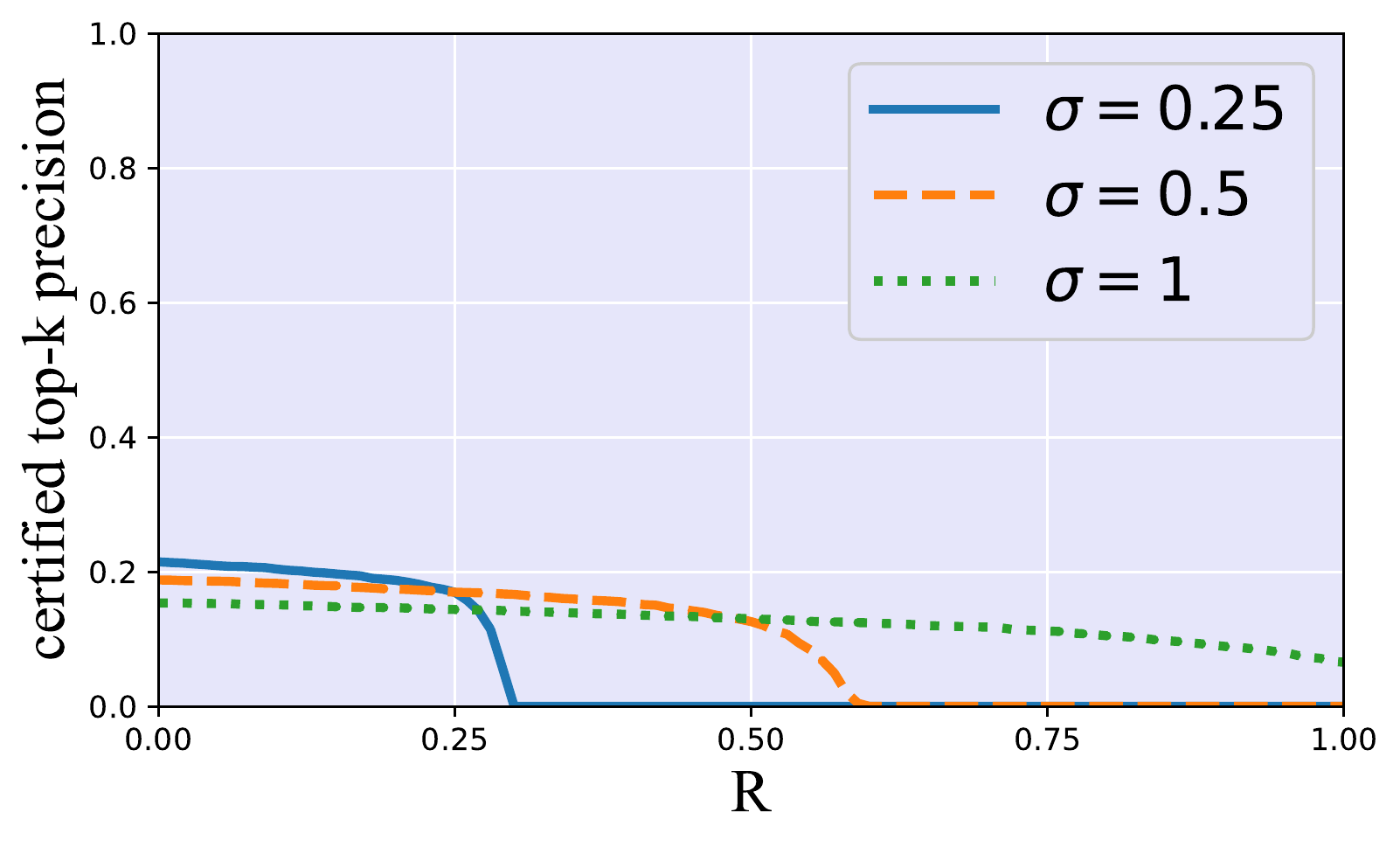}}
\subfloat[Certified top-$k$ recall@$R$]{\includegraphics[width=0.3\textwidth]{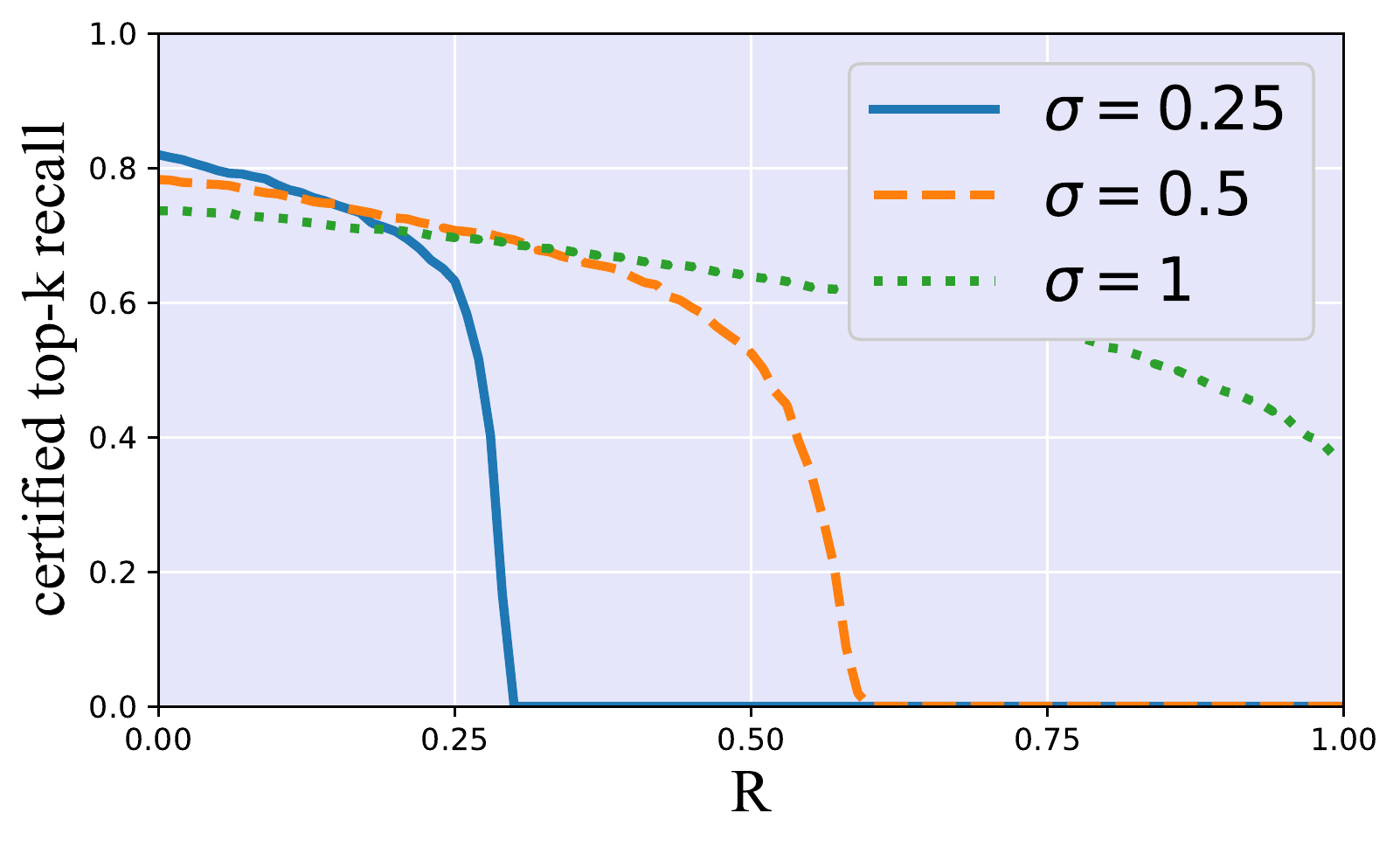}}
\subfloat[Certified top-$k$ f1-score@$R$]{\includegraphics[width=0.3\textwidth]{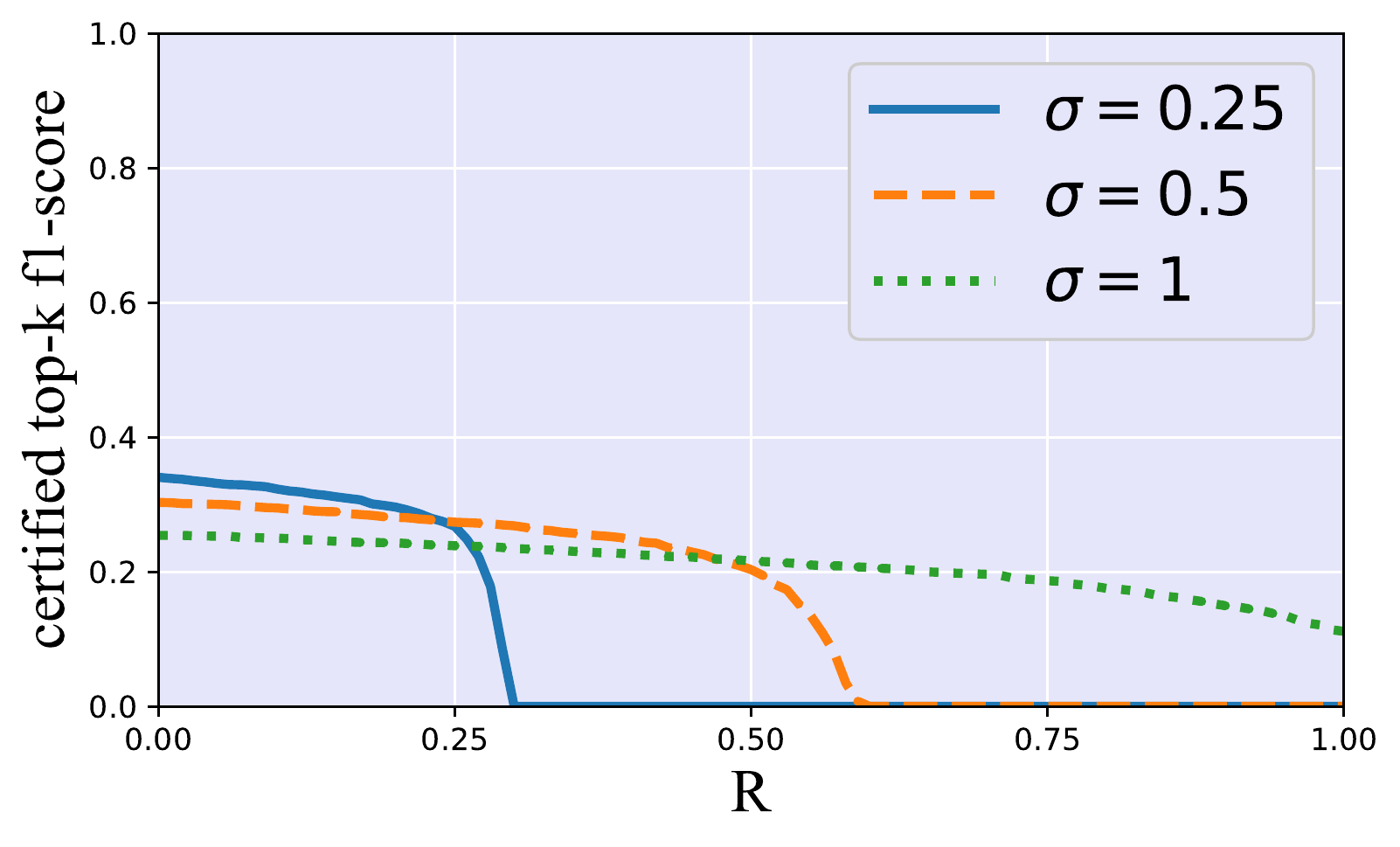}}
\caption{Impact of $\sigma$ on the certified top-$k$ precision@$R$, certified top-$k$ recall@$R$, and certified top-$k$ f1-score@$R$ on MS-COCO (first row) and NUS-WIDE (second row) dataset.}
\label{impact_of_sigma_coco_nuswide}
\end{figure*}

\begin{figure*}
\centering
{\includegraphics[width=0.3\textwidth]{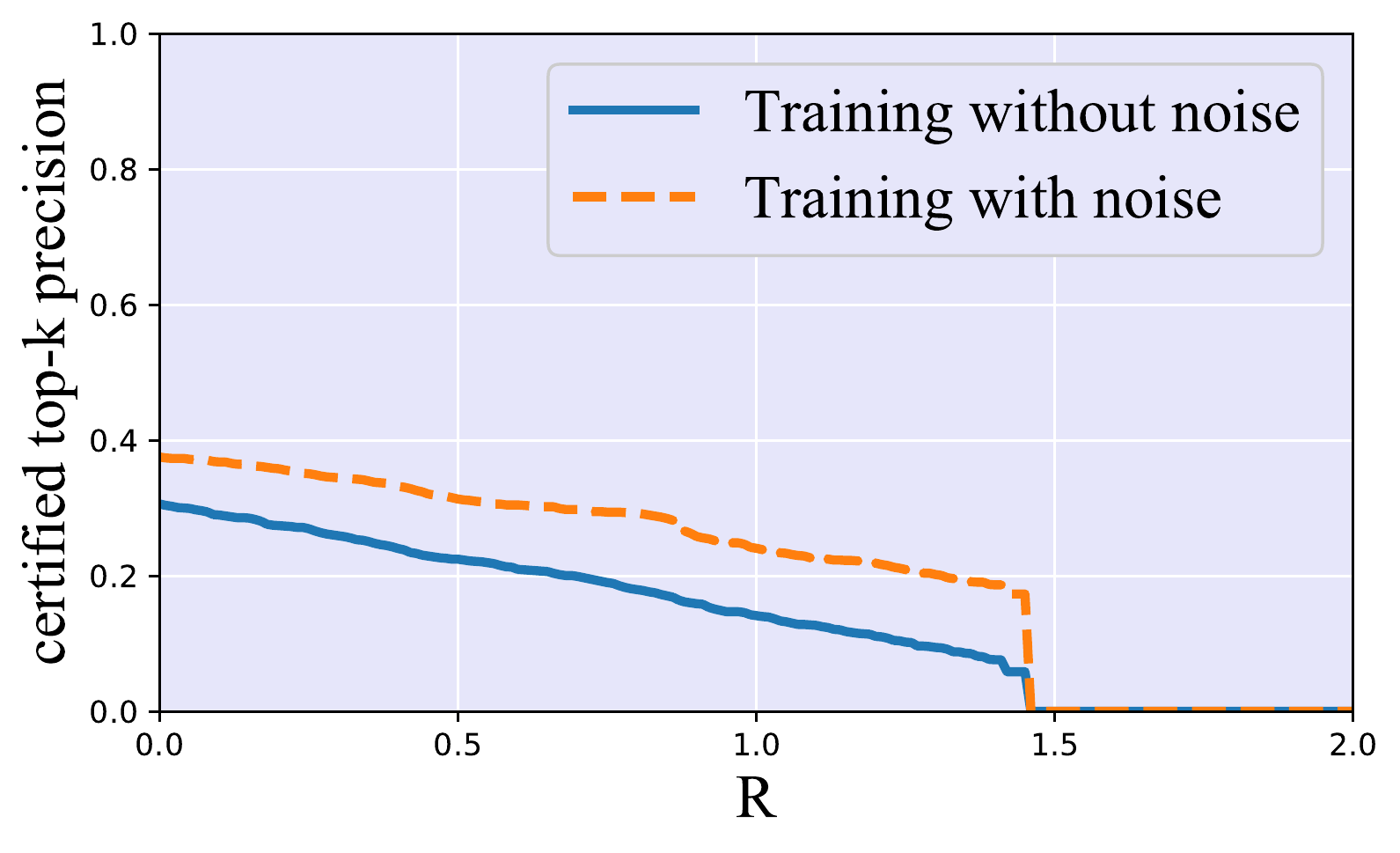}}
{\includegraphics[width=0.3\textwidth]{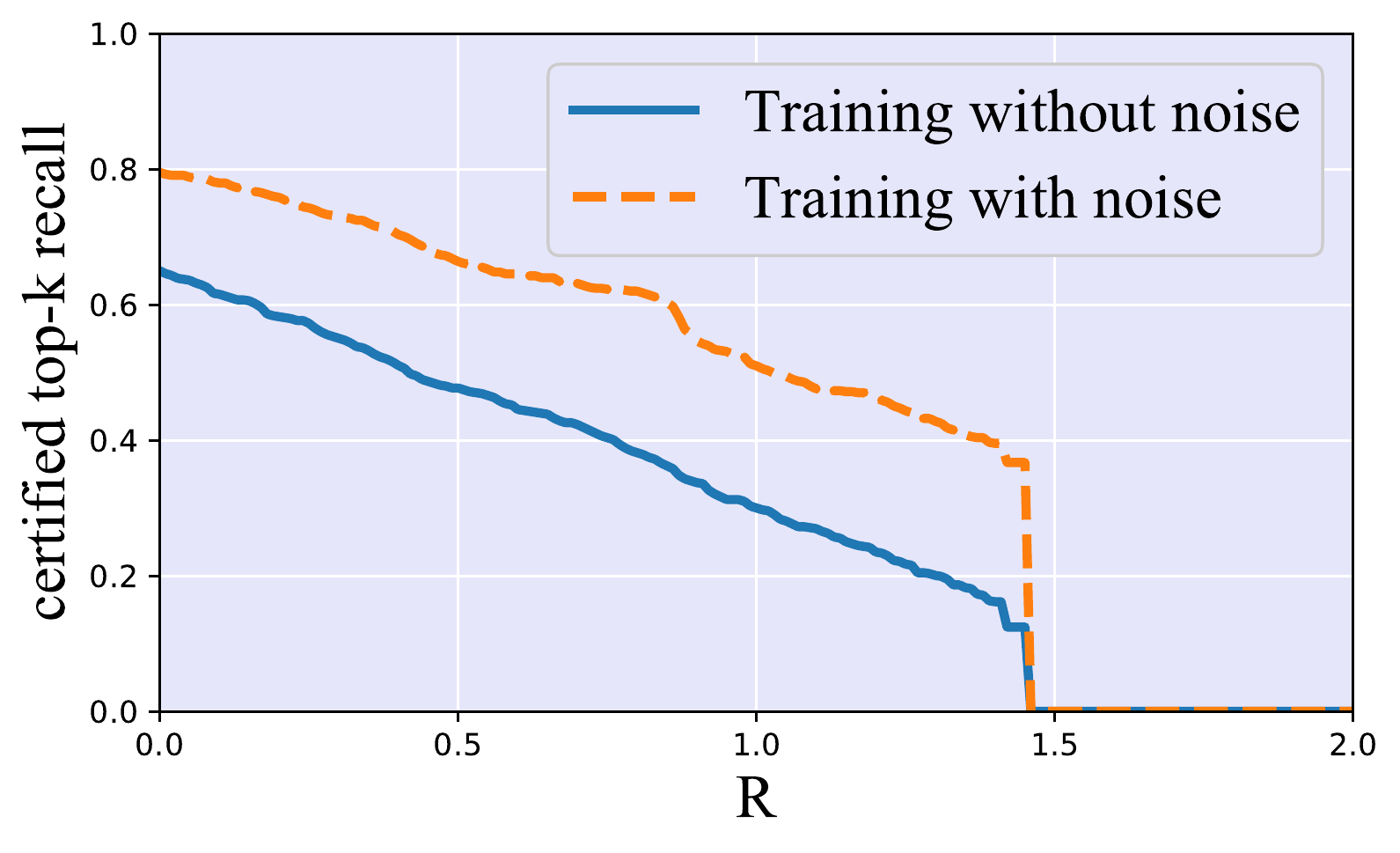}}
{\includegraphics[width=0.3\textwidth]{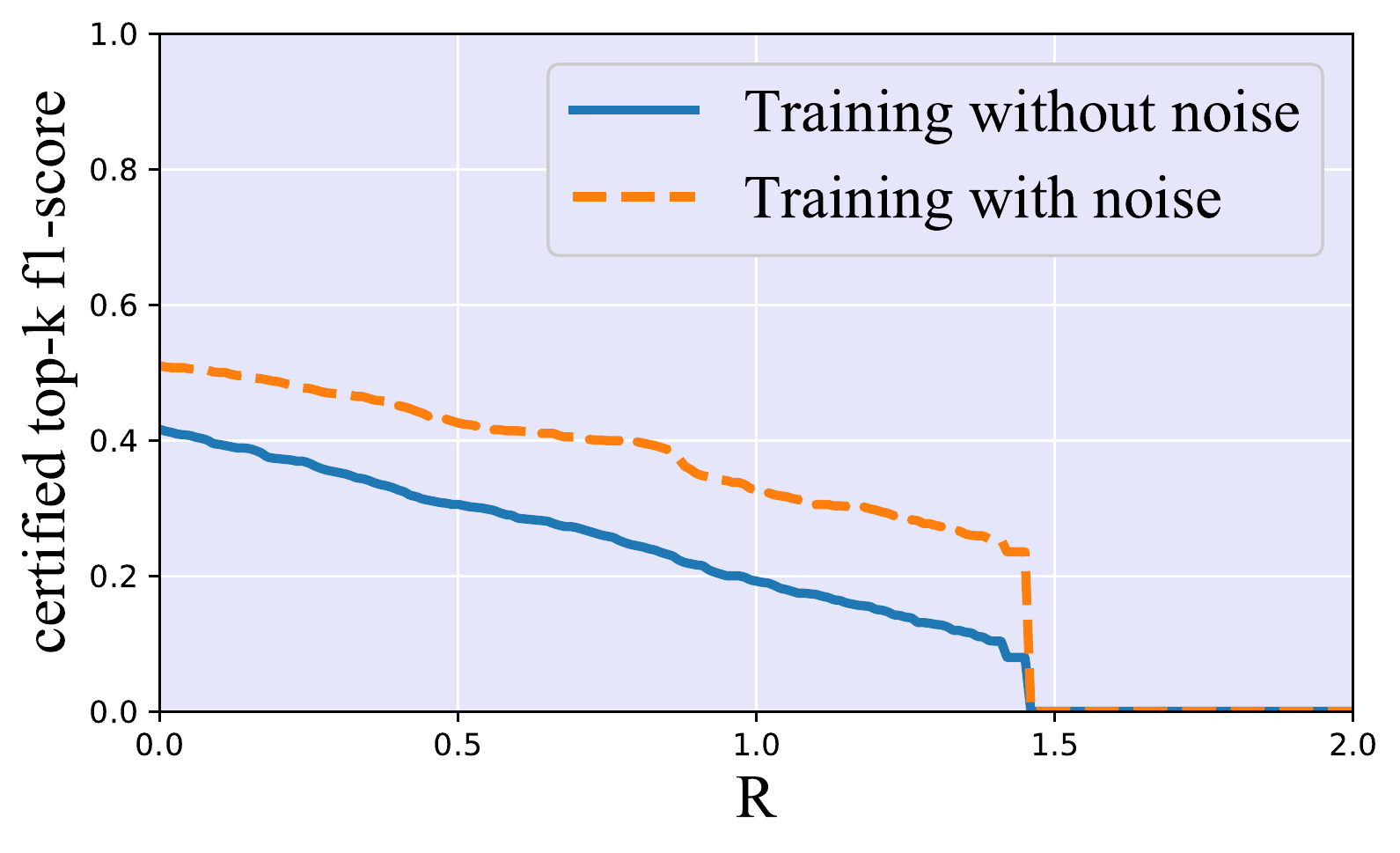}}
{\includegraphics[width=0.3\textwidth]{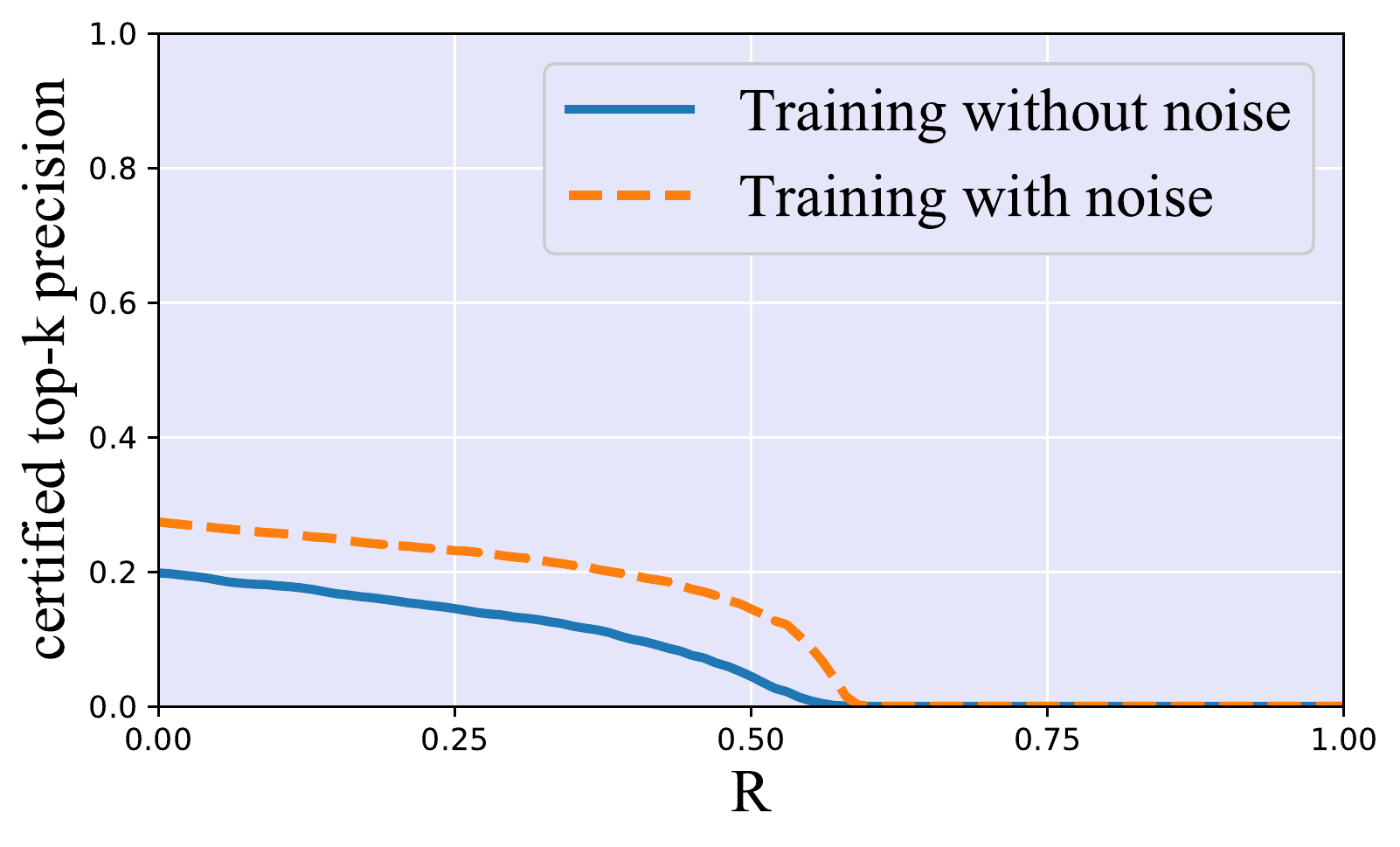}}
{\includegraphics[width=0.3\textwidth]{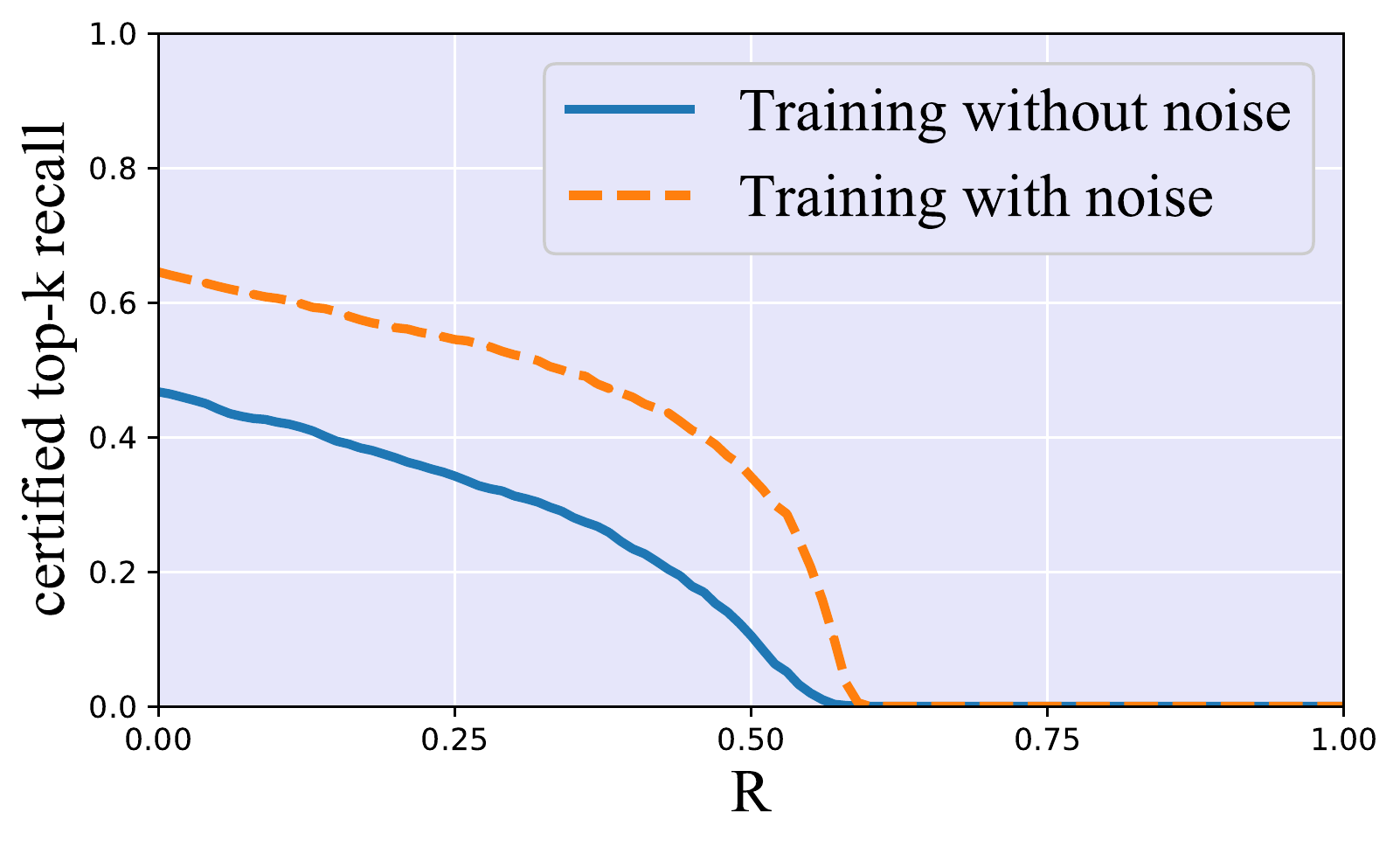}}
{\includegraphics[width=0.3\textwidth]{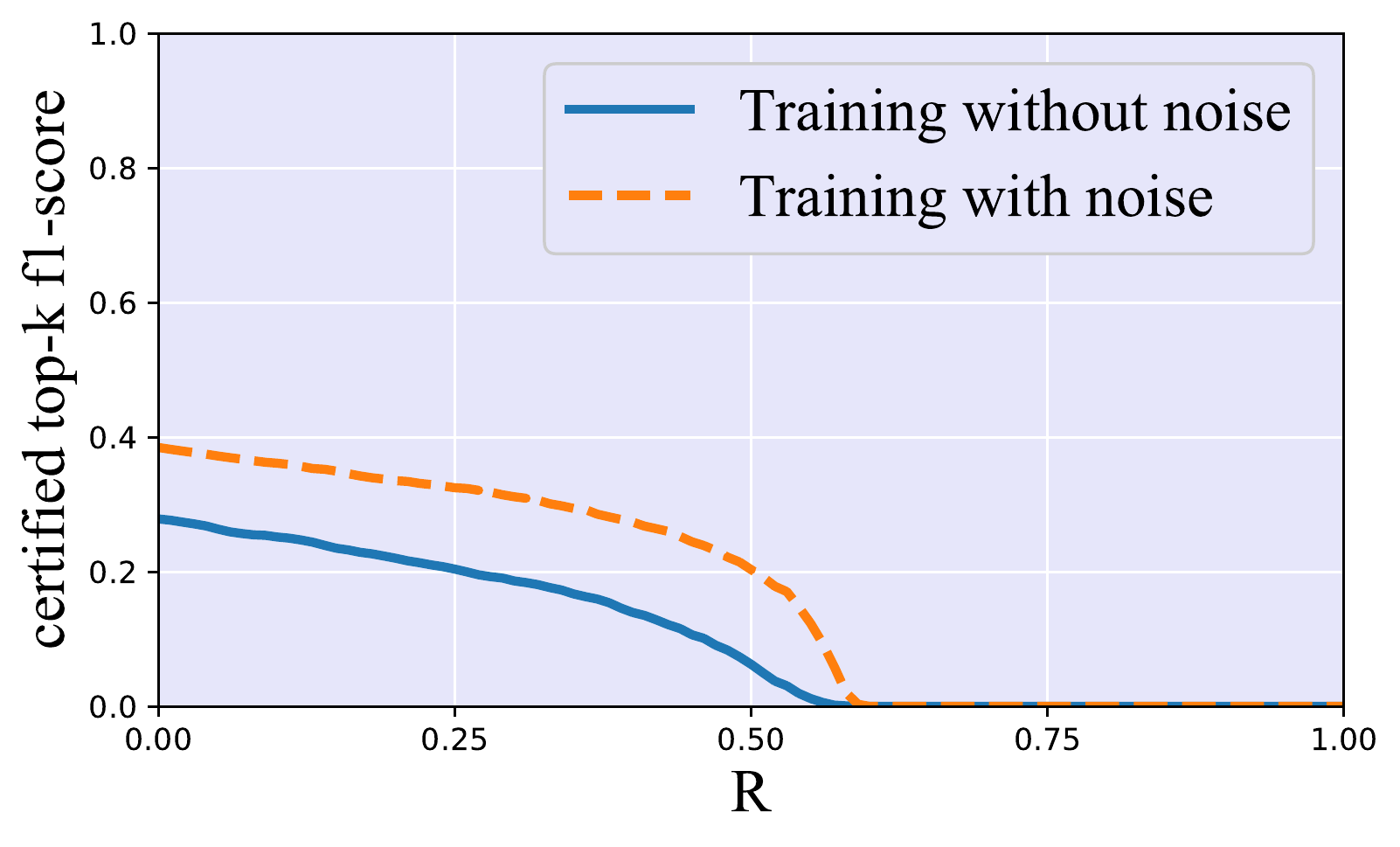}}
\subfloat[Certified top-$k$ precision@$R$]{\includegraphics[width=0.3\textwidth]{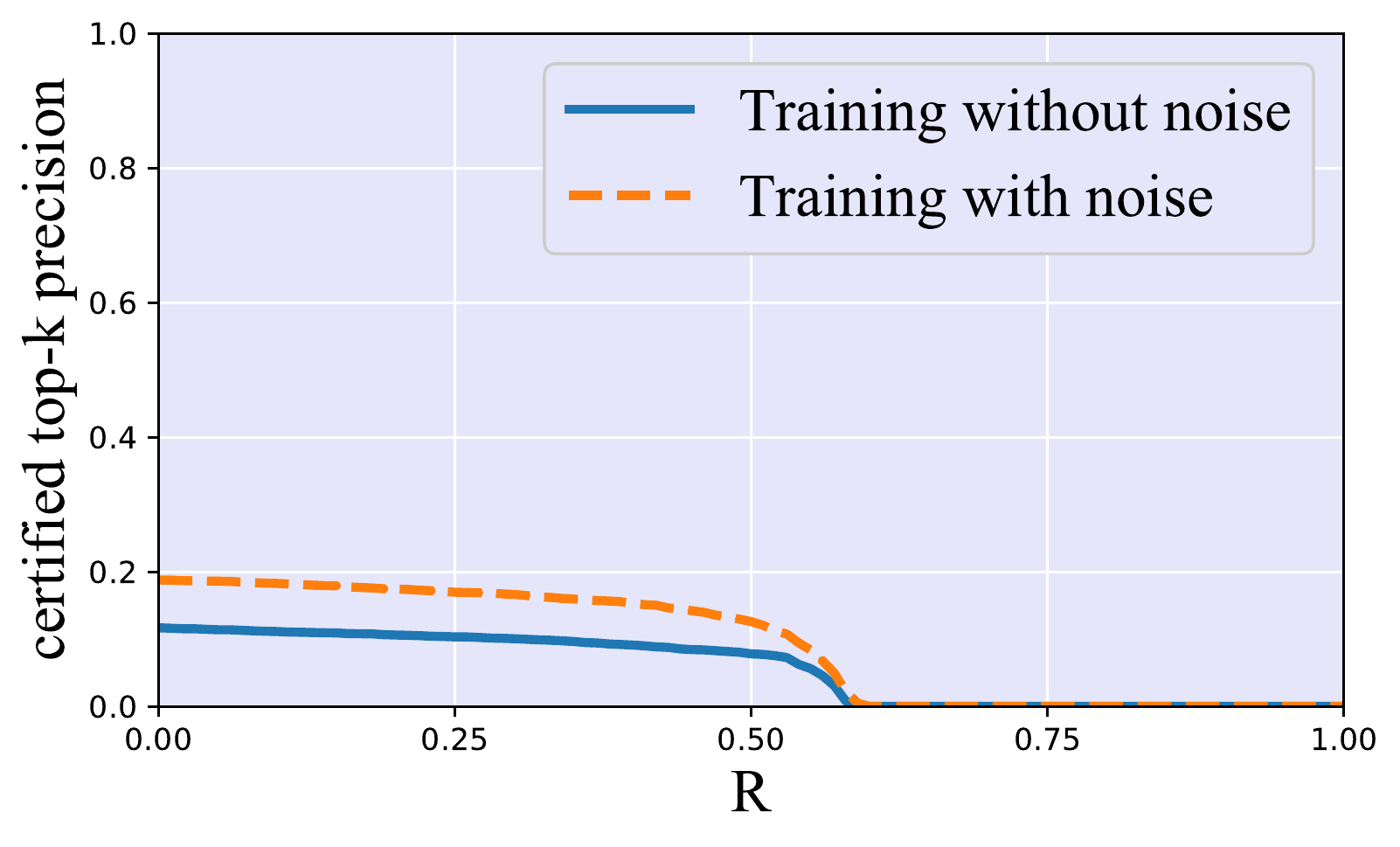}}
\subfloat[Certified top-$k$ recall@$R$]{\includegraphics[width=0.3\textwidth]{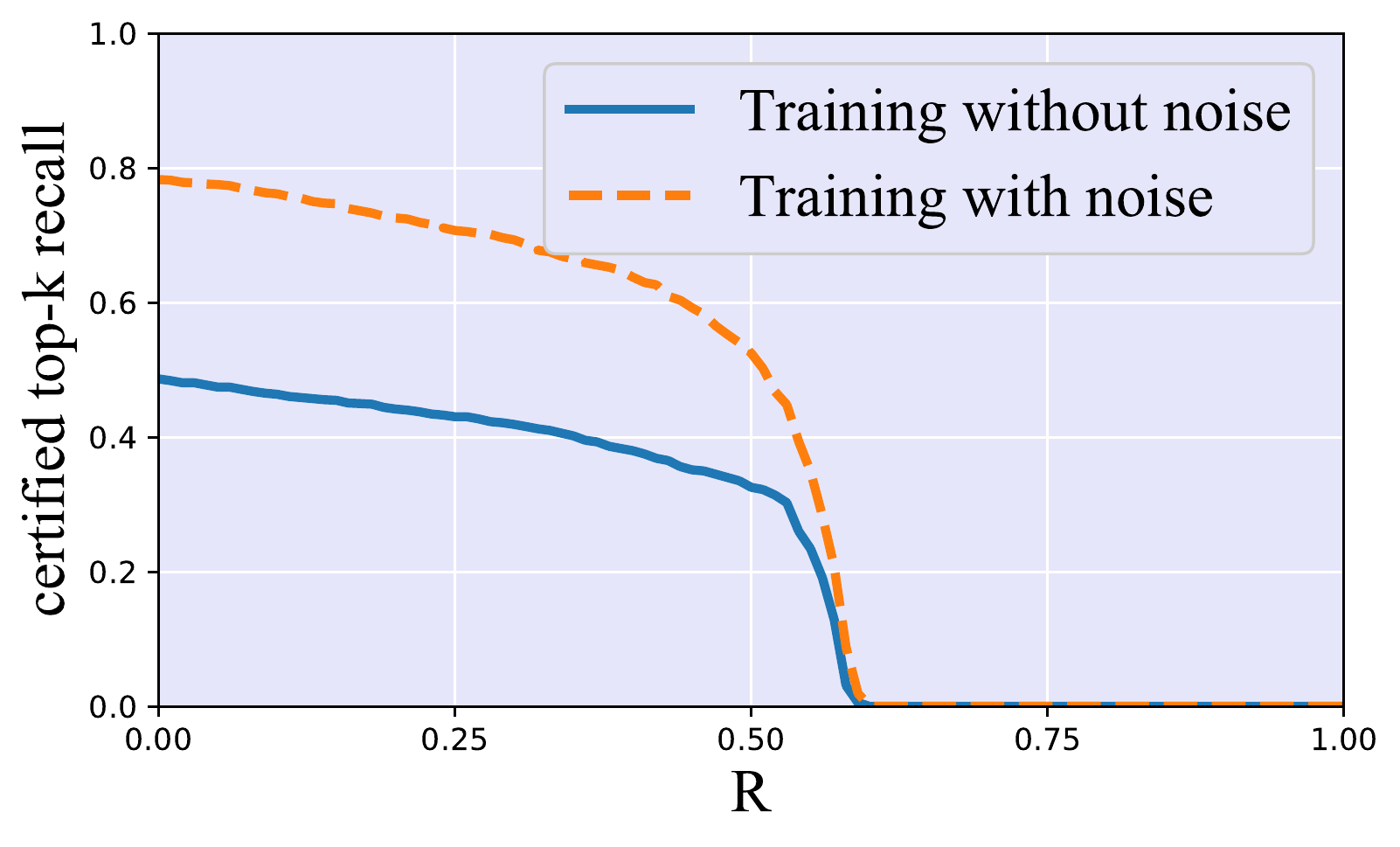}}
\subfloat[Certified top-$k$ f1-score@$R$]{\includegraphics[width=0.3\textwidth]{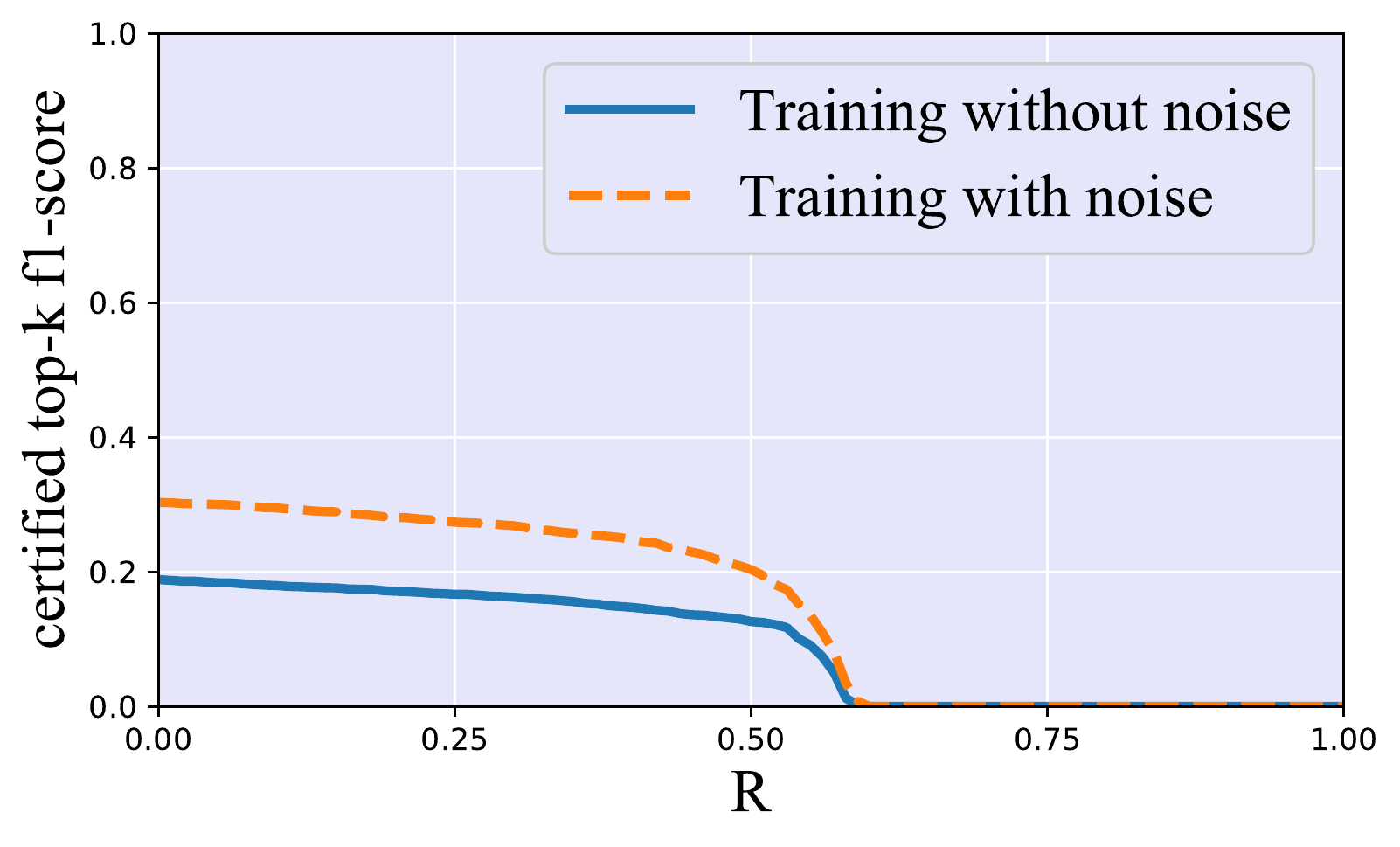}}
\caption{Training the base multi-label classifier with vs. without noise  on Pascal VOC (first row), MS-COCO (second row) and NUS-WIDE (third row) datasets.}
\label{impact_of_train_coco_nuswide}
\end{figure*}

\end{document}